\documentclass[12pt,final]{amsart}

\usepackage[T1]{fontenc}
\usepackage[utf8]{inputenc}
\usepackage[english]{babel}
\usepackage{amsmath}
\usepackage{amssymb}
\usepackage{amsfonts}
\usepackage{amsthm}
\usepackage{graphicx}
\usepackage{enumerate}
\usepackage[usenames]{color}
\usepackage[top=1.in, bottom=1.in, left=1.in, right=1.in]{geometry}
\usepackage{subcaption}
\usepackage{todonotes}
\usepackage{natbib}
\usepackage{showlabels}
\usepackage{hyperref}
\usepackage{setspace}
\usepackage{caption}

\setcounter{MaxMatrixCols}{10}

\sloppy
\linespread{1.5}\selectfont
\definecolor{red}{rgb}{1.0,0.0,0.0}
\def\red#1{{\textcolor{red}{#1}}}
\definecolor{blu}{rgb}{0.0,0.0,1.0}
\def\blu#1{{\textcolor{blu}{#1}}}
\definecolor{gre}{rgb}{0.03,0.50,0.03}


\newtheoremstyle{mytheorem}%
{6pt}{6pt}            
{\itshape}            
{0pt}                 
{}                    
{}                    
{1em}                 
{\underline{\scshape #1~#2}  \ifx\relax#3\relax\else\ (\normalfont #3)\fi}

\newtheoremstyle{myremark}
{6pt}
{10pt}
{\rm}
{-0pt}
{\scshape}
{}
{1em}
{}
\theoremstyle{mytheorem}
\newtheorem{theorem}{Theorem}[section]

\newtheorem{Proposition}[theorem]{Proposition}
\newtheorem{Lemma}[theorem]{Lemma}
\newtheorem{Corollary}[theorem]{Corollary}

\theoremstyle{myremark}
\newtheorem{Remark}[theorem]{Remark}

\newlength{\bibitemsep}\setlength{\bibitemsep}{.2\baselineskip plus .05\baselineskip minus .05\baselineskip}
\newlength{\bibparskip}\setlength{\bibparskip}{-2pt}
\let\oldthebibliography\thebibliography
\renewcommand\thebibliography[1]{
\oldthebibliography{#1}
\setlength{\parskip}{\bibitemsep}
\setlength{\itemsep}{\bibparskip}
}
\parskip=5pt



\renewcommand{\phi}{\varphi}

\newcommand{\RR}{\mathbb{R}}

\newcommand\abs[1]{\left| #1 \right|}
\newcommand\norm[1]{\left|\left| #1 \right|\right|}

\newcommand{\di}{\,\mathrm{d}i}
\newcommand{\dk}{\,\mathrm{d}k}
\newcommand{\gA}{\gamma_A}
\newcommand{\gB}{\gamma_B}

\title[Emergent properties through the gradient flow approach]{The invisible hand as an emergent property: 
\\ a gradient flow approach}


\author[G.Fabbri]{Giorgio Fabbri}
\address{Giorgio Fabbri, Univ. Grenoble Alpes, CNRS, INRA, Grenoble INP, GAEL, Grenoble, France. The work of Giorgio Fabbri is partially supported by the French National Research Agency in the framework of the ``Investissements d'avenir'' program (ANR-15-IDEX-02).}
\email{giorgio.fabbri@univ-grenoble-alpes.fr}

\author[D.Fiaschi]{Davide Fiaschi}
\address{Davide Fiaschi, Universtiy of Pisa, Dipartimento di Economia e Management, Pisa, Italy. The work of Davide Fiaschi is partially supported by the University of Pisa, in the framework of the PRA Project PRA202286 (Mobilità di persone e merci nell’Europa)}
\email{davide.fiaschi@unipi.it}

\author[C.Ricci]{Cristiano Ricci}
\address{Cristiano Ricci, Universtiy of Pisa, Dipartimento di Economia e Management, Pisa, Italy. The work of Cristiano Ricci is partially supported by the University of Pisa, in the framework of the PRIN project 2017FKHBA8001 (The Time-Space Evolution of Economic Activities: Mathematical Models and Empirical Applications) and PRA Project PRA202286 (Mobilità di persone e merci nell’Europa)}
\email{cristiano.ricci@unipi.it}

\thanks{We thank Pierre Dehez, Vincenzo Denicolò, Ted Loch-Temzelides, Antonio Minniti, Pier Mario Pacini, Piero Peretto,  and seminar participants to Grenoble, Pisa, Rimini, Besançon, \'Evry, Vienna.\\
The dataset used in this paper, as well as the associated R and Julia code to
reproduce the analysis, are publicly available in a Zenodo repository \url{https://doi.org/10.5281/zenodo.16908598}. The original ORBIS data have been obtained through a restricted data use agreement and therefore not available for public dissemination.}

\date{\today}

\begin{document}

\begin{abstract}
\singlespacing{
We develop a general equilibrium model in which, at each instant, a short-run competitive equilibrium arises. Heterogeneity in factor allocation generates differential profit rates across sectors, prompting firms to move between them under myopic profit-seeking behaviour, subject to quadratic reallocation costs.\\
The aggregate dynamics of the economy can be formalised as a gradient flow in a Wasserstein space, starting from a partial differential equation that describes the reallocation of firms across sectors.
Two key emergent properties arise: (i) decentralised and uncoordinated decisions can be reinterpreted as the solution to a sequence of global optimisation problems,  involving a function of aggregate consumption, which increases monotonically along the dynamic path; (ii) the long-run competitive equilibrium is efficient, as the distribution of firms maximises aggregate consumption and profit rates are equalised across sectors.\\
We extend the baseline model to incorporate non-symmetric preferences, intrasectoral externalities, a fixed cost of reallocation, and labour immobility. These extensions reveal conditions under which the efficiency and uniqueness of the long-run equilibrium may fail, but also highlight the surprising result that the decentralised equilibrium can remain efficient even in the presence of externalities.\\
Finally, using a large sample of EU firms from the period 2018–2023, we empirically document convergence in sectoral profit rates, but not in labour productivity, pointing to a certain degree of labour immobility. We also find evidence suggesting the absence of significant fixed costs of reallocation at the sectoral level, the presence of positive but limited intrasectoral externalities, and a moderate degree of substitutability among goods.}
\end{abstract}

\maketitle

\bigskip

\noindent \textbf{JEL classification}: 
D50, 
D92, 
C62, 
D24, 
C61. 

\medskip

\noindent \textbf{Keywords}: gradient flow, out-of-equilibrium dynamics, positive general equilibrium theory, disequilibrium model, emergent macro properties, multi-sector economy, myopic firms, sectoral choice, firm heterogeneity, firm dynamics, intra-industry reallocation, Wasserstein space.



\section{Introduction}

The Walrasian notion of general competitive equilibrium, in its original form \citep{walras1900elements}, describes a situation in which prices clear all markets and no forces of disequilibrium remain. The question of how decentralized agents coordinate to reach such an outcome has challenged economists since Walras’s time (see \citealp{edgeworth1881mathematical}; and, more recently, \citealp{hayek1945use, arrow1959stability}) and has inspired a rich literature on the dynamics of economic adjustment (e.g., \citealp{sonnenschein1982price, mas1986notes, artzner1986convergence}).

As emphasized by \citet{stiglitz2018modern}, this strand of literature argues that economic theories should account for a multisectoral economy characterized by heterogeneous profit rates and wages across sectors, along with a continuous reallocation of firms and workers. Similar concerns are central to the theory of \emph{medium run}, which promotes a framework in which prices, wages, and stocks adjust at different speeds \citep{blanchard1997medium, solow2000toward}. In this respect, \emph{common-sense} microfoundations may require a careful reconsideration of how agents form expectations and make decisions in complex environments \citep{de2019behavioural, moll2024trouble}.

A compelling theoretical description of such an adjustment process is provided by \citet{sonnenschein1982price}, who characterizes a capitalist economy as a sequence of short-run competitive equilibria. In the short run, ``firms are `in place', and the returns to factors that are temporarily immobile are not necessarily equalized. The prices that clear markets at each moment reflect only the relative scarcity of factors that are instantaneously variable''. Over time, all factors become free to move and, guided by simple behavioural rules, ``flow toward that branch of production'' where returns are higher. This theoretical narrative has rich empirical support and strong predictive power (see, e.g., \citealp{simon1997empirically, foster2008reallocation}).

To delve deeper into these mechanisms, we develop a dynamic general equilibrium model in which firms can reallocate across sectors in response to differential returns (Section \ref{sec:model}). At each instant, a short-run competitive equilibrium emerges, reflecting sectoral heterogeneity in technology, factor allocation, and consumer demand (Section \ref{sec:shortRunEquilibrium}). Sectoral reallocation is governed by myopic profit-seeking behaviour, generating a continuous-time macroeconomic evolution that can be described by a partial differential equation (PDE) (Section \ref{sub:dynamics}). 
The possibility of rewriting a PDE as a gradient flow in a suitable space of probability measures, specifically, the Wasserstein space $\mathcal{W}_2(S)$ (Theorem \ref{th:riscritturaGradientFlow}), is the key step in transitioning from the dynamics of individual firms to an aggregate representation. This reformulation facilitates the emergence of macroeconomic behaviour from micro-level decision-making processes, revealing an \emph{invisible hand} mechanism at the aggregate level: the uncoordinated pursuit of profit guides the economy along a path which locally max of monotonically increasing production and consumption, ultimately leading to a Pareto-efficient allocation (Theorems \ref{teo:convergence} and \ref{th:mu-e-ottimale}).

We explore several extensions of the model that allow us to examine how the interplay between technological heterogeneity, demand structure, and factor immobility shapes both the long-run competitive equilibrium and the aggregate dynamics of the economy (Section \ref{sec:extensions} and Appendix \ref{app:non-symmetric}). The results we have presented regarding the long-run equilibrium and the efficiency of the steady state are robust to model variations that include non-symmetric preferences, intrasectoral externalities, and immobile labour. These extensions reveal that the conditions under which the efficiency and uniqueness of the long-run competitive equilibrium are quite general, and we also find the surprising result that the decentralized equilibrium can be efficient also in presence of externalities (but also more and more inefficient along the dynamics of the system in the case of strong negative externalities, see Section \ref{sub:spillovers}). A specific remark deserves the case with a fixed cost of reallocation, where the long-run equilibrium may no longer be unique, and sectoral profit rates can remain persistently unequal (Section \ref{sub:fixedcost}).

\medskip

 To motivate our framework empirically, Figure \ref{fig:motivatingEvidenceI} plots the cross-sectional distributions - at the sectoral level - of return on equity (ROE) and labour productivity for a balanced panel of EU firms from 2018 to 2023.\footnote{The sample draws on ORBIS data more than 750,000 firms in 14 EU countries, classified into 680 four-digit NACE sectors.} Our model interprets the convergence in ROE, along with the mild (if any) convergence in labour productivity, as the result of significant firm/capital reallocation across sectors, in contrast to much slower labour reallocation, and it traces the efficiency consequences of this imbalance.
\begin{figure}[!htbp]
\caption{Distribution of (normalized) return on equity (ROE) and labour productivity for 680 four-digit NACE sectors (from 01 to 82) in 2018 and 2023.}
\label{fig:motivatingEvidenceI}
\centering
\begin{subfigure}{0.45\textwidth}
\vspace{-0.5cm}
\includegraphics[width=1\textwidth]{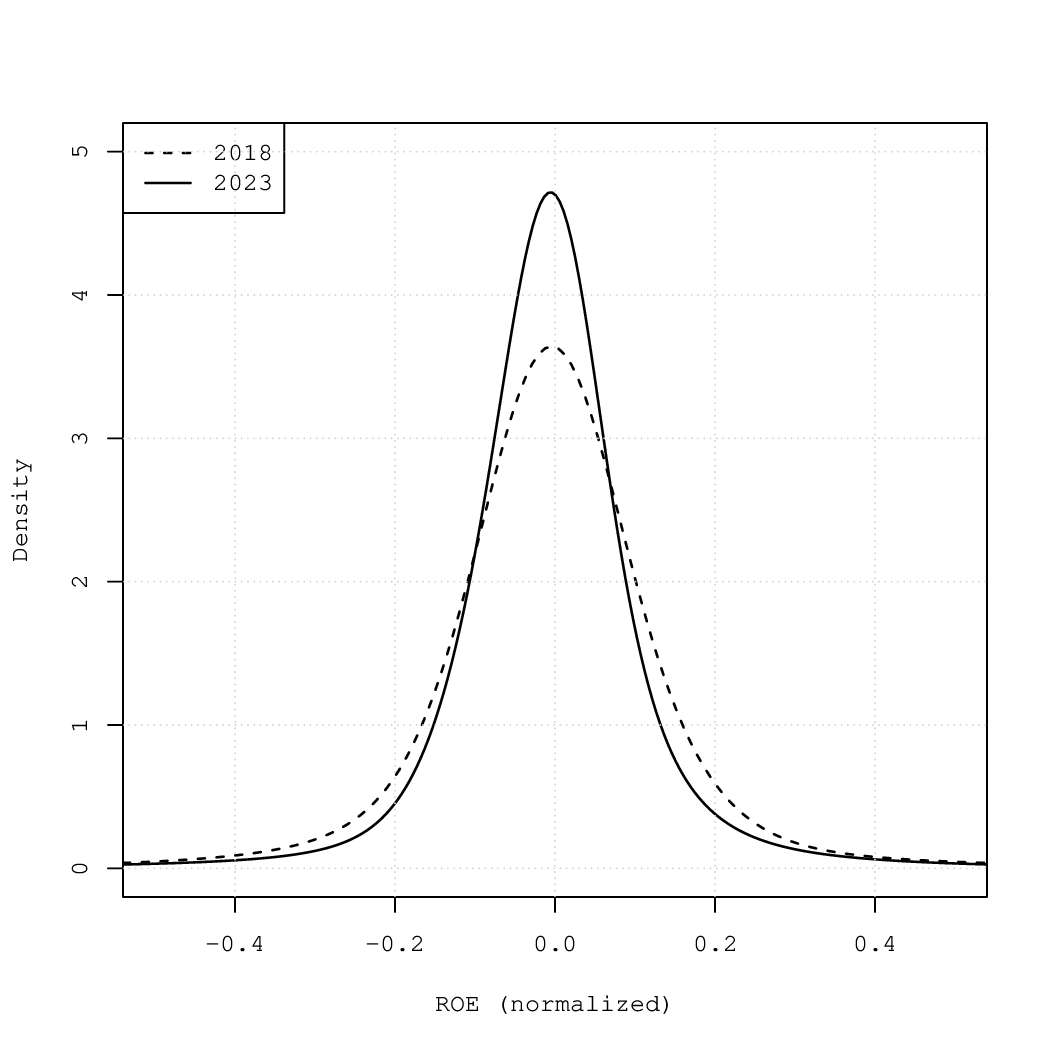}
\caption{The sectoral distribution of ROE (each value divided by the sample mean)}
\label{fig:productionGrowthVsROE}        
\end{subfigure}
\hfill 
\begin{subfigure}{0.45\textwidth}
\vspace{-0.5cm}
\includegraphics[width=1\textwidth]{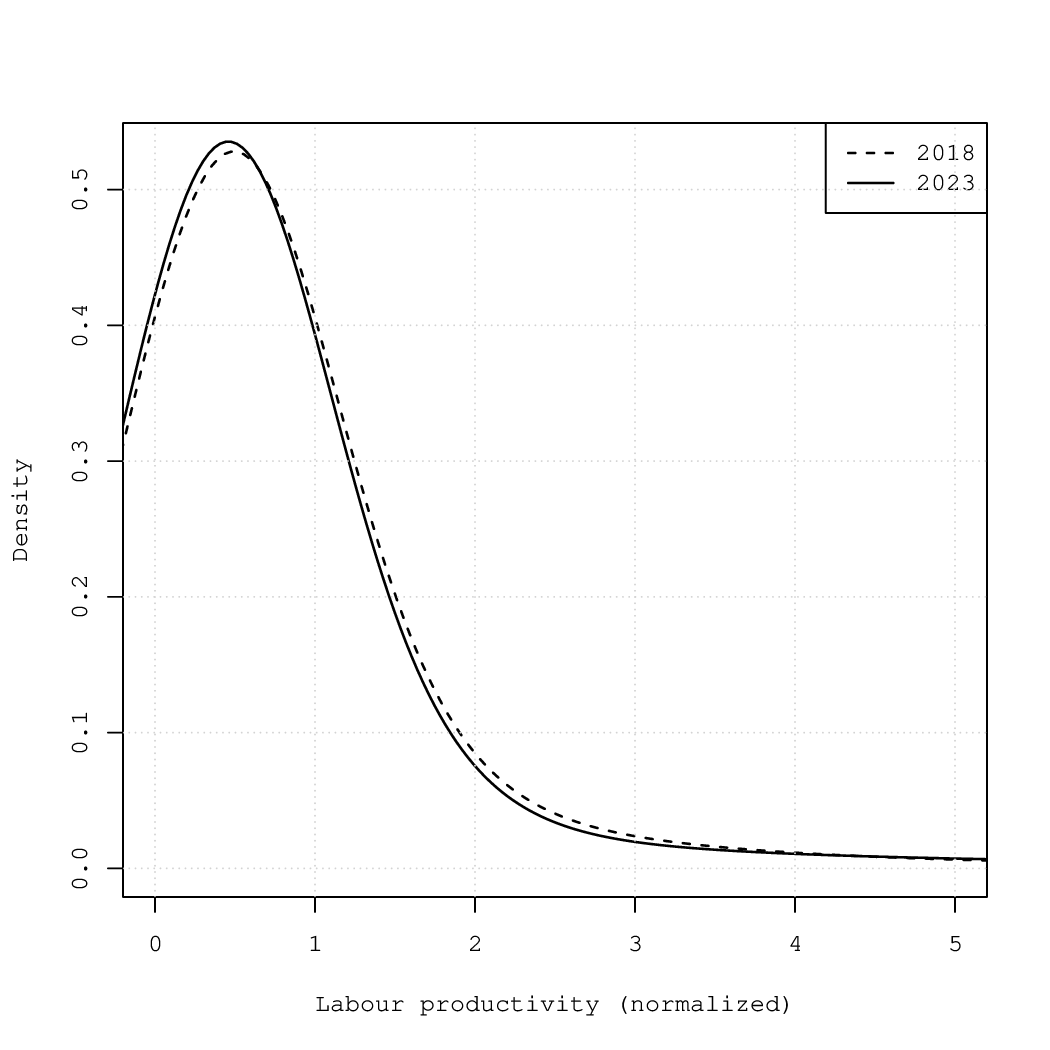}
\caption{The sectoral distribution of labour productivity (each value divided by the sample mean).}   
\label{fig:labourProductivitycrossSectoralDistribution}
\end{subfigure}
\begin{flushleft}
{\small \textit{Source: our elaboration on ORBIS data.}}
\end{flushleft}
\end{figure}

\medskip

Our main technical innovation is to model firm reallocation dynamics using a partial differential equation (PDE) that can be represented as a gradient flow in the Wasserstein space $\mathcal{W}_2(S)$, building on the methodologies of \citet{artzner1986convergence}, \citet{melitz2003impact}, and \citet{horiguchi2023cellular}. In this way we address questions in four distinct streams of economic literature.

\emph{Positive Theory of General Equilibrium.} 
The ``positive'' theory of general equilibrium, as advanced by \citet{sonnenschein1981price}, \citet{sonnenschein1982price}, and \citet{artzner1986convergence}, explores economies where firms operate under perfect competition and adjust across sectors based on myopic profit-maximizing rules. This literature grapples with the stability of equilibria, particularly in light of the ``impossibility'' results of \citet{Sonnenschein1972}. Our gradient flow representation provides a formal invisible hand result \emph{à la} Smith, where beneficial social and economic outcomes emerge from the self-interested actions of individuals: we demonstrate that the sequence of short-run competitive equilibria corresponds to an instantaneous optimization of a social welfare function. This allows to establish, under general conditions for preferences and technology, the existence, global stability, and Pareto efficiency of the long-run competitive equilibrium even in the absence of coordinated actions \citep{artzner1986convergence, howitt2006coordination} and sophisticated formation of expectations \citep{simon1997empirically}.\footnote{More precisely, following \citet{horiguchi2023cellular}, we define a measure of social utility that is non-decreasing over time and attains its maximum in the long-run competitive equilibrium, where sectoral profit rates are equalized. This shows that local profit information suffices to guide the economy toward efficiency.}

\noindent Notably, the existence and global stability of the long-run competitive equilibrium are also proved in presence of intersectoral externalities, labour immobility and a fixed cost of reallocation, theoretical aspects which are overlooked in prior studies mainly focused on equilibrium distributions (e.g., \citealp{Romer1986, Romer1990a, grossman_helpman_1991}).
In this respect, we also contribute to the literature on knowledge spillovers \citep{Romer1986}, and spatial externalities \citep{fujita1996economics, allen2014trade} in a general equilibrium setting.

\emph{Behavioural Macroeconomics and Agent-Based Decision Rules.} 
The literature advocating simple, empirically grounded decision rules, as discussed by \citet{nelson1985evolutionary}, \citet{simon1997empirically}, \citet{kirman2016ants}, \citet{stiglitz2018modern}, views the economy as a complex evolving system \citep{EconomyAsAComplexEvolvingSystemII1997} where a multiplicity of agents interact on the basis of quite simple, empirically based, rules. The same spirit, driven by the need to describe a positive theory based on ``plausible'' behavioural rules for economic agents, underlies the behavioural approach in macroeconomic theory advocated in \citet{de2019behavioural} and \citet{moll2024trouble}. In these models, agents interact locally and ``collective outcomes are `emergent properties' which cannot be imputed to the intention of any single agent'' \citep{Dosi_Roventini_2025}. Our framework provides this explicit link: the aggregation of myopic, profit-seeking behaviour is shown to correspond precisely to the optimization of an aggregate welfare function. This demonstrates how decentralized, ``plausibly'' behavioural rules can lead to a Pareto-efficient allocation, providing a formal bridge between behavioural micro-foundations and aggregate efficiency. In doing this, we also find that the efficiency of a decentralized equilibrium can be extended also to the presence of externalities and labour immobility.

\emph{Firm Reallocation and Selection in General Equilibrium.} 
The literature on firm reallocation and selection, including works by \cite{MaksimovicVojislav2001}, \citet{melitz2003impact}, \citet{foster2008reallocation}, \citet{bertoletti2017monopolistic}, \citet{osotimehin2019aggregate}, \citet{parenti2017toward}, and \citet{dixit1977monopolistic}, examines how firms adjust across sectors in response to productivity and demand shocks, often under monopolistic competition. We provide insights into how sectoral heterogeneity in technology, demand, and factor immobility shapes both short- and long-run competitive equilibria. We also show that a fixed cost of reallocation introduces multiple equilibria, with long-run outcomes strongly dependent on initial conditions.

\emph{Medium-Run Literature.} 
We finally introduce a new dimension to the medium-run literature, which has traditionally focused on the rigidity of prices and wages, by highlighting the importance of considering the heterogeneity in the time scale associated with the adjustment of different productive factors \citep{blanchard1997medium, solow2000toward}. 

The paper is organized as follows. Section \ref{sec:model} introduces the model under the assumption of perfect labour mobility and quadratic reallocation costs. Section \ref{sec:emergentMacroProperties} discusses the gradient flow representation of our economy and its emergent macro properties; Section \ref{sec:LongRUnEquilibrium} examines the existence, stability, and properties of the long-run equilibrium and the convergence path via the gradient flow representation. Section \ref{sec:extensions} extends the analysis to include heterogeneous preferences, a fixed cost of reallocation, and intra-sectoral externalities. Section \ref{sec:immobilework} further explores the implications of labour immobility. Section \ref{sec:backData} provides an empirical evaluation of the model, testing its main predictions and calibrating its key parameters using firm-level data. Section \ref{sec:numericalExperiments} presents some numerical illustrations and, finally, Section \ref{sec:conclusions} contains conclusions and directions for future research. Appendix gathers all proofs as well as supplementary comments and descriptive material.

\section{The dynamics of firms distribution\label{sec:model}}

This section lays out the core mechanics of our model. We first establish the \emph{short-run competitive equilibrium} of the economy by analysing consumer and firm decisions (Section \ref{sec:shortRunEquilibrium}), where labour is perfectly mobile across the different sectors and prices and wages adjust instantaneously. We then introduce the firms' dynamics, i.e. the firms' reallocation across sectors in response to profit differentials (Section \ref{sub:dynamics}). This dynamic process leads to a PDE that describes the evolution of the entire economy, setting the stage for the analysis in Section \ref{sec:emergentMacroProperties}.

Inspired by \citet{dixit1977monopolistic}, we consider a dynamic economy populated by a continuum of identical consumers, each with an income $y$ derived from both wages and profits. We assume that the total mass of consumers/workers is $L = 1$, so the aggregate income of the economy is $Y = y$. Each consumer is endowed with one unit of labour, which is supplied inelastically, and shares homogeneous preferences over a range of goods of measure $n$. 

\subsection{The short-run competitive equilibrium} \label{sec:shortRunEquilibrium}

Since the relations derived in this section hold at each point in time, we omit time indices from the notation for simplicity.

\subsubsection{The demand side}
\label{sub:demand}

At each time $t$ the (instantaneous) utility is an increasing function $u$ of the consumption of a composite good $X$:\footnote{Leisure is not present in the utility function, but it is not difficult to insert it as in \cite{artzner1986convergence}.}
\begin{equation}
U = u(X), \qquad \text{with} \quad  X:= \left (\dfrac{1}{n} \int_0^n x(i)^{(\sigma-1)/\sigma} \, di \right )^{\sigma/(\sigma-1)},
\label{eq:preferences}
\end{equation}
where $\sigma > 0$ and different from $1$, measures the elasticity of substitution among different goods and $x(i)$ is the quantity consumed of good $i$.

Maximization of the utility of each agent (taking into account the budget constraint $\int_{0}^{n} p(i) x(i) di \leq y$) leads to the following (per capita and aggregate) demand for good $i$:\footnote{We assume $y=Y$ for each household but, since demand is linear in income, aggregate demand is unaffected by possible heterogeneity in labour or wealth endowments.}

\begin{equation}
x(i) = \frac{y}{P} \left ( \frac{p(i)}{P} \right )^{-\sigma} = \frac{y}{p(i)^{\sigma}}\frac{1}{P^{1-\sigma}} = \frac{Y}{p(i)^{\sigma}}\frac{1}{P^{1-\sigma}}.
\label{eq:demandGoodi}
\end{equation}
where $p(i)$ is the price of good $i$ and
\begin{equation}
P = \left ( \int_0^n p(i)^{1-\sigma} di \right )^{1/(1-\sigma)}
\label{eq:priceIndex}
\end{equation}
is the \textit{price index}.

\subsubsection{The supply side}
\label{sub:supply}

We assume that there is a continuum of firms differentiated by the good they produce. Let the goods be indexed by \(i \in [0,n]\) and let \(\mu\) be the distribution of firms over the range \([0,n]\), such that \(\mu(i)\) denotes the mass of firms producing good \(i\). We normalize the number of firms, meaning that we assume \(\int_{0}^{n} \mu(i) \, di = 1\).

The firms producing good $i$ use the following technology:
\begin{equation}
q(i) = A(i) l(i)^\beta,
\label{eq:productionFunctionGoodi}
\end{equation}
where $q(i)$ is the quantity of good $i$ produced employing $l(i)$ units of labour, $A(i)$ is a good-specific technological parameter, and $\beta \in \left(0,1\right)$ is the output elasticity of labor. 
The profits for a (representative) firm producing good $i$ are given by:
\begin{equation}
\pi (i) =  p(i)q(i) - w(i) l(i),
\label{eq:profitFunction}
\end{equation}
where $w(i)$ is the wage paid by the firm. The presence of several firms producing goods $i$ is the main departure from the model of monopolistic competition, where just one firm produces each type of good.  
Since Equation \eqref{eq:productionFunctionGoodi} can be seen as the production function of a firm using one unit of capital, $\pi (i)$ is also the \textit{profit rate} of firms producing good $i$.

Maximization of profits leads to the labour demand of each firm in sector $i$:
\begin{equation}
l(i) = \left[\dfrac{\beta p(i) A(i)}{w(i)}\right]^{1/\left(1-\beta\right)}.
\label{eq:firmLabourDemand}
\end{equation}

\subsubsection{The equilibrium in labour and good markets}
\label{sub:spotequilibrium}

We assume that workers can move across the production of different goods without cost and instantaneously; therefore, the same wage is paid by all firms:
\begin{equation}
w(i) = w \;\;\; \forall i \in [0,n].
\label{eq:wagesEqualAcrossEconomyEquilibrium}
\end{equation} 
Moreover, prices are perfectly flexible in order to guarantee that demand is equal to supply in each sector, i.e.:
\begin{equation}
    q(i) = x(i) \;\;\; \forall i \in [0,n].
    \label{eq:marketClearingEachSector}
\end{equation}
The following proposition characterizes the short-run competitive equilibrium of our economy.
\begin{Proposition}[The short-run competitive equilibrium with perfectly mobile labour]
\label{pr:shor-run-equilibrium}
Set as the \emph{numeraire} the aggregate value of production $Y$, i.e. $Y=1$; then, in the short-run competitive equilibrium:
\begin{equation}
\label{eq:value-of-pi}
p(i)	= \frac{A(i)^{-1} \mu(i)^{-\frac{1}{\sigma(1-\beta) + \beta}}}{\displaystyle \left( \int_0^n \left[ A(i) \mu(i)^{1-\beta} \right]^{\frac{\sigma -1}{\sigma(1-\beta) + \beta}} \, di \right )^{1-\beta}};
\end{equation}
\begin{equation}
\label{eq:profitti_dopo}
\pi (i) =  \left(1-\beta \right) \left\{ \frac{ A(i)^{\frac{\sigma -1}{\sigma(1-\beta) + \beta}}  \mu(i)^{-\frac{1}{\sigma(1-\beta) + \beta}}}{\displaystyle \int_0^n \left[ A(i) \mu(i)^{1-\beta} \right]^{\frac{\sigma -1}{\sigma(1-\beta) + \beta}} \, di} \right\};
\end{equation}
\begin{equation}
\label{eq:value-of-P}
P = \left( \int_0^n \left[ A(i) \mu(i)^{1-\beta} \right]^{\frac{\sigma -1}{\sigma(1-\beta) + \beta}} \, di \right)^{\frac{\sigma(1-\beta) + \beta}{1-\sigma}};
\end{equation}
\begin{equation}
\label{eq:w=beta}
w=\beta;
\end{equation}
and 
\begin{equation}
\label{eq:valu-of-PI}
\Pi = \left(1-\beta \right),
\end{equation}
where $\Pi = \int_{0}^{n}\pi(i)\mu(i) \,di$ is the aggregate (and average) profit (rate).
Finally, the total production in sector $i$ $Y(i)$ and the total employment $L(i)$ are equal and given by:
\[
Y(i) := p(i)q(i)\mu(i) = \frac{\left[A(i)\mu(i)^{1-\beta}\right]^{\frac{\sigma-1}{\sigma(1-\beta) + \beta}}}{ \displaystyle \int_0^n \left[ A(i) \mu(i)^{1-\beta} \right]^{\frac{\sigma -1}{\sigma(1-\beta) + \beta}} \, di } = l(i)\mu(i) =: L(i).
\]
\end{Proposition}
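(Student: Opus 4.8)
The plan is to solve the equilibrium system \eqref{eq:preferences}--\eqref{eq:marketClearingEachSector} explicitly; the organising observation is that goods-market clearing pins down \emph{relative} prices on its own, after which every aggregate follows from an accounting identity or from a single scalar fixed point for the price index. I would begin with a firm-side identity that uses neither market clearing nor the numeraire: substituting the labour demand \eqref{eq:firmLabourDemand} into the technology \eqref{eq:productionFunctionGoodi}, the first-order condition rearranges to $w\,l(i)=\beta\,p(i)q(i)$, so the wage bill of a firm is a share $\beta$ of its revenue and, by \eqref{eq:profitFunction}, $\pi(i)=(1-\beta)\,p(i)q(i)$. This already fixes the shape of \eqref{eq:profitti_dopo} and, after aggregation, will deliver \eqref{eq:w=beta} and \eqref{eq:valu-of-PI}.

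Next I would determine relative prices. Inserting \eqref{eq:firmLabourDemand}--\eqref{eq:productionFunctionGoodi} into the market-clearing condition \eqref{eq:marketClearingEachSector} (where the total quantity of good $i$ supplied is $\mu(i)q(i)$) and using the demand \eqref{eq:demandGoodi}, one checks --- writing $\gamma:=\sigma(1-\beta)+\beta$ for brevity --- that the quantity $\mu(i)\,A(i)^{1/(1-\beta)}\,(\beta/w)^{\beta/(1-\beta)}\,p(i)^{\gamma/(1-\beta)}$ does not depend on $i$ and equals $Y/P^{1-\sigma}$. Hence $p(i)$ is proportional to $[A(i)\mu(i)^{1-\beta}]^{-1/\gamma}$, with a proportionality constant that involves only $Y$, $w$ and $P$.

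To close the model I would proceed in two steps. First, aggregation: integrating the firm-side identities against $\mu$ over $[0,n]$ and using market clearing together with the binding budget constraint $\int_0^n p(i)x(i)\,di=y=Y$ gives $\Pi=(1-\beta)Y$; combining this with the income identity $Y=wL+\Pi$, with $L=1$, and with the numeraire $Y=1$ yields \eqref{eq:w=beta} and \eqref{eq:valu-of-PI}, after which the labour-market balance $\int_0^n l(i)\mu(i)\,di=1$ is automatically consistent. Second, the price level: inserting $p(i)=c\,[A(i)\mu(i)^{1-\beta}]^{-1/\gamma}$ (now $w=\beta$, $Y=1$, so $c$ depends only on $P$) into the definition \eqref{eq:priceIndex} of the price index produces one scalar equation for $P$; the exponent identity $\gamma+(1-\beta)(1-\sigma)=1$ collapses it to $(P^{1-\sigma})^{1/\gamma}=\int_0^n[A(i)\mu(i)^{1-\beta}]^{(\sigma-1)/\gamma}\,di$, which is \eqref{eq:value-of-P}, and back-substitution gives \eqref{eq:value-of-pi}.

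With $p(i)$, $P$ and $w$ in hand, the remaining quantities follow by substitution: $Y(i)=p(i)q(i)\mu(i)=p(i)x(i)=Y\,p(i)^{1-\sigma}/P^{1-\sigma}$, which simplifies --- again by $\gamma+(1-\beta)(1-\sigma)=1$ --- to the stated ratio; $\pi(i)=(1-\beta)p(i)q(i)=(1-\beta)Y(i)/\mu(i)$, which is \eqref{eq:profitti_dopo}; and $L(i)=l(i)\mu(i)=\beta\,p(i)q(i)\mu(i)/w=Y(i)$ since $w=\beta$. The computations are elementary; the only genuine obstacle is structural rather than technical --- the equilibrium is simultaneous because every sectoral price enters the common index $P$ and hence every sector's demand --- and the device that tames it is exactly that market clearing fixes relative prices independently of $P$, leaving a one-dimensional fixed point whose resolution rests entirely on the arithmetic identity $\gamma+(1-\beta)(1-\sigma)=1$ (equivalently $(1-\beta)(\sigma-1)=\gamma-1$). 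Carrying the exponents correctly through these substitutions is the one place where slips are easy.
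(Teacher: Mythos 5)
Your proposal is correct and takes essentially the same route as the paper's proof: the Cobb--Douglas share identity $\pi(i)=(1-\beta)p(i)q(i)$, goods-market clearing to express $p(i)$ as a power of $A(i)\mu(i)^{1-\beta}$ times a $P$-dependent constant, integration against the price-index definition to obtain the scalar equation for $P$ (collapsed by the identity $\sigma(1-\beta)+\beta+(1-\beta)(1-\sigma)=1$), and back-substitution for $\pi(i)$, $Y(i)$, $L(i)$. The only cosmetic difference is that you derive $w=\beta$ from the income identity $Y=wL+\Pi$ rather than, as the paper does, directly from the constancy of the aggregate wage share --- the same fact stated two ways.
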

\begin{proof}
See Appendix \ref{app:proofs}.
\end{proof}

The choice of the \emph{numeraire} $Y=1$ is driven by the aim to simplify calculations and follows \citet{grossman_helpman_1991} (see their Equation (3.8) at page 48). In this way the real wage is $w/P=\beta/P$ and real profits are $\Pi/P=(1-\beta)/P$. 
The aggregate real income, which corresponds to the aggregate consumption of the composite good $X$, is given by:
\begin{equation}
\label{eq:X-P}
X(t) = \frac{1}{P(t)}.
\end{equation}
Therefore, the utility of the representative consumer, or its welfare, can be expressed as a decreasing function of the price index $P(t)$:
\begin{equation}
\label{eq:U-P}
U(t) = u(X(t)) = u\left( \frac{1}{P(t)} \right).
\end{equation}
Observe that, Proposition \ref{pr:shor-run-equilibrium} ensures that labour productivity 
$Y(i)/L(i)$ equals one in each sector as a result of perfect labour mobility. Consequently, in the short-run competitive equilibrium, the allocation of labour across sectors is efficient. Section \ref{sec:immobilework} examines the properties of the equilibrium in the opposite case, where labour is immobile.

\subsection{The dynamics of firms across sectors}
\label{sub:dynamics}

The firms’ search for higher profit rates, i.e. the reallocation of capital across sectors, is the key source of the dynamics.
To avoid complications arising from boundary effects in the firm dynamics, assume that the set of goods is distributed along a circle of length $n$, i.e.:
\begin{equation}
\label{eq:defS}
S := \left \{ x \in \mathbb{R}^2 \, : \, \norm{x} = \frac{n}{2\pi} \right \}.
\end{equation}
The set $S$ corresponds to the interval $[0,n]$ with identified endpoints. For every $i,j\in[0,n]$ their distance, denoted by $\abs{i-j}$ is the arc-length distance on $S$. Functions defined on $S$ can be seen as one real variable functions which are periodic of period $n$.

Following \cite{artzner1986convergence}, firms move across sectors in response to differences in current profit rates. In particular, a firm producing a good $i$ moves where profit rates are higher, with velocity $v$, but this reallocation is costly. The nature of these adjustment costs can vary, and their specific form can have important implications for general equilibrium.\footnote{\citet[p. 385]{eisfeldt2006capital} refer to ``capital illiquidity'' as the main source of reallocation cost and they assume that such cost decreases the stock of capital.} In our framework, these costs are implicit, not reflecting a specific waste or loss of capital, such as psychological costs borne by the entrepreneur due to changing the nature of her business or informational and contractual
specificities of the sector where capital is allocated. However, we retain the quadratic shape of capital reallocation cost used in the literature \citep{eisfeldt2006capital}, i.e. the cost function is assumed to be $c(v)=\frac{1}{2}v^2$. The firm $i$ therefore solves the following maximization problem:
\begin{equation}
\max_{v} \left [ \left( \frac{\partial \pi(t,i)}{\partial i} \right) v - \frac{1}{2}v^2 \right ].
\label{eq:ruleGoverningRiallocationFirms}  
\end{equation}
Problem \eqref{eq:ruleGoverningRiallocationFirms} is  the continuous-time (or “infinitesimal”) analogue of the solution of a problem in discrete time in which a firm in sector $i$ at time $t$ chooses a new sector $j$ to maximize:
\begin{equation} \label{eq:slide_discrete_max}
\max_{j \in S} \left [  \pi(t, j) - \frac{1}{2} |i-j|^2 \right ],
\end{equation}
where $\pi(t,j)$ denotes the current profits in the sector $j$, and $\frac{1}{2}|i-j|^2$ is the quadratic cost of switching production from $i$ to $j$.
Problem \eqref{eq:slide_discrete_max} highlights the two key assumptions underlying Problem \eqref{eq:ruleGoverningRiallocationFirms}: 1) current profits are used as a proxy for future ones, that is, firms have \textit{static expectations}; and 2) only next-period profits are considered, this fact can be interpreted as a very high firms’ intertemporal discount rate  relative to their rate of reallocation\footnote{\citet{krugman1993number} makes a similar assumption in a spatial context. 
As early as \citet{arrow1986rationality}, it was noted that requiring agents to solve full intertemporal problems with rational expectations places unrealistic informational and computational demands on them. More broadly, several authors argue that excessive sophistication in expectation formation is unrealistic in complex environments \citep{de2019behavioural, moll2024trouble, blanchard2025convergence}. 
In heterogeneous-agent macro, for instance, \citet[p.19]{moll2024trouble} stresses that rational expectations “unnecessarily complicate computations” and should be replaced, suggesting criteria as computational tractability, empirical consistency, and endogenous beliefs that are met by Equation \eqref{eq:ruleGoverningRiallocationFirms}.}.

The solution to Problem (\ref{eq:ruleGoverningRiallocationFirms}) is
\[
v = \frac{\partial \pi(t,i)}{\partial i},
\] 
which implies that the optimal velocity for a firm in sector $i$ equals the marginal gain in profits from shifting slightly to the right along the range of goods.

To describe the aggregate dynamics of the economy, we need to translate this micro-level behaviour into an evolution equation for the density of firms, $\mu(t,i)$.
To do that we use a general result connecting the reallocation of individual entities (e.g. particles) to the evolution of their collective density, which is formalized in the so-called \emph{continuity equation}.\footnote{Originally developed in the context of fluid dynamics to describe the conservation of mass \citep{pavliotis2014stochastic}.} The latter states that for any population moving with a velocity field $v(t,i)$, the local rate of change in their density, $\partial_t \mu$, equals the negative of the spatial divergence of their flux, i.e.\footnote{See, e.g., \citet{pavliotis2014stochastic} for a formal mathematical explanation.}
\begin{equation}
\label{eq:optimal-v}
\partial_t \mu(t,i) = -\partial_i \left( \mu(t,i) v(t,i) \right).
\end{equation}
Hence, using (\ref{eq:optimal-v}), the equation that governs the evolution of the density of the firms $\mu(t,i)$ is:\footnote{Taking a more general shape for the cost function $c(v) = (c_M/2)v^2$, where we can interpreted $c_M$ as a measure of the frictions in the firm reallocation, the continuity equation would become $\partial_t \mu(t,i) = -(1/c_M) \partial_i \left( \mu(t,i) v(t,i) \right)$.}
\begin{equation}
\label{eq:continuity}
\partial_t \mu(t,i) = -\partial_i \left( \mu(t,i) v(t,i) \right) = -\partial_i \left( \mu(t,i) \partial_i \pi(t,i) \right).
\end{equation}
Taking the expression for profits from Equation \eqref{eq:profitti_dopo}, given the initial distribution of firms $\mu(\cdot, 0)$, and considering the boundary condition $\mu(t,0) = \mu(t,n)$ for every $t \geq 0$, yields to
\begin{equation}
\label{eq:stato-di-nuovo}
\left \{
\begin{array}{l}
\displaystyle \partial_t \mu(t,i) = -(1-\beta) \left[ \frac{\partial_i \left( \mu(t,i) \partial_i \left( A(i)^{\frac{\sigma -1}{\sigma(1-\beta) + \beta}} \mu(t,i)^{-\frac{1}{\sigma(1-\beta) + \beta}} \right) \right)}{\displaystyle  \int_0^n \left[ A(i) \mu(i)^{1-\beta} \right]^{\frac{\sigma -1}{\sigma(1-\beta) + \beta}} \, di } \right],\,\,\, (t,i) \in \RR^+ \times S\\[7pt]
\displaystyle \mu(t,0) = \mu(t,n)\\[7pt]
\displaystyle \mu(0,i) = \mu_0(i).
\end{array}
\right .
\end{equation}

The following result shows that the dynamics of $\mu$ described by Equation \eqref{eq:stato-di-nuovo} is well posed.
\begin{theorem}[Existence, uniqueness, and boundedness of firms distribution]
\label{th:esistenzaeunicita}
Let $\mu_0, A\colon S \to \mathbb{R}$ be strictly positive and in $C^2(S)$\footnote{$S$ is defined in (\ref{eq:defS}), $C^2(S)$ are real continuous functions $v$ defined on $S$ (with $v(0)=v(n)$, with continuous first and second derivatives).}. Then Equation \eqref{eq:stato-di-nuovo} has a unique solution $\mu(t, \cdot)\in C([0, +\infty), C^2(S))$ and, there exists $M>0$ such that, for all $t\geq 0$
\[
\frac{1}{M} \leq \mu(t,i) \leq M, \qquad \text{for all $i\in S$.}
\]
\end{theorem}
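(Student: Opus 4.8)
The plan is to view \eqref{eq:stato-di-nuovo} as a quasilinear parabolic equation on the boundaryless compact manifold $S$ and, rather than estimating $\mu$ directly, to derive the two–sided bound from a maximum principle for a suitably weighted power of $\mu$. Write $\theta:=\sigma(1-\beta)+\beta>0$, $\alpha:=1/\theta$, $\gamma:=(\sigma-1)/\theta$, and note the identity $1-\alpha=(1-\beta)\gamma$, so that the denominator in \eqref{eq:stato-di-nuovo} is $Z(t):=\int_S A^{\gamma}\mu(t,\cdot)^{1-\alpha}\,di$ and, by \eqref{eq:profitti_dopo}, $\pi(t,i)=\kappa(t)A(i)^{\gamma}\mu(t,i)^{-\alpha}$ with $\kappa(t):=(1-\beta)/Z(t)$; the equation is then simply the continuity equation $\partial_t\mu=-\partial_i(\mu\,\partial_i\pi)$. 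First I would establish \emph{local well–posedness}: expanding the divergence shows \eqref{eq:stato-di-nuovo} is second–order quasilinear with principal coefficient proportional to $A^{\gamma}\mu^{-\alpha}>0$ (hence uniformly parabolic while $\mu$ ranges in a compact subset of $(0,\infty)$), the nonlocal factor $Z[\mu(t,\cdot)]$ being a smooth functional of $\mu$ that stays strictly positive while $\mu>0$ — and if desired it can be removed altogether by the admissible time change $t\mapsto s$ with $\mathrm{d}s=\kappa(t)\,\mathrm{d}t$. Standard quasilinear parabolic theory on a compact manifold (Ladyzhenskaya--Solonnikov--Ural'tseva; or the analytic–semigroup approach) yields, for strictly positive $\mu_0\in C^2(S)$, a unique maximal classical solution $\mu\in C([0,T^*),C^2(S))\cap C^\infty((0,T^*)\times S)$ with $\mu>0$, together with the continuation criterion that $T^*<\infty$ forces $\inf_S\mu(t,\cdot)\to 0$ or $\|\mu(t,\cdot)\|_{C^2}\to\infty$ as $t\uparrow T^*$. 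Integrating \eqref{eq:stato-di-nuovo} over $S$ (divergence form plus periodicity) gives conservation of mass, $\int_S\mu(t,\cdot)\,di=\int_S\mu_0\,di=1$ for all $t$.

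The crux is the following substitution. Set $\phi(t,i):=A(i)^{\gamma}\mu(t,i)^{-\alpha}$ — the profit rate up to the spatially constant factor $\kappa$. Differentiating in $t$, inserting $\partial_t\mu=-\kappa\,\partial_i(\mu\,\partial_i\phi)$, and re-expressing $\partial_i\log\mu=\alpha^{-1}(\partial_i\log A^{\gamma}-\partial_i\log\phi)$ turns \eqref{eq:stato-di-nuovo} into
\[
\partial_t\phi=\alpha\,\kappa(t)\,\phi\,\partial_{ii}\phi+\kappa(t)\,(\partial_i\log A^{\gamma})\,\phi\,\partial_i\phi-\kappa(t)\,(\partial_i\phi)^2,\qquad (t,i)\in(0,T^*)\times S,
\]
a quasilinear parabolic equation solved by all spatially constant functions. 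Since $\kappa>0$ and $\phi>0$ on $[0,T^*)$, the maximum principle applies: at a spatial maximum of $\phi(t,\cdot)$ — interior, as $S$ has no boundary — one has $\partial_i\phi=0$ and $\partial_{ii}\phi\le 0$, so $\partial_t\phi=\alpha\kappa\phi\,\partial_{ii}\phi\le 0$; hence $t\mapsto\max_S\phi(t,\cdot)$ is non-increasing and, symmetrically, $t\mapsto\min_S\phi(t,\cdot)$ is non-decreasing (rigorously via the envelope theorem applied to these Lipschitz-in-$t$ quantities, or by comparison with the constant sub/supersolutions $\min_S\phi(0,\cdot)$ and $\max_S\phi(0,\cdot)$). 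Therefore $\min_S\phi(0,\cdot)\le\phi(t,i)\le\max_S\phi(0,\cdot)$ on $[0,T^*)\times S$, with both bounds finite and strictly positive because $A,\mu_0\in C^2(S)$ are strictly positive; and since $\mu(t,i)=A(i)^{\sigma-1}\phi(t,i)^{-1/\alpha}$ with $A$ bounded above and below by positive constants on $S$, there is $M>0$, depending only on $A,\mu_0,\sigma,\beta$, with $1/M\le\mu(t,i)\le M$ on $[0,T^*)\times S$.

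Finally, this uniform two-sided bound makes \eqref{eq:stato-di-nuovo} uniformly parabolic with bounded coefficients on every $[\varepsilon,T^*)\times S$; parabolic Schauder estimates then bound $\mu(t,\cdot)$ in $C^2(S)$ uniformly up to $T^*$, ruling out both alternatives of the continuation criterion, so $T^*=+\infty$ and all the assertions hold for every $t\ge 0$. I expect the main obstacle to be exactly the identification and verification of the substitution $\phi=A^{\gamma}\mu^{-\alpha}$: applied directly to $\mu$, the maximum principle leaves a lower-order term proportional to $(A^{\gamma})''\,\mu^{1-\alpha}$ of indefinite sign, which gives only polynomial-in-time control of $\sup_S\mu$ and does not prevent $\inf_S\mu$ from hitting $0$ in finite time; passing to $\phi$ removes this term and makes the constants exact solutions, which is precisely what yields the time-uniform estimate. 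As a by-product, the monotonicity of $\max_S\phi$ and $\min_S\phi$ shows that the relative dispersion $\max_S\pi/\min_S\pi$ of profit rates across sectors is non-increasing along the flow, anticipating the convergence analysis of Section \ref{sec:LongRUnEquilibrium}.
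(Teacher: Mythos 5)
Your proposal is correct, but it follows a genuinely different route from the paper. The paper does not prove the two--sided bound or well--posedness from scratch: it introduces the autonomous auxiliary equation obtained by dropping the nonlocal denominator, imports existence, uniqueness and the bounds $1/M<V<M$ from Theorems 1.2--1.4 of the cited work of Iacobelli et al.\ on weighted ultrafast diffusion, upgrades regularity to $C([0,+\infty),C^2(S))$ via Krylov--Safonov/Schauder and an analytic--semigroup result of Lunardi, and finally recovers the solution of \eqref{eq:stato-di-nuovo} through an explicit time change $\mu(t,i)=V(\phi(t),i)$, whose ODE is shown to be well posed by a Lipschitz estimate on the nonlocal factor. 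Your centerpiece is instead the self-contained observation that the profit variable $\phi=A^{\gamma}\mu^{-\alpha}$ solves
\[
\partial_t\phi=\alpha\,\kappa(t)\,\phi\,\partial_{ii}\phi+\kappa(t)\,\bigl(\partial_i\log A^{\gamma}\bigr)\,\phi\,\partial_i\phi-\kappa(t)\,(\partial_i\phi)^2,
\]
for which spatial constants are stationary solutions, so comparison with $\max_S\phi_0$ and $\min_S\phi_0$ (made rigorous by the $\pm\epsilon(1+t)$ perturbation) gives the time-uniform two-sided bound directly; global existence then follows by continuation plus Krylov--Safonov/Schauder. I checked the computation of the $\phi$-equation and it is correct, and the by-product (monotone profit dispersion $\max_S\pi/\min_S\pi$) is a nice bonus not present in the paper. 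What each approach buys: the paper's is shorter given the external results and reuses the same auxiliary equation and time change in the gradient-flow theorem; yours is self-contained and exposes the structural reason why $\mu$ cannot degenerate. One point you should tighten: ``standard quasilinear parabolic theory'' gives local well-posedness and uniqueness for \emph{local} equations, whereas \eqref{eq:stato-di-nuovo} carries the nonlocal coefficient $\kappa(t)=(1-\beta)/Z[\mu(t,\cdot)]$; for uniqueness you must either perform the time-change reduction first (as the paper does, proving the Lipschitz property of the resulting ODE) or run a Gronwall/fixed-point argument in which $|\kappa_1(t)-\kappa_2(t)|$ is controlled by $\|\mu_1(t,\cdot)-\mu_2(t,\cdot)\|_{L^1}$ using the a priori bounds. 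This is routine given your estimates, but it is a step you currently wave at rather than carry out.
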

\begin{proof}
See Appendix \ref{app:proofs}.
\end{proof}

\subsection{Gradient flow representation of the competitive equilibrium dynamics}
\label{sec:emergentMacroProperties}

In this section, we recast the dynamics in PDE \eqref{eq:stato-di-nuovo} as a \textit{gradient flow} in the \textit{Wasserstein space} $\mathcal{W}_2(S)$, which enables a direct understanding of how firm-level behaviour influences aggregate variables. In particular, Section \ref{sec:WassersteinSpace} recalls the \textit{Wasserstein distance} of index 2 in $S$, providing a suitable metric structure to analyze the evolution of $\mu(t)$, highlighting how the reallocation costs of firms are related to the structure of $\mathcal{W}_2(S)$. Section \ref{sec:gradientFlowRepresentation} reformulates the PDE \eqref{eq:stato-di-nuovo} as a gradient flow within $\mathcal{W}_2(S)$, that is, we identify a functional $\mathcal{F}(\mu)$ such that the evolution of $\mu(t)$ follows the direction of steepest descent of $\mathcal{F}$ in $\mathcal{W}_2(S)$.
This formulation corresponds to a sequence of period-by-period optimizations in the continuous-time limit, as formalized by the Jordan-Kinderlehrer-Otto (JKO) scheme \citep{jordan1998variational}. Then, in Section \ref{sec:implicationGradientFlowRepresentation}, we examine the key emergent properties of dynamics. Finally, Section 
\ref{sec:PDEGradientFlow} provides a brief overview of reformulating PDEs as gradient flows and of the JKO scheme.\footnote{With a slight abuse of notation, we will use the symbol $\mu$ to denote both the density of the firm distribution (a non-negative function with unit integral) and its associated probability measure, writing $d\mu(i)$ in place of $\mu(i)di$.}

\subsubsection{The Wasserstein space}
\label{sec:WassersteinSpace}

The Wasserstein space with index 2 on $S$, denoted by $\mathcal{W}_2(S)$, is defined as the set of probability measures $\mu \in \mathcal{P}(S)$
endowed with the distance\footnote{In general, it is required that the measures have a finite second moment; however, since the space $S$ is bounded in our analysis, this requirement does not restrict the set $\mathcal{P}(S)$. The Wasserstein-2 distance can also be defined using the \textit{Kantorovich formulation}, which considers all possible transport plans between the two measures. The distance is given by $d_{W_2}(\mu, \nu) = \left( \inf_{\pi \in \Pi(\mu, \nu)} \int_{S \times S} |i-j|^2 \, d\pi(i,j) \right)^{1/2}$, where $\Pi(\mu, \nu)$ is the set of all joint probability measures on $S \times S$ whose first and second marginals are $\mu$ and $\nu$, respectively.}
\begin{equation}
d_{W_2}(\mu, \nu) = \left( \inf_{T \, : \, T_\# \mu = \nu} \int_S \abs{i -T(i)}^2 \, d\mu(i) \right)^{1/2},
\label{def:wasserstein}    
\end{equation}
where the infimum is taken over all \textit{transport maps} $T : S \to S$ such that $T_\# \mu = \nu$, meaning that the push forward of $\mu$ by $T$ equals $\nu$.\footnote{The push forward measure $T_{\#}\mu$ of a measure $\mu$ on $S$ by a measurable map $T: S \to S$ is a measure on $S$ defined for any measurable set $A \subseteq S$ by $(T_\# \mu)(A) = \mu(T^{-1}(A))$.}
An important relationship exists between the Wasserstein-2 distance $d_{W_2}(\mu, \nu)$ and the quadratic cost of reallocating firms across sectors. In this respect, consider the solution of Problem \eqref{eq:slide_discrete_max}, i.e.
\begin{equation}
\label{eq:totalcost-discrete}
\tilde T(i) = \arg\max_{j \in S} \left [ \pi(t,j) - \frac{1}{2} |i-j|^2 \right ].
\end{equation}
$\tilde T$ is the map of how the firms' reallocation drives the firms' distribution. The total reallocation costs associated with this reallocation, $RC$, is equal to:
\[
RC = \frac{1}{2} \int_S \abs{i-\tilde T(i)}^2 \, d\mu(i),
\]
which, apart from $1/2$, has exactly the same form of the cost to be minimized in the definition of $d_{W_2}(\mu, \nu)$ in \eqref{def:wasserstein}; hence 
\[
RC_m(\mu, \nu) = \frac{1}{2} \left(d_{W_2}(\mu, \nu) \right)^2
\]
is exactly the \textit{minimum reallocation cost} in the model from taking firm' distribution $\nu$ to $\mu$. 

\subsubsection{The gradient flow representation}
\label{sec:gradientFlowRepresentation}

We now use the notions introduced above to rewrite the dynamics of $\mu$ in PDE \eqref{eq:stato-di-nuovo}. In this new formalism, the distribution of firms across sectors at a given time $t$, $\mu(t,.)$, is regarded as a point in the space $\mathcal{W}_2(S)$, and its evolution can be described as a gradient flow, that is, a dynamics in which, at each instant, the system moves in the direction of the steepest increase of a suitable functional.\footnote{In the mathematical literature (see, e.g., \citealp{santambrogio2015optimal}), it is customary to consider the negative of the functional in Equation \eqref{eq:functionalGradientFlow}, which results in a minus sign in the gradient flow expression \eqref{eq:gradientflow}. This is purely a notational convention; we adopt the positive form to remain consistent with the economic interpretation.} The appropriate notions of “gradient” and “direction” are those of $\mathcal{W}_2(S)$, and are briefly introduced in Appendix \ref{app:explanationGradientWassGradient}, along with some intuition.
The gradient flow framework also allows the dynamics to be expressed as a sequence of maximization problems over time, namely, via the Jordan–Kinderlehrer–Otto (JKO) scheme. Theorem \ref{th:riscritturaGradientFlow} expresses rigorously this formalism.

\begin{theorem}[The gradient flow representation and JKO scheme]
\label{th:riscritturaGradientFlow}
Let $\mu_0, A\colon S \to \mathbb{R}$ be strictly positive and in $C^2(S)$; then the unique solution of PDE \eqref{eq:stato-di-nuovo} is the unique gradient flow in $\mathcal{W}_2(S)$ of
\begin{equation}
\label{eq:functionalGradientFlow}
    \mathcal{F}(\mu) 
    =  \left\{ \frac{(1-\beta)[\sigma(1-\beta) + \beta]}{[\sigma(1-\beta) + \beta]-1} \right\}\log\left(\int_{0}^{n} \left[A(i)^{\frac{1}{1-\beta} } \mu(t,i)\right]^{1-\frac{1}{\sigma(1-\beta) + \beta}} \, di\right) 
    = \log X (\mu(t))
\end{equation}
i.e.\footnote{\label{foot:fisrtvariation} Appendix \ref{app:explanationGradientWassGradient} briefly shows that $\nabla_{W_{2}} \mathcal{F}(\mu(t))$ can be written in a suitable class of functionals which includes the one defined in (\ref{eq:functionalGradientFlow}) as 
\[
-\nabla \cdot \left(\mu \nabla \frac{\delta \mathcal{F}}{\delta \mu}\right),
\]
where $\delta \mathcal{F}/\delta \mu$ is the first variation of $\mathcal{F}$ with respect to $\mu$, i.e. the function such that, for any admissible perturbation $\nu$, satisfies
\[
\left. \frac{d}{d\epsilon} \mathcal{F}((1-\epsilon)\mu + \epsilon \nu) \right|_{\epsilon=0}
= \int_{S} \frac{\delta \mathcal{F}}{\delta \mu}(x) \, d(\nu - \mu)(x).
\]
}
\begin{equation}
\label{eq:gradientflow}
\frac{d\mu}{dt} =\nabla_{W_{2}} \mathcal{F}(\mu(t)) =  \dfrac{\nabla_{W_2} {X}(\mu(t))}{X(\mu(t))}, \qquad \mu(0,\cdot) = \mu_0. 
\end{equation}
The trajectory of $\mu(t)$ in \eqref{eq:gradientflow} can be obtained as the limit for $\Delta t \to 0$ of the Jordan-Kinderlehrer-Otto scheme, where
\begin{equation}
\mu(t+\Delta t) = \arg \max_{\mu} \left\{\mathcal{F}(\mu) -\frac{1}{2\Delta t} d_{W_2}^2(\mu, \mu(t)) \right\} = \\
        \arg \max_{\mu} \left\{\log X(\mu) -\frac{1}{\Delta t} RC_m(\mu, \mu(t)) \right\}.
    \label{eq:JKOScheme}
\end{equation}
\end{theorem}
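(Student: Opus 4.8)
The plan is to verify that the functional $\FF$ in \eqref{eq:functionalGradientFlow} generates, via the Wasserstein gradient-flow formalism, precisely the PDE \eqref{eq:stato-di-nuovo}, and then to invoke the uniqueness already available from Theorem \ref{th:esistenzaeunicita} to conclude that the gradient flow and the PDE solution coincide. Concretely, I would first compute the first variation $\delta\FF/\delta\mu$. Write $\alpha := \sigma(1-\beta)+\beta$ and $\gamma := (1-\beta)\alpha/(\alpha-1)$, so that $\FF(\mu) = \gamma \log\!\big(\int_0^n [A(i)^{1/(1-\beta)}\mu(i)]^{1-1/\alpha}\,di\big)$. Differentiating the logarithm and the power, one gets
\[
\frac{\delta\FF}{\delta\mu}(i) = \gamma\,\frac{\big(1-\tfrac1\alpha\big)\,A(i)^{\frac{1}{1-\beta}\left(1-\frac1\alpha\right)}\mu(i)^{-\frac1\alpha}}{\int_0^n [A(j)^{1/(1-\beta)}\mu(j)]^{1-1/\alpha}\,dj}.
\]
Since $\gamma(1-1/\alpha) = (1-\beta)$ and $\tfrac{1}{1-\beta}(1-\tfrac1\alpha) = \tfrac{1}{1-\beta}\cdot\tfrac{\alpha-1}{\alpha} = \tfrac{\sigma-1}{\alpha}$ (using $\alpha-1 = (\sigma-1)(1-\beta)$), this simplifies to
\[
\frac{\delta\FF}{\delta\mu}(i) = (1-\beta)\,\frac{A(i)^{\frac{\sigma-1}{\alpha}}\mu(i)^{-\frac1\alpha}}{\int_0^n [A(j)\mu(j)^{1-\beta}]^{\frac{\sigma-1}{\alpha}}\,dj},
\]
which is exactly $\pi(t,i)$ from \eqref{eq:profitti_dopo}. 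Then the Wasserstein gradient, as recalled in footnote \ref{foot:fisrtvariation}, is $\nabla_{W_2}\FF(\mu) = -\partial_i(\mu\,\partial_i(\delta\FF/\delta\mu)) = -\partial_i(\mu\,\partial_i\pi)$, which matches the right-hand side of \eqref{eq:continuity} and hence of \eqref{eq:stato-di-nuovo}.

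Second, I would record the identity $\FF(\mu) = \log X(\mu)$. From \eqref{eq:value-of-P} and \eqref{eq:X-P} we have $X = 1/P = \big(\int_0^n [A(i)\mu(i)^{1-\beta}]^{(\sigma-1)/\alpha}\,di\big)^{\alpha/(\sigma-1)}$, so $\log X = \tfrac{\alpha}{\sigma-1}\log\big(\int_0^n[A(i)\mu(i)^{1-\beta}]^{(\sigma-1)/\alpha}\,di\big)$; noting that $[A(i)\mu(i)^{1-\beta}]^{(\sigma-1)/\alpha} = [A(i)^{1/(1-\beta)}\mu(i)]^{(1-\beta)(\sigma-1)/\alpha} = [A(i)^{1/(1-\beta)}\mu(i)]^{1-1/\alpha}$ and that $\tfrac{\alpha}{\sigma-1} = \tfrac{\alpha(1-\beta)}{(\sigma-1)(1-\beta)} = \tfrac{(1-\beta)\alpha}{\alpha-1} = \gamma$, the two expressions for $\FF$ agree, giving the chain-rule form $\nabla_{W_2}\FF = \nabla_{W_2}X/X$ in \eqref{eq:gradientflow}.

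Third, for the JKO part, I would appeal to the standard result (see \citealp{jordan1998variational, santambrogio2015optimal}) that for a functional whose Wasserstein gradient flow is well posed, the discrete minimizing-movement iterates $\mu_{\Delta t}(t+\Delta t) = \arg\max_\mu\{\FF(\mu) - \tfrac{1}{2\Delta t}d_{W_2}^2(\mu,\mu_{\Delta t}(t))\}$ converge, as $\Delta t\to 0$, to the gradient-flow trajectory; existence of each minimizer follows from lower semicontinuity and compactness of sublevel sets of $\mu\mapsto \tfrac{1}{2\Delta t}d_{W_2}^2(\mu,\mu(t)) - \FF(\mu)$ on $\mathcal{W}_2(S)$ (here $\FF$ is bounded above on $\mathcal{P}(S)$ by the bounds on $A$, which helps). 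Rewriting $\tfrac{1}{2\Delta t}d_{W_2}^2 = \tfrac{1}{\Delta t}RC_m$ and $\FF = \log X$ yields the second equality in \eqref{eq:JKOScheme}.

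The main obstacle is rigour in the gradient-flow machinery rather than the algebra: one must ensure that $\FF$ is $\lambda$-geodesically convex (or at least that the abstract theory of \citet{ambrosiogiglisavare} / \citet{santambrogio2015optimal} applies) so that the gradient flow is \emph{unique} and the JKO scheme converges. Since $X(\mu)$ involves $\mu$ raised to a power $1-1/\alpha < 1$ inside an integral and then a logarithm, $\FF$ is an \emph{internal-energy-type} functional; I would check geodesic (semi)convexity via McCann's condition on the integrand, or — more economically — simply note that Theorem \ref{th:esistenzaeunicita} already furnishes a unique $C^2$ solution of the PDE, verify by direct computation that this solution satisfies the energy-dissipation equality $\tfrac{d}{dt}\FF(\mu(t)) = \|\nabla_{W_2}\FF(\mu(t))\|_{\mu(t)}^2 \ge 0$ characterizing the gradient flow (an EDI/EDE argument), and thereby identify it as \emph{the} gradient flow without needing the full convexity theory. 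Uniqueness of the gradient flow then transfers from uniqueness of the PDE solution, and the JKO convergence follows from the general theory once well-posedness is in hand; the regularity from Theorem \ref{th:esistenzaeunicita} (in particular the two-sided bound $1/M \le \mu \le M$) is what keeps all the manipulations of $\delta\FF/\delta\mu$ and its derivatives licit.
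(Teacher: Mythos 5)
Your algebraic core is correct and is actually more direct than the paper's argument: you compute $\delta\mathcal{F}/\delta\mu$ explicitly and verify that it equals the profit function $\pi$ of \eqref{eq:profitti_dopo}, so that $-\partial_i(\mu\,\partial_i(\delta\mathcal{F}/\delta\mu))$ reproduces the right-hand side of \eqref{eq:stato-di-nuovo}, and the identity $\mathcal{F}=\log X$ follows from the same exponent bookkeeping. The paper reaches the same point by a reduction rather than a direct computation: it considers the un-normalized auxiliary equation $\partial_t\nu=-\frac{1-\rho}{\rho}\partial_i\big(\nu\,\partial_i(A^{\frac{1}{1-\beta}(1-\frac{1}{\rho})}\nu^{-1/\rho})\big)$ with $\rho=\sigma(1-\beta)+\beta$, imports from \cite{iacobelli2019weighted} the fact that this is the JKO-limit gradient flow of $\mathcal{G}(\nu)=\int A^{\frac{1}{1-\beta}(1-\frac{1}{\rho})}\nu^{(\rho-1)/\rho}$, and then observes that passing to $\mathcal{F}=\frac{\rho}{1-\rho}(1-\beta)\ln\mathcal{G}$ rescales the Wasserstein gradient by $1/\mathcal{G}$, which is precisely the normalizing denominator in \eqref{eq:stato-di-nuovo}. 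Up to this point the two routes are equivalent and yours is arguably cleaner.

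The genuine gap is in your third step. The ``standard result'' that JKO iterates converge to the gradient flow is not available for this functional. The energy here is (minus) an internal energy $\int A^{z}\mu^{m}\,di$ with $m=1-\frac{1}{\sigma(1-\beta)+\beta}$: for $\sigma>1$ this is a fast-diffusion exponent $0<m<1$, and for $\sigma<1$ an ultra-fast exponent $m<0$. In one dimension McCann's displacement-convexity criterion asks that $s\mapsto sF(1/s)$ be convex and non-increasing; with $F(s)=-A^{z}s^{m}$ one gets $sF(1/s)=-A^{z}s^{1-m}$, which is concave whenever $m<1$, so $-\mathcal{F}$ is \emph{not} geodesically convex and the off-the-shelf convergence theory of \cite{jordan1998variational} and \cite{ambrosio2005gradient} does not apply — this is exactly why the paper routes everything through \cite{iacobelli2019weighted}, whose weighted $L^{1}$ estimates and regularizations are tailored to this (ultra-)fast regime and deliver both the JKO convergence and the identification of the limit. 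Your fallback via the energy-dissipation equality can identify the smooth PDE solution of Theorem \ref{th:esistenzaeunicita} as a curve of maximal slope, but uniqueness of such curves again requires convexity-type hypotheses you have not supplied, and in any case an EDE argument does not produce the JKO-convergence claim, which is part of the statement \eqref{eq:JKOScheme}. To close the proof you must either invoke the specific results of \cite{iacobelli2019weighted} (as the paper does) or redo their minimizing-movement analysis for this weighted non-convex energy; an appeal to the general theory will not suffice.
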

\begin{proof}
See Appendix \ref{app:proofs}.
\end{proof}

\subsubsection{Emergence of properties from gradient flow representation}
\label{sec:implicationGradientFlowRepresentation}

The possibility of writing a PDE as a gradient flow is a key step in moving from the dynamics of an individual firm to an aggregate representation, that is, to the emergence of macroeconomic behaviour from micro-level decision-making processes. As a first consequence of Theorem \ref{th:riscritturaGradientFlow}, we find that the value of $\mathcal{F}$ increases along the system's dynamics, as stated in Corollary \ref{cor:crescitaOttimal}. Therefore, aggregate consumption is non-decreasing over time.

\begin{Corollary}[Non decreasing aggregate consumption]
\label{cor:crescitaOttimal}
Assume the hypotheses of Theorem \ref{th:riscritturaGradientFlow} hold; the functional $\mathcal{F}$ is increasing along the trajectory of the solution of PDE \eqref{eq:stato-di-nuovo}.
\end{Corollary}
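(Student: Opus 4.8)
The plan is to exploit the gradient flow structure established in Theorem \ref{th:riscritturaGradientFlow} directly: along a gradient flow, the driving functional is monotone. Concretely, from \eqref{eq:gradientflow} the trajectory $\mu(t)$ satisfies $\frac{d\mu}{dt} = \nabla_{W_2}\mathcal{F}(\mu(t))$, so the chain rule in $\mathcal{W}_2(S)$ gives
\[
\frac{d}{dt}\mathcal{F}(\mu(t)) = \left\langle \nabla_{W_2}\mathcal{F}(\mu(t)), \frac{d\mu}{dt} \right\rangle_{W_2} = \left\| \nabla_{W_2}\mathcal{F}(\mu(t)) \right\|_{W_2}^2 \geq 0,
\]
which is the desired statement. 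The main task is to make this computation rigorous at the level of the PDE, without invoking the abstract metric-space machinery, so that the argument is self-contained.

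First I would write out $\frac{d}{dt}\mathcal{F}(\mu(t))$ explicitly using the first-variation identity recalled in footnote \ref{foot:fisrtvariation}: since $\mathcal{F}(\mu) = \log X(\mu)$ and $X$ has the integral form in \eqref{eq:functionalGradientFlow}, one computes $\frac{\delta \mathcal{F}}{\delta \mu}(i)$ and then
\[
\frac{d}{dt}\mathcal{F}(\mu(t)) = \int_S \frac{\delta \mathcal{F}}{\delta \mu}(i)\, \partial_t \mu(t,i)\, di.
\]
Next I would substitute the PDE \eqref{eq:stato-di-nuovo}, or equivalently its gradient-flow form $\partial_t \mu = -\partial_i(\mu\, \partial_i \frac{\delta\mathcal{F}}{\delta\mu})$, into this expression. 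Integration by parts on $S$ — which produces no boundary terms because $S$ is a circle (periodicity from \eqref{eq:defS} and the boundary condition in \eqref{eq:stato-di-nuovo}) — then yields
\[
\frac{d}{dt}\mathcal{F}(\mu(t)) = \int_S \mu(t,i)\,\left(\partial_i \frac{\delta \mathcal{F}}{\delta \mu}(i)\right)^2 di \geq 0,
\]
since $\mu > 0$ by Theorem \ref{th:esistenzaeunicita}. This is exactly the squared Wasserstein norm of the gradient, written concretely. I would also note that Theorem \ref{th:esistenzaeunicita} guarantees $\mu(t,\cdot)\in C^2(S)$ with uniform positive bounds, so all the integrals and the differentiation under the integral sign are justified, and $\frac{\delta\mathcal{F}}{\delta\mu}$ is $C^1$; alternatively one can cite that \eqref{eq:JKOScheme} defines each step as a maximizer, so $\mathcal{F}(\mu(t+\Delta t)) - \frac{1}{2\Delta t} d_{W_2}^2(\mu(t+\Delta t),\mu(t)) \geq \mathcal{F}(\mu(t))$, whence $\mathcal{F}(\mu(t+\Delta t)) \geq \mathcal{F}(\mu(t))$, and pass to the limit $\Delta t \to 0$ — a cleaner route that bypasses the regularity discussion entirely.

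The main obstacle is purely one of justification rather than of substance: one must be sure that the formal integration-by-parts identity and the differentiation of $\mathcal{F}(\mu(t))$ in $t$ are legitimate for the solution produced by Theorem \ref{th:esistenzaeunicita}. Given the $C^2$-in-space, $C^0$-in-time regularity and the two-sided bounds on $\mu$ already available, this is routine, and in fact the JKO-scheme argument sidesteps it altogether since monotonicity along the discrete scheme is immediate from the variational characterization and monotonicity in the limit follows by the convergence asserted in Theorem \ref{th:riscritturaGradientFlow}. I would present the JKO argument as the primary proof and mention the direct PDE computation as the infinitesimal counterpart, remarking that the identity $\frac{d}{dt}\mathcal{F}(\mu(t)) = \int_S \mu (\partial_i \frac{\delta\mathcal{F}}{\delta\mu})^2\, di$ additionally shows that $\mathcal{F}$ is \emph{strictly} increasing unless $\partial_i \frac{\delta\mathcal{F}}{\delta\mu} \equiv 0$, i.e.\ unless $\mu$ is a stationary point — a fact that will be useful for the convergence results in Section \ref{sec:LongRUnEquilibrium}.
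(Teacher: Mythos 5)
Your proposal is correct, and it reaches the conclusion by a more self-contained route than the paper does. The paper's own proof is essentially a one-line citation: it invokes the general fact that a gradient flow is a curve of maximal slope for its driving functional (Theorem 11.1.3 of \cite{ambrosio2005gradient}), from which monotonicity of $\mathcal{F}$ along the trajectory, and hence of $P$, $X$ and $U$, follows. You instead make the monotonicity explicit in two equivalent ways: (i) the discrete variational argument, testing the JKO maximizer in \eqref{eq:JKOScheme} against the competitor $\mu = \mu(t)$ to get $\mathcal{F}(\mu(t+\Delta t)) \geq \mathcal{F}(\mu(t)) + \frac{1}{2\Delta t} d_{W_2}^2(\mu(t+\Delta t),\mu(t)) \geq \mathcal{F}(\mu(t))$; and (ii) the energy-dissipation identity $\frac{d}{dt}\mathcal{F}(\mu(t)) = \int_S \mu\,(\partial_i \frac{\delta\mathcal{F}}{\delta\mu})^2\,di \geq 0$, which is legitimate here because $\frac{\delta\mathcal{F}}{\delta\mu} = \pi$ (the normalising constant in \eqref{eq:functionalGradientFlow} is chosen precisely so that the first variation equals the profit rate), the solution is $C^2$ in space with two-sided positive bounds by Theorem \ref{th:esistenzaeunicita}, and integration by parts on the circle produces no boundary terms. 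What your approach buys is twofold: the direct computation fixes the sign unambiguously (the paper's proof momentarily asserts that $\mathcal{F}$ is \emph{decreasing} before concluding that $P$ decreases, a slip your identity avoids), and it gives the explicit dissipation rate $\int_S \mu\,(\partial_i\pi)^2\,di$, showing that $\mathcal{F}$ is strictly increasing except at stationary points — useful for the convergence analysis in Section \ref{sec:LongRUnEquilibrium}. The only point deserving one extra sentence in the JKO route is that passing monotonicity from the discrete iterates to the continuous limit requires continuity of $\mathcal{F}$ along the convergence of the scheme; since $\mathcal{F}$ involves the possibly negative power $\mu^{1-1/\rho}$, this does rely on the uniform positive lower bound on $\mu$ from Theorem \ref{th:esistenzaeunicita}, which you correctly flag.
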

\begin{proof}
See Appendix \ref{app:proofs}.
\end{proof}

In our model, $\mathcal{F}$ corresponds to the natural logarithm of aggregate consumption, but also reflects aggregate production, consumption, profits, and wages, the latter being a constant share of total production. Hence, based on Equation \eqref{eq:functionalGradientFlow}, the economy evolves along a path which, when measured in $\mathcal{W}_2(S)$, locally maximizes the instantaneous rate of change of aggregate consumption (and, by extension, production, profits, and wages) at each point in time.
This result can be intuitively understood by noting that each firm aims to increase its net profits-seeking the highest possible relative gain given its current profit level and the associated reallocation costs.

It is worth emphasizing that, in the short run, the economy does not maximize any aggregate objective (such as consumption), and thus its dynamic path is typically inefficient from a static standpoint.\footnote{In particular, if the economy exhibited static efficiency, its evolution would follow the rule:
\begin{equation} 
\nonumber
\mu(t+\Delta t) =\arg \max_{\mu} \left\{X(\mu) - \frac{1}{\Delta t} RC_m(\mu, \mu(t)) \right\},
\end{equation}
that is, it would maximize total consumption net of the reallocation cost required to “reach” the most productive sectoral configuration.
This objective function clearly differs from the structure implied by the JKO scheme, as described in Theorem \ref{th:riscritturaGradientFlow}.}
However, the JKO scheme (\ref{eq:JKOScheme}) offers a crucial insight into the efficiency of the dynamics of the firms’ distribution: the evolution follows a sequence of period-by-period optimization problems in which two key forces are at play: (i) the change in the natural logarithm of consumption/profits; and (ii) the reallocation cost induced by the changing distribution of firms.
This result is particularly striking because it emerges at the aggregate level, yet remains fully consistent with the logic of individual firm behaviour: each firm seeks the highest possible gain relative to its current profitability, while accounting for the costs associated with reallocation.

Since a key step in the gradient flow approach is the identification of the functional $\mathcal{F}$, a natural question arises: to what extent is $\mathcal{F}$ constrained by the specific functional forms adopted in this paper? In our case, the identification of $\mathcal{F}$ from the original PDE \eqref{eq:stato-di-nuovo} is obtained starting from the results of \citet{iacobelli2019weighted}.
We argue that the structure of $\mathcal{F}$ is sufficiently robust to allow for its computation under various extensions of the model, as illustrated in Section \ref{sec:extensions}, at least in cases where sectoral profit rates are proportional to a power of the mass of firms in each sector.

\subsubsection{PDE reformulated as gradient flow and JKO scheme}
\label{sec:PDEGradientFlow}

In a more general perspective, there exist several classes of PDEs that can be reformulated as gradient flows. For some of these equations, such as the \textit{heat equation} and the \textit{p-Laplacian equation} (\citealp{ambrosio2021} Chapter 13), one can find a gradient flow formulation in $L^2(\Omega)$ spaces. For other classes of equations (e.g., the Fokker-Planck equation, porous medium equation, advection equation, aggregation equation, Keller-Segel system, and crowd motion equations, see \citealp[Chapter 8]{santambrogio2017}, and \citealp[Chapter 18]{ambrosio2021}), including the PDE studied here, which belongs to the class of (ultra) \textit{fast diffusion equations}, the natural formulation is in $\mathcal{W}_2(S)$.
More specifically, the PDEs that can be formulated as gradient flows in $\mathcal{W}_2(S)$ are typically expressed as continuity equations of the form (see \citealp[Chapter 5]{santambrogio2017}, and also Example 11.1.2, page 281, in \citealp{ambrosio2005gradient}):
\[
\frac{\partial \mu}{\partial t} = - \nabla \cdot (\mu v),
\]
where $\mu$ is the probability density, and the velocity field $v$ is given by:
\[
v = - \nabla \left( \frac{\delta \mathcal F}{\delta \mu} \right),
\]
with $\mathcal F$ a functional defined on the space of probability measures, and $\delta \mathcal F / \delta \mu$ its first variation (see Footnote \ref{foot:fisrtvariation}, at page \pageref{foot:fisrtvariation} for the definition). 

As regards the possibility of using JKO scheme, it is known to converge for equations formulated as gradient flows in $\mathcal{W}_2(S)$ driven by functionals that are bounded from below, lower semicontinuous, and exhibit some form of convexity (see \citealp[Chapter 11.1.3]{ambrosio2005gradient}). Standard examples include the heat equation, linear Fokker-Planck equations, and the porous medium equation with $m>1$.
For the \emph{fast diffusion equation} ($0<m<1$), a JKO-scheme-based construction of weak solutions via non-local approximations has recently appeared in \cite{CarrilloEspositoSkrzeczkowskiWu2024}, while existence results via the JKO scheme for weighted ultra-fast diffusion equations ($m<0$) are obtained in \cite{iacobelli2019weighted}, employing weighted $L^1$ estimates and careful regularizations. For aggregation-diffusion PDEs with regular interaction potentials, JKO convergence is explicitly established in \cite{HuangMaininiVazquezVolzone2022}.
On the other hand, there exist several classes of equations, such as certain Keller-Segel systems or models with hard state constraints, that fail to satisfy standard JKO convergence conditions due to the loss of geodesic convexity or insufficient regularity of the driving functional. In these cases, the gradient flow solution is typically defined as a curve of maximal slope or via regularizations (see, e.g.,\citealp[Chapter 11]{ambrosio2005gradient}).

\subsection{The long-run competitive equilibrium}
\label{sec:LongRUnEquilibrium}

In this section, we prove that the economy converges globally to the unique long-run competitive equilibrium in which consumption, production, profits, and wages are maximised.
In particular, Theorem \ref{teo:convergence} provides a complete characterisation of the long-run distribution and establishes the global stability of the equilibrium.

\begin{theorem}[The long-run competitive equilibrium]
\label{teo:convergence}
Assume the hypotheses of Theorem \ref{th:esistenzaeunicita} hold; then the firms' distribution $\mu$ admits a unique long-run competitive equilibrium distribution $\mu^{EQ}(\cdot)$ given by
\begin{equation}\label{eq:steadystatemu}
\mu^{EQ}(i) =  \frac{A(i)^{\sigma -1}}{\displaystyle\int_0^n A(j)^{\sigma - 1}}.
\end{equation}
\noindent In this long-run equilibrium: $\pi^{EQ}(i) = 1-\beta$ for all $i$ and 
\[
Y^{EQ}(i) = L^{EQ}(i)  =\frac{A(i)^{\sigma - 1}}{\displaystyle\int_0^n A(j)^{\sigma -  1} dj}, \qquad X^{EQ} = \left( \int_0^n A(i)^{\sigma-1} \, di \right)^{\frac{1}{\sigma-1}}.
\]
\noindent Finally, for any initial condition $\mu_0$, the unique solution of PDE \eqref{eq:stato-di-nuovo} converges to $\mu^{EQ}(\cdot)$ exponentially fast in $L^2(S)$, i.e.
\begin{equation}
\norm{\mu(t,\cdot)-\mu^{EQ}(\cdot)}_{L^2(S)} \leq C_1 e^{-C_2 t}
\end{equation}
for some $C_1,C_2 > 0$. 
\end{theorem}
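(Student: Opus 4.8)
\textbf{Proof plan for Theorem \ref{teo:convergence}.}

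The plan is to exploit the gradient flow structure established in Theorem \ref{th:riscritturaGradientFlow}: since $\mu(t)$ is the gradient flow of $\mathcal{F}(\mu) = \log X(\mu)$ in $\mathcal{W}_2(S)$, and since Theorem \ref{th:esistenzaeunicita} already guarantees uniform bounds $1/M \le \mu(t,i) \le M$ for all $t$, the natural strategy is: (i) identify the candidate equilibrium as the unique critical point of $\mathcal{F}$; (ii) compute the equilibrium quantities by plugging \eqref{eq:steadystatemu} into the formulas from Proposition \ref{pr:shor-run-equilibrium}; (iii) establish exponential convergence via a Lyapunov/energy estimate. First I would characterise the steady state: a stationary solution of PDE \eqref{eq:stato-di-nuovo} requires $\partial_i(\mu\,\partial_i\pi(i)) = 0$, i.e. $\mu\,\partial_i\pi$ is constant on $S$; since $\mu > 0$ and the flux must have zero average on the circle, this forces $\partial_i\pi \equiv 0$, hence $\pi(i)$ is constant across sectors. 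Setting $\pi(i) = 1-\beta$ (consistent with $\Pi = 1-\beta$ from \eqref{eq:valu-of-PI}) in \eqref{eq:profitti_dopo} and solving for $\mu$ yields, after simplification, $\mu^{EQ}(i) \propto A(i)^{\sigma-1}$; normalising to unit mass gives \eqref{eq:steadystatemu}. Uniqueness of the critical point follows because, after the change of variables suggested by \eqref{eq:functionalGradientFlow}, $\mathcal{F}$ is (up to sign, depending on whether $\sigma(1-\beta)+\beta \gtrless 1$) strictly concave in an appropriate sense on the slice of probability densities, so it has a unique maximiser. The equilibrium values of $Y^{EQ}(i)$, $L^{EQ}(i)$ and $X^{EQ}$ then follow by direct substitution into the displayed formulas of Proposition \ref{pr:shor-run-equilibrium}.

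For the convergence rate, the plan is to use an energy estimate for the linearisation around $\mu^{EQ}$. Write $\mu(t,i) = \mu^{EQ}(i) + \eta(t,i)$ with $\int_S \eta\,di = 0$ (mass is conserved by \eqref{eq:continuity}). Differentiating the $L^2$ norm, $\tfrac{d}{dt}\tfrac12\norm{\eta}_{L^2}^2 = \int_S \eta\,\partial_t\mu\,di = -\int_S \eta\,\partial_i(\mu\,\partial_i\pi)\,di = \int_S \partial_i\eta\cdot\mu\,\partial_i\pi\,di$. Linearising $\partial_i\pi$ around the equilibrium (where $\partial_i\pi^{EQ} = 0$) gives $\partial_i\pi \approx \mathcal{L}\eta$ for a linear operator $\mathcal{L}$; because $\pi$ is, up to the normalising integral, $(1-\beta)$ times $A^{(\sigma-1)/(\sigma(1-\beta)+\beta)}\mu^{-1/(\sigma(1-\beta)+\beta)}$, the leading term is $\partial_i\pi \approx -c(i)\,\partial_i\big(\mu^{EQ}(i)^{-1/(\sigma(1-\beta)+\beta)-1}\eta\big)$ plus a rank-one (nonlocal) correction coming from the denominator. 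Substituting and integrating by parts produces a dominant term of the form $-\kappa\int_S (\partial_i\eta)^2\,di$ with $\kappa > 0$ (using the uniform bounds on $\mu$ and $A$), while the nonlocal rank-one term is controlled by the zero-mean condition. A Poincaré (Wirtinger) inequality on the circle $S$, $\norm{\eta}_{L^2}^2 \le C_P\norm{\partial_i\eta}_{L^2}^2$ for zero-mean $\eta$, then yields $\tfrac{d}{dt}\norm{\eta}_{L^2}^2 \le -2C_2\norm{\eta}_{L^2}^2$ for $\eta$ small, and Grönwall gives the claimed exponential decay. To upgrade this from a local (small $\eta$) statement to a global one for arbitrary $\mu_0$, I would combine it with the fact that $\mathcal{F}(\mu(t))$ is monotone (Corollary \ref{cor:crescitaOttimal}) and bounded above (by compactness of the bounded set $\{1/M \le \mu \le M\}$ and continuity of $\mathcal{F}$), so $\mu(t)$ converges along subsequences to a critical point, which must be $\mu^{EQ}$; entering the basin of the local estimate after finite time then delivers the global exponential rate.

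\textbf{Main obstacle.} The delicate step is the coercivity estimate: controlling the nonlocal (rank-one) term arising from differentiating the normalising denominator $\int_0^n [A\mu^{1-\beta}]^{(\sigma-1)/(\sigma(1-\beta)+\beta)}\,di$, and verifying that the resulting linearised operator is genuinely dissipative in $L^2$ with a spectral gap, \emph{uniformly} in the sign of the exponent $\sigma(1-\beta)+\beta - 1$ (which can be positive or negative, corresponding to the fast- versus ultra-fast-diffusion regimes). In the ultra-fast-diffusion case the diffusion coefficient in the linearisation has the "wrong" sign of nonlinearity, so one must check carefully that the uniform lower bound $\mu \ge 1/M$ keeps the operator uniformly elliptic; this is exactly where the boundedness conclusion of Theorem \ref{th:esistenzaeunicita} is essential. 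An alternative, possibly cleaner, route is to work directly with $\mathcal{F}$ as a Lyapunov functional: show $\tfrac{d}{dt}\mathcal{F}(\mu(t)) = \int_S \mu\,(\partial_i\pi)^2\,di \ge 0$ and that this dissipation controls $\norm{\mu-\mu^{EQ}}_{L^2}^2$ via a Łojasiewicz-type or log-Sobolev-type inequality adapted to $\mathcal{W}_2(S)$; but establishing such a functional inequality in closed form is itself nontrivial, so I expect the direct linearisation-plus-Wirtinger argument to be the most transparent path to the stated $L^2$ rate.
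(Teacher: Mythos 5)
Your identification of the steady state (stationarity forces $\mu\,\partial_i\pi$ constant, periodicity of $\pi$ on the circle forces the constant to be zero, hence $\partial_i\pi\equiv 0$ and $\mu^{EQ}\propto A^{\sigma-1}$) and the computation of the equilibrium quantities by substitution into Proposition \ref{pr:shor-run-equilibrium} match the paper. The convergence argument, however, takes a genuinely different route, and it is there that your plan has a real gap.

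The paper does not linearise at all. It reuses the time change $\phi(t)$ from the proof of Theorem \ref{th:esistenzaeunicita}: since $\phi'(t)=1/\int_0^n[A^{1/(1-\beta)}V(\phi(t),\cdot)]^{1-1/\rho}$ is bounded above and below (by the uniform bounds $1/M\le\mu\le M$), one has $\phi(t)\ge K_\phi t$, and $\mu(t,i)=V(\phi(t),i)$ where $V$ solves the \emph{local} weighted ultra-fast diffusion equation \eqref{eq:stato-di-santambrogio}. Exponential $L^2$ convergence of $V$ to $\mu^{EQ}$ is then imported wholesale from Theorem 1.4 of \citet{iacobelli2019weighted}, and the lower bound on $\phi$ transfers the rate to $\mu$. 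In other words, the nonlocal denominator that generates your ``rank-one correction'' is eliminated \emph{before} any spectral analysis is attempted; the hard dissipation estimate is outsourced to the cited reference for the local equation.

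The gap in your plan is precisely the step you flag as the main obstacle, and it is not a technicality: (i) you never establish the coercivity $\langle\eta,\mathcal{L}\eta\rangle\le-\kappa\norm{\partial_i\eta}^2_{L^2}$ for the linearised operator including the nonlocal term, uniformly in the sign of $\sigma(1-\beta)+\beta-1$ — this is the entire analytic content of the theorem, and asserting that the rank-one piece ``is controlled by the zero-mean condition'' is not a proof (the perturbation of the denominator is not orthogonal to the dissipative part in any obvious way); (ii) the local-to-global upgrade via subsequential convergence to a critical point requires compactness of the trajectory in a topology strong enough to enter the basin where your linearised estimate applies (at least $C^0$, realistically $H^1$), which in turn needs uniform-in-time parabolic regularity estimates that you do not supply; Theorem \ref{th:esistenzaeunicita} gives you $L^\infty$ bounds and continuity in $C^2(S)$ but not a uniform-in-time $C^2$ bound stated explicitly. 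If you want to complete the proof along your lines you would essentially have to reprove the entropy/weighted-$L^1$ machinery of \citet{iacobelli2019weighted}; the time-change reduction is the far shorter path.
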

\begin{proof}
See Appendix \ref{app:proofs}.
\end{proof}

In the long-run competitive equilibrium, profit rates are equalised across sectors at the level $1-\beta$, and the mass of firms is allocated across sectors according to technological progress and consumer preferences. Specifically, more productive sectors attract a larger share of firms only if the elasticity of substitution among goods exceeds one (Equation \eqref{eq:steadystatemu}), that is, when the substitution effect dominates the income effect.
In this equilibrium, the distributions of production $Y$ and employment $L$ mirror that of the firms’ allocation; that is, labour productivity is equal to one in each sector. This latter finding is not surprising, given the assumptions of symmetric preferences over goods and perfect labour mobility. Section \ref{sec:immobilework} will show that labour immobility can break this symmetry between the distributions of $Y$ and $L$.
Finally, from PDE \eqref{eq:stato-di-nuovo}, the speed of convergence appears to be decreasing in $\beta$, although we are unfortunately unable to establish how $C_1$ and $C_2$ depend specifically on the model’s parameters.

Theorem \ref{th:mu-e-ottimale} states that, in the long-run equilibrium, consumption (as well as production, wages, and profits) is maximised so that the long-run equilibrium is efficient.
\begin{theorem}[The efficiency of long-run equilibrium]
\label{th:mu-e-ottimale}
Assume the hypothesis of Theorem \ref{th:esistenzaeunicita} hold. Then for every value of $\sigma \in (0,1) \cup (1, +\infty)$, \eqref{eq:steadystatemu} is the unique distribution that maximize $X$ in the set of non-negative measurable functions $\mu(i)$ on $S$ such that $\int \mu(i) di = 1$, i.e. $X^{EQ}$ is the maximum attainable consumption.
\end{theorem}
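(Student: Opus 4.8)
\emph{Proof proposal.} The plan is to reduce the statement to a single application of Jensen's inequality. Writing $\gamma := \sigma(1-\beta)+\beta$, which is strictly positive for every admissible $\sigma,\beta$, Proposition~\ref{pr:shor-run-equilibrium} (via \eqref{eq:value-of-P} together with $X=1/P$) gives
\[
X(\mu) = J(\mu)^{\frac{\gamma}{\sigma-1}},
\qquad
J(\mu):=\int_0^n A(i)^{\frac{\sigma-1}{\gamma}}\,\mu(i)^{\,1-\frac1\gamma}\,di,
\]
where I have used the elementary identity $(1-\beta)(\sigma-1)=\gamma-1$ to rewrite the exponent of $\mu$. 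Two sign facts do all the work: $\gamma>1\iff\sigma>1\iff 1-\tfrac1\gamma\in(0,1)$, and $\gamma<1\iff\sigma<1\iff 1-\tfrac1\gamma<0$; moreover, since $A\in C^2(S)$ is strictly positive on the compact set $S$, the weight $A^{(\sigma-1)/\gamma}$ is continuous, bounded, and bounded away from $0$, so $J(\mu)$ is well defined (and finite when $1-\tfrac1\gamma\in(0,1)$, by H\"older/Jensen).

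First I would record the algebraic identity that makes everything collapse. Set $Z:=\int_0^n A(j)^{\sigma-1}\,dj$ and $r:=1-\tfrac1\gamma$, so that $\mu^{EQ}=A^{\sigma-1}/Z$ is a strictly positive density. A direct computation of exponents ($\tfrac{\sigma-1}{\gamma}+r(\sigma-1)=\sigma-1$) gives
\[
A(i)^{\frac{\sigma-1}{\gamma}}\,\mu^{EQ}(i)^{r}=Z^{1-r}\,\mu^{EQ}(i),
\qquad\text{hence}\qquad J(\mu^{EQ})=Z^{1-r}.
\]
For an arbitrary admissible density $\mu$ I then factor
\[
J(\mu)=\int_0^n A(i)^{\frac{\sigma-1}{\gamma}}\mu^{EQ}(i)^{r}\Big(\tfrac{\mu(i)}{\mu^{EQ}(i)}\Big)^{r}di
=Z^{1-r}\int_0^n\Big(\tfrac{\mu(i)}{\mu^{EQ}(i)}\Big)^{r}d\mu^{EQ}(i),
\]
where $d\mu^{EQ}$ denotes the probability measure with density $\mu^{EQ}$; note $\int(\mu/\mu^{EQ})\,d\mu^{EQ}=\int\mu\,di=1$.

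The second step is Jensen's inequality for $t\mapsto t^{r}$ against the probability measure $d\mu^{EQ}$. When $\sigma>1$ this map is strictly concave, so $J(\mu)\le Z^{1-r}=J(\mu^{EQ})$; since the exponent $\tfrac{\gamma}{\sigma-1}>0$, the map $x\mapsto x^{\gamma/(\sigma-1)}$ is increasing and therefore $X(\mu)\le X(\mu^{EQ})$. When $\sigma<1$ the map $t\mapsto t^{r}$ is strictly convex (and $J(\mu)=+\infty$, hence $X(\mu)=0$, whenever $\mu$ vanishes on a set of positive measure, so we may assume $\mu>0$ a.e.), whence $J(\mu)\ge Z^{1-r}=J(\mu^{EQ})$; since now $\tfrac{\gamma}{\sigma-1}<0$, the map $x\mapsto x^{\gamma/(\sigma-1)}$ is decreasing and again $X(\mu)\le X(\mu^{EQ})$. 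In either regime, equality in Jensen forces $\mu/\mu^{EQ}$ to be $\mu^{EQ}$-a.e.\ constant, and the constraint $\int\mu=\int\mu^{EQ}=1$ pins that constant to $1$; hence $\mu=\mu^{EQ}$ a.e., which is the asserted uniqueness. Finally $X^{EQ}=X(\mu^{EQ})=(Z^{1-r})^{\gamma/(\sigma-1)}=Z^{1/(\sigma-1)}$ because $(1-r)\tfrac{\gamma}{\sigma-1}=\tfrac1\gamma\cdot\tfrac{\gamma}{\sigma-1}=\tfrac1{\sigma-1}$, matching the value announced in Theorem~\ref{teo:convergence}.

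I do not expect a genuine obstacle here: the computation is short once the right change of variables (normalizing by $\mu^{EQ}$) is found. The two points that require care are purely bookkeeping: (i) in passing from $\sigma>1$ to $\sigma<1$, both the direction of Jensen's inequality and the monotonicity of the outer power $x\mapsto x^{\gamma/(\sigma-1)}$ reverse, and one must check they cancel so that the maximizer is the same closed-form density in both regimes; and (ii) in the $\sigma<1$ case one must handle integrability (densities that are unbounded or vanish) and invoke the strict form of Jensen's inequality to get uniqueness over all non-negative measurable densities, not merely smooth ones. An equivalent but slightly longer route would compute the first variation $\delta J/\delta\mu=\mathrm{const}$, obtain $\mu\propto A^{\sigma-1}$ as the unique critical point, and conclude global optimality from the concavity (resp.\ convexity) of $J$; the weighted-Jensen argument above is preferable because it yields the inequality, the equality case, and uniqueness simultaneously.
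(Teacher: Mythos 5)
Your proof is correct, and it rests on the same essential idea as the paper's: the concavity (for $\sigma>1$) or convexity (for $\sigma<1$) of the power map $t\mapsto t^{1-1/\gamma}$, combined with the fact that $\mu^{EQ}\propto A^{\sigma-1}$ makes the weight $A^{(\sigma-1)/\gamma}(\mu^{EQ})^{-1/\gamma}$ constant. The packaging differs: the paper applies the pointwise tangent-line inequality at $\mu^{EQ}(i)$ and then observes that the first-order term integrates to zero under the mass constraint, whereas you normalize by $\mu^{EQ}$ and invoke Jensen's inequality with respect to the probability measure $d\mu^{EQ}$, using $\int(\mu/\mu^{EQ})\,d\mu^{EQ}=1$. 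These are two equivalent expressions of the same convexity argument, and your exponent bookkeeping (e.g.\ $(1-\beta)(\sigma-1)=\gamma-1$ and the cancellation giving $X^{EQ}=Z^{1/(\sigma-1)}$) checks out. One point in your favour: by invoking the strict form of Jensen you obtain the equality case, and hence the uniqueness of the maximizer among all non-negative densities with unit mass, explicitly; the paper's written proof establishes only the inequality and leaves uniqueness implicit. Your handling of the degenerate $\sigma<1$ configurations (densities vanishing on a set of positive measure, giving $J=+\infty$ and $X=0$) is also a legitimate and worthwhile addition, since the theorem is stated over all non-negative measurable $\mu$.
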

\begin{proof}
See Appendix \ref{app:proofs}.
\end{proof}

To summarise, Theorems \ref{teo:convergence} and \ref{th:mu-e-ottimale} together provide a comprehensive characterisation of the dynamics and their efficiency properties. Starting from any initial distribution, the economy converges to a unique long-run competitive equilibrium distribution $\mu^{EQ}$, which arises from decentralised, myopic, profit-seeking behaviour by firms facing reallocation costs.
This equilibrium distribution coincides with the one that maximises aggregate consumption $X$, and hence welfare $U(X)$, production, profits, and wages. However, as we will discuss in Sections \ref{sub:spillovers} and \ref{sub:fixedcost}, the presence of intra-sectoral externalities and/or the introduction of a fixed cost of reallocation can partially disrupt this result. The key effects on equilibrium of assuming perfect labour mobility warrant a dedicated discussion in Section \ref{sec:immobilework}.

\subsection{Extensions}
\label{sec:extensions}

In this section, we extend our model in two main directions. First,  we analyse the role of intra-sectoral externalities, which amplify or mitigate productivity differences through positive or negative spillovers. Then, we introduce a fixed cost of reallocation, which alters sectoral reallocation dynamics and may give rise to multiple long-run equilibria. A third extension, where we consider the case of nonsymmetric preferences and then where demand (and then long-run distribution of firms and profits) varies across sectors, is presented and explored in Appendix \ref{app:non-symmetric}.

Additional extensions, not explored in detail here for reasons of space, include the incorporation of technological progress, more general preferences, and alternative production technologies. In particular, if technology evolves at a constant rate across sectors, i.e. if $A(i,t) = A(i) \exp(gt)$, then all real variables grow at rate $g$, while the qualitative properties of the equilibrium remain unchanged. With respect to preferences, our framework can be extended to accommodate subgroups of goods with varying degrees of substitutability or complementarity. Finally, our results generalise to production technologies that yield a constant factor income share, thereby preserving the model’s core equilibrium structure.

\subsubsection{Intrasectoral externalities}
\label{sub:spillovers}

In this section, we consider the presence of intrasectoral technological externalities. In particular, sectoral productivity $A(i,\mu(i))$ takes the form
\begin{equation}
\label{eq:A-spillovers}
A(i,\mu(i)) = A_0(i) \mu(i)^\eta
\end{equation}
for some real $\eta$ and an exogenous function $A_0$ depending only on $i$. At the sectoral level, $\eta>0$ in Equation \eqref{eq:A-spillovers} implies that a firm’s productivity (and profitability) benefits from the mass of firms operating within its sector. This is consistent with models of knowledge spillovers in production, as in \citet{Romer1986} (see also \citealp{Jovanovic1989,Griliches1992,Romer1990a}).
Applied in a spatial setting, this mechanism underpins the emergence of industrial districts (Marshallian externalities) in the new economic geography literature (see, e.g., \citealp{allen2014trade}).
Conversely, $\eta<0$ implies that a greater sectoral mass of firms negatively affects individual productivity, which is generally explained by the presence of congestion effects and/or negative pecuniary externalities at the sectoral level \citep{fujita_thisse_2002}.
At the aggregate level, $\eta>0$ (resp. $\eta<0$) implies increasing (resp. decreasing) returns to scale in production.

The short-run competitive equilibrium described by Proposition \ref{pr:shor-run-equilibrium} remains the same, except that $A(i)$ is replaced by ${A}_0(i) \mu(i)^\eta$ (Appendix \ref{app:casospillovers} contains all technical details); 
profits are given by:
\begin{equation}
\pi(i) = \left(1 - \beta \right) \frac{ A_0(i)^{ \frac{\sigma-1}{\sigma(1-\beta) + \beta} } \mu(i)^{ \frac{ \eta(\sigma-1) -1 }{\sigma (1-\beta) + \beta} } }{ \displaystyle \int_0^n \left[ A_0(j) \mu(j) ^{1-\beta + \eta} \right]^{ \frac{\sigma - 1}{\sigma(1-\beta) + \beta} } dj }
\label{eq:profitsPresenceExternalities}
\end{equation}
while production and employment are therefore given by: 
\begin{equation}
\label{eq:productionEmploymentExternalities}
Y(i) \equiv L(i) = \frac{\left [ A_0(i)  \mu(i)^{1-\beta +\eta} \right ] ^\frac{\sigma-1}{\sigma\left(1-\beta\right)+\beta} }{ \displaystyle \int_0^n \left[ A_0(j) \mu(j) ^{1-\beta + \eta} \right]^{ \frac{\sigma - 1}{\sigma(1-\beta) + \beta} } dj  };
\end{equation}
with labour productivity still converging to one in each sector. Externalities either exacerbate or smooth production/employment differentials across sectors, depending on their sign, that is, the sign of $\eta$, and on the intensity of goods substitutability, i.e. whether $\sigma$ is greater or less than 1.

The PDE describing the dynamics of firms' distribution is now:
\begin{equation}
\label{eq:PDEintrasectoralExternalities}
\left\{
\begin{array}{l}
\displaystyle 
\partial_t \mu(t, i) = - \left(1 - \beta \right) \left[ \frac{ \partial_i \left( \mu(t, i) \, \partial_i \left( {A}_0(i)^{ \frac{\sigma-1}{\sigma\left(1-\beta\right)+\beta} } \mu(t,i)^{  \frac{ \eta(\sigma-1) -1 }{\sigma (1-\beta) + \beta}} \right) \right) }{  \displaystyle \int_0^n \left[ A_0(j) \mu(j) ^{1-\beta + \eta} \right]^{ \frac{\sigma - 1}{\sigma(1-\beta) + \beta} } dj   } \right] ,\,\,\, (t,i) \in \RR^+ \times S\\[7pt]
\displaystyle \mu(t,0) = \mu(t,n) \\[7pt]
\displaystyle \mu(t,i) = \mu_0(i).
\end{array}
\right.
\end{equation}

In the presence of sectoral externalities, in order to obtain results on the existence, uniqueness, and boundedness of the firms’ distribution as in Theorem \ref{th:esistenzaeunicita} and, consequently, to express PDE \eqref{eq:PDEintrasectoralExternalities} as a gradient flow, the exponent of $\mu(i)$ must be negative, i.e.:
\begin{equation}
\label{eq:condizione-spillovers}
\eta \left(\sigma -1\right) -1 < 0;
\end{equation}
Therefore, if $\sigma>1$, Condition \eqref{eq:condizione-spillovers} requires that $\eta \in \left(-\infty,1/(\sigma-1)\right)$; that is, there exists an upper bound on the intensity of positive spatial spillovers. Conversely, if $\sigma \in (0,1)$, then $\eta \in \left(-1/(1-\sigma),\infty\right)$; in this case, there is an upper bound on the strength of negative externalities. Intuitively, in economies with a high (respectively, low) elasticity of substitution between goods, i.e. $\sigma \gg 1$ (respectively, $\sigma$ close to 0), and strong, positive (respectively, negative) externalities, i.e. $\eta \gg 0$ (respectively, $\eta << 0$), the inflow of firms attracted by higher profits reinforces existing differentials, leading to an increase rather than a decrease in profit dispersion (see Equation \eqref{eq:profitsPresenceExternalities}).
In a continuous-time setting, this disequilibrium dynamic implies that the PDE does not admit an admissible solution, as the distribution diverges in finite time \cite{hadamard1923lectures}.

Once assumed that Condition \eqref{eq:condizione-spillovers} holds, in presence of externalities the functional $\mathcal{F}(\mu)$ reads as 
\begin{multline}
\label{eq:espressionediF-coneta}
\mathcal{F}(\mu)   
= \left( \frac{1-\beta}{1 - \beta + \eta} \right) \log \left( \left\{ \int_0^n \left[ {A}_0(j) \mu(t,j)^{1-\beta+\eta} \right]^{ \frac{\sigma- 1}{\sigma(1-\beta) + \beta} } dj \right\}^{\frac{\sigma(1-\beta) + \beta}{\sigma- 1}} \right) = \\
= \left( \frac{1-\beta}{1 - \beta + \eta} \right) \log X (\mu(t)),
\end{multline}
while the rest of Theorem \ref{th:riscritturaGradientFlow} still holds. In the same respect, we can extend Theorem \ref{teo:convergence} to the presence of intrasectoral externalities, where now:
\begin{equation}
\label{eq:longRunEQintraSpillovers}
\mu^{EQ}(i) = Y^{EQ}(i) = L^{EQ}(i) =\frac{A_0(i)^{\frac{\sigma - 1}{1 - \eta (\sigma - 1)}
}}{\int_0^n {A}_0(j)^{\frac{\sigma - 1}{1 - \eta (\sigma - 1)}
} \, dj}
\end{equation}
and
\begin{equation}
X^{EQ} = \left( \int_0^n {A}_0(j)^{\frac{\sigma - 1}{1 - \eta (\sigma - 1)}
} \, dj \right)^ {\frac{1- \eta(\sigma - 1)}{ \sigma -1}}.
\label{longRunConsumptionExternality}
\end{equation}
Equations \eqref{eq:longRunEQintraSpillovers} and \eqref{longRunConsumptionExternality} illustrate how the presence of externalities affects both the long-run distribution of factors across sectors and the equilibrium levels of consumption. The intuition is that a higher value of $\eta$ corresponds to greater dispersion in factor allocation as well as higher levels of consumption and production. More importantly, under Condition \eqref{eq:condizione-spillovers}, the gradient flow associated with the functional (\ref{eq:espressionediF-coneta}) highlights a result that is not standard in the literature. Specifically, if $\eta>\beta -1$, that is, if externalities are positive or at least not excessively negative, $X(t)$ increases along the trajectory toward the unique long-run equilibrium $\mu^{EQ}$ defined by Equation \eqref{eq:longRunEQintraSpillovers}. Furthermore, it can be shown that $\mu^{EQ}$ maximizes $X(t)$ among all feasible distributions, i.e. \textit{First Welfare Theorem} holds. In other words, the decentralized choices made by firms and workers enable the economy to achieve maximum efficiency in the long run, even in the presence of positive or sufficiently small negative externalities. This finding contrasts with the conventional view in the literature, according to which the presence of externalities undermines the validity of the First Welfare Theorem (see, e.g., \citealp[Ch. 11]{mas1995microeconomic}). This result crucially depends on the firms' simple decision rule, which implicitly allows them to ``internalize'' externalities by using the observed profit rate as their primary decision-making variable. On the opposite, if $\eta < \beta -1 < 0$, that is, if externalities are sufficiently negative, $X(t)$ decreases along the trajectory toward the unique long-run equilibrium and therefore $\mu^{EQ}$ and the corresponding $X^{EQ}$ are not socially optimal. The decentralized equilibrium in this scenario is therefore not socially optimal. The key intuition behind these opposite results is the relationship between $\mu(i)$ and $Y(i)$, which crucially depends on the sign of $1-\beta+\eta$ (Equation \eqref{eq:productionEmploymentExternalities}).

\subsubsection{Fixed costs in the firms' reallocation}
\label{sub:fixedcost}

It is plausible that in the real world, a portion of firms' reallocation costs is independent of the ``intensity'' of reallocation for example due to significant barriers or indivisibilities in sector entry or exit.\footnote{"\cite[p. 385]{eisfeldt2006capital} argue that the empirical observation, namely, that in any given year a substantial share of U.S. firms engage in no capital reallocation, can be explained by the presence of fixed adjustment costs.}
In presence of a fixed cost of reallocation $c_0 > 0$, the problem of firm $i$ can be formulated as:
\[
\max_{v}\left ( \partial_{i}\pi(t,i) \right) v - c(v),
\]  
but, unlike in Section \ref{sub:dynamics}, the cost function for reallocation appears as:  
\begin{equation}
\label{eq:reallocationCostFixedCost}
    c(v) = \left \{
\begin{array}{l}
0 \qquad \text{if }v = 0\\
\frac{1}{2} v^2 + c_0 \qquad \text{if }v \neq 0\\
\end{array}
\right.,
\end{equation}

and therefore the optimal choice is:  
\[
v(t,i) = 
\left \{
\begin{array}{l}
0 \qquad \text{if $\left | \partial_{i}\pi(t,i) \right | \leq \sqrt{2c_0}$}\\
\partial_{i}\pi(t,i)
\qquad \text{otherwise.}\\
\end{array}
\right.
\]  
Consequently, the PDE describing the dynamics of firms reallocation is: 
\begin{align} 
    \label{eq:PDEFixedCostreallocation}
\partial_{t}\mu(t,i) &= 
\left\{
\begin{array}{ll}
0 & \text{ if } \left | \partial_{i}\pi(t,i) \right | \leq \sqrt{2c_0}\\
-(1-\beta) \left[\dfrac{\partial_i \left( \mu(t,i) \partial_i \left( A(i)^{\frac{\sigma -1}{\sigma(1-\beta) + \beta}} \mu(t,i)^{-\frac{1}{\sigma(1-\beta) + \beta}} \right) \right)}{\displaystyle  \int_0^n \left[ A(i) \mu(i)^{1-\beta} \right]^{\frac{\sigma -1}{\sigma(1-\beta) + \beta}} \, di } \right] & \text{otherwise}\\
\end{array}
\right.
\end{align}
The presence of a ``discontinuity'' in the PDE impedes to have a gradient flow representation (and therefore also the JKO scheme) and therefore to reach general conclusions. 
However, our intuition is that multiple (typically infinitely many) long-run competitive equilibria exist when the profit distribution is sufficiently close to the long-run equilibrium of the problem without fixed costs. Yet, when fixed costs are large enough, there may exist long-run equilibria whose pointwise distance from the equilibrium without fixed costs is not bounded, as the following example illustrates.

As an example, consider the simplest economy where $A(i) = \gamma(i) =1$ for all $i$, $\eta=0$ and $n=1$; if $c_0=0$, Theorem \ref{teo:convergence} states that the only equilibrium firms' distribution (also asymptotically stable) for the problem without fixed cost is uniform. Consider instead the case with $c_0>0$ and the following firms' distribution: 
\begin{equation}
\label{eq:distribution-equilibrium-example-2}
\mu(i) := 
\begin{cases} 
\frac{C_{\alpha}}{i^{\alpha}} & \text{for } i \in (0, 1/2] \\
\frac{C_{\alpha}}{(1-i)^{\alpha}} & \text{for } i \in (1/2, 1),\\
\end{cases}
\end{equation}
where $\alpha \in (0,1)$ and $C_\alpha = (1-\alpha)/2^\alpha$ in order to have $\int_0^1 \mu(j) dj =1$. It diverges at $i=0$ (which is identified with $i=1$). In Appendix \ref{app:proofsfixedcosts} we prove that if $\sigma(1-\beta) + \beta<\alpha$, and the fixed cost is sufficiently high, i.e. 
\begin{equation}
\label{eq:condizione-su-c0}
c_0 \geq \left(\dfrac{1}{2}\right) \left\{ \frac{\alpha(1-\beta)(1-\alpha)^{\frac{(\sigma-1)(1-\beta)-1}{\sigma(1-\beta) + \beta}}}{2^{\alpha-1+\frac{\alpha}{\sigma(1-\beta) + \beta}}[(1-\alpha)(\sigma(1-\beta) + \beta)+ \alpha]} \right\}^{2}
\end{equation}
then \eqref{eq:distribution-equilibrium-example-2} is the long-run competitive equilibrium apart from $\mu^{EQ}(i)$. The non-uniformity in the apart from of $\mu^{EQ}(i)$ directly implies inequality in profit rates, causing the equilibrium to be inefficient. The magnitude of this inefficiency is directly related to the heterogeneity of profit rates, i.e. to $\alpha$.

However, a minor role of fixed costs could be explained by the potentially restrictive assumption of comparing the fixed cost, paid once and for all,  with instantaneous profits rather than with the expected stream of future profits generated by operating in the sector. Accounting for future profits introduces an intertemporal dimension to firms' decision-making that is difficult to incorporate explicitly into our framework. However, a plausible justification for using current profits as a reference point is that they may serve as the best available proxy for expected future profit flows.
Specifically, consider a discrete-time model with static expectations for profits, i.e. $\pi^a(t,i)=\pi(0,i) \; \forall t >0$. Under this assumption, the expected profits in sector $i$ are given by $\sum_{t=0}^{\infty} (1-r)^t \pi^a(t,i)= \pi(0,i)/r$, where $r$ is the intertemporal discount (real interest) rate.
A apart from of profits is stable if, for every $i$ and $j$,
\[
 \left| \frac{\pi(0,i)}{r} - \frac{\pi(0,j)}{r} \right| < |i-j|^2 +c_0,
\]
which is implied by the stronger constraint:
\[
\max_{i,j} \left| \frac{\pi(0,i)}{r} - \frac{\pi(0,j)}{r} \right| < c_0,
\]
i.e.
\[
\max_{i,j} \left| \pi(0,i) - \pi(0,j) \right| < r c_0.
\]
Therefore, from an intertemporal perspective, the fixed reallocation cost $c_0$ should be scaled by the discount rate $r$ to enable comparison with actual profits, highlighting that fixed costs might be even less relevant for firms' reallocation decisions than initially suggested. Specifically, recalling that $c_0$ represents the fixed cost of moving one unit of capital between sectors, a value of $c_0=0.2$ corresponds to a reallocation cost equivalent to 20\% in terms of instantaneous profit rate differences (i.e., the alternative sector must offer profit rates at least 20\% higher). However, once adjusted by a discount rate $r=10\%$, the effective threshold considered by firms becomes just $0.2 \times 0.1=2\%$. Consequently, profit rate differences smaller than 2\% would be insufficient to incentivize capital reallocation, implying that, in long-run equilibrium, profit rates across sectors could differ by up to 2\%.
To conclude, in Section \ref{sec:backData} we will discuss empirical evidence suggesting that fixed costs play, at most, a minor role in the overall reallocation process, at least at the sectoral level.

\section{The economy with immobile labour}  
\label{sec:immobilework}

Figure \ref{fig:labourProductivitycrossSectoralDistribution} shows that the cross‑sector apart from of labour productivity remains non‑uniform and strongly persistent over a five‑year window, suggesting that sectoral wages do not converge in that timeframe. Meanwhile, a substantial body of literature, notably \citet{lee2006intersectoral}, documents significant and persistent inter‑sectoral \textit{labour} reallocation costs, driven by geographic frictions, institutional rigidities, and skill‑mismatch barriers. Both evidence point a large inertia in labour reallocation.
In this section, we therefore examine the polar opposite of the perfectly mobile‑labour case, namely an economy with completely immobile labour, in which workers are assumed to remain locked into specific sectors and are unable to respond to wage differentials by switching employment.

We denote the sectoral apart from of labour by $L(i)$, representing the total mass of workers available in sector $i$ fixed over time. Wages are therefore heterogeneous across sectors and determined by the equality between supply and demand of labour, as stated in Proposition \ref{pr:shor-run-equilibrium-immobilework} reporting the characteristics of the short-run competitive equilibrium.

\begin{Proposition}[The short-run competitive equilibrium with immobile labour]
\label{pr:shor-run-equilibrium-immobilework}
Let $L(i)$ be the stock of labour in sector $i$ and set as the numeraire of economy the aggregate value of production $Y$, i.e. $Y = 1$; then, in
the short-run equilibrium with immobile labour:\begin{eqnarray}
w(i) = \beta \left\{ \dfrac{A_0(i)^{1-\frac{1}{\sigma}} L(i)^{\frac{\sigma(\beta-1) -\beta }{\sigma}} \mu(i)^{(1 + \eta -\beta)(\frac{\sigma - 1}{\sigma})} }{\displaystyle \int_0^n \left[A_0(j)L(j)^\beta\mu(j)^{1+\eta -\beta}\right]^{\frac{\sigma - 1}{\sigma}} dj }\right\},
\end{eqnarray}
\begin{equation}
p(i) =  \frac{\left[ A_0(i)L(i)^\beta\mu(i)^{1+\eta -\beta}\right]^{-\frac{1}{\sigma}}}{\displaystyle \int_0^n \left[A_0(j)L(j)^\beta\mu(j)^{1+\eta -\beta}\right]^{\frac{\sigma - 1}{\sigma}} dj}
\end{equation}
\begin{eqnarray}
\label{eq:profitsCapitalStockLabourImmobility}
\pi(i) = (1-\beta) \left\{\dfrac{\left[A_0(i) L(i)^{\beta}\right]^{ \frac{\sigma - 1}{\sigma}} \mu(i)^{ -\frac{ \left(\beta -\eta  \right) \left(\sigma -1 \right) + 1}{\sigma} }} {\displaystyle \int_0^n \left[A_0(j)L(j)^\beta\mu(j)^{1+\eta -\beta}\right]^{\frac{\sigma - 1}{\sigma}} dj } \right\}
\end{eqnarray}
\begin{equation}
\label{eq:YimmobileLabour}
Y(i) = \dfrac{\left[A_0(i) L(i)^{\beta}\mu(i)^{1 + \eta -\beta}\right]^{\frac{\sigma-1}{\sigma}}   }{\displaystyle \int_0^n \left[A_0(j)L(j)^\beta\mu(j)^{1+\eta -\beta}\right]^{\frac{\sigma-1}{\sigma}} dj }.
\end{equation}
\begin{equation}
P  = \left(\int_0^n \left[ A_0(i)L(i)^\beta\mu(i)^{1+\eta -\beta}\right]^{\frac{\sigma-1}{\sigma}} di \right)^{\frac{\sigma}{1-\sigma}}
\end{equation}
and
\begin{equation}
    \Pi = 1-\beta
\end{equation}
\end{Proposition}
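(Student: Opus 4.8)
The plan is to rerun the argument behind Proposition~\ref{pr:shor-run-equilibrium}, with two modifications: the common-wage condition \eqref{eq:wagesEqualAcrossEconomyEquilibrium} is dropped in favour of sector-by-sector labour-market clearing, and (to match the statement) productivity is taken in the externality form $A(i)=A_0(i)\mu(i)^\eta$ of \eqref{eq:A-spillovers}. First I would use immobility: the mass $L(i)$ of workers in sector $i$ is exogenously fixed and shared among the $\mu(i)$ firms there, so labour-market clearing pins down $l(i)=L(i)/\mu(i)$. Plugging this and $A=A_0\mu^\eta$ into the technology \eqref{eq:productionFunctionGoodi} gives per-firm output $q(i)=A_0(i)L(i)^\beta\mu(i)^{\eta-\beta}$ and total sectoral physical output $Q(i):=\mu(i)q(i)=A_0(i)L(i)^\beta\mu(i)^{1+\eta-\beta}$; write $B(i):=A_0(i)L(i)^\beta\mu(i)^{1+\eta-\beta}$.

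Next I would close the goods markets. Imposing $Q(i)=x(i)$ with the CES demand \eqref{eq:demandGoodi} and the numeraire $Y=1$ gives $p(i)=B(i)^{-1/\sigma}P^{(\sigma-1)/\sigma}$. Substituting into the price-index definition \eqref{eq:priceIndex} yields a scalar fixed-point equation for $P$ whose $P$-exponents collapse, via $(1-\sigma)+(1-\sigma)^2/\sigma=(1-\sigma)/\sigma$, to give $P=\left(\int_0^n B(i)^{(\sigma-1)/\sigma}\,di\right)^{\sigma/(1-\sigma)}$; back-substitution then delivers $p(i)=B(i)^{-1/\sigma}\left(\int_0^n B(j)^{(\sigma-1)/\sigma}\,dj\right)^{-1}$ and the sectoral revenue $Y(i):=p(i)Q(i)=B(i)^{(\sigma-1)/\sigma}\left(\int_0^n B(j)^{(\sigma-1)/\sigma}\,dj\right)^{-1}$, i.e.\ \eqref{eq:YimmobileLabour} after expanding $B$ (and $\int_0^n Y(i)\,di=1$ checks consistency with the numeraire).

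Finally I would read off $w(i)$, $\pi(i)$, $\Pi$ from firm optimisation: the first-order condition is \eqref{eq:firmLabourDemand} with $A_0(i)\mu(i)^\eta$ in place of $A(i)$, equivalently $w(i)l(i)=\beta p(i)q(i)$, whence $\pi(i)=p(i)q(i)-w(i)l(i)=(1-\beta)p(i)q(i)=(1-\beta)Y(i)/\mu(i)$ and $w(i)=\beta p(i)q(i)/l(i)=\beta Y(i)/L(i)$. Expanding $Y(i)$, $L(i)$, $\mu(i)$ and condensing the powers — collecting the $\mu(i)$-exponent of $\pi(i)$ into $-[(\beta-\eta)(\sigma-1)+1]/\sigma$ and the $L(i)$-exponent of $w(i)$ into $[\sigma(\beta-1)-\beta]/\sigma$ — reproduces the stated formulas; and $\Pi=\int_0^n\pi(i)\mu(i)\,di=(1-\beta)\int_0^n Y(i)\,di=1-\beta$, while aggregate wages $\int_0^n w(i)L(i)\,di=\beta\int_0^n Y(i)\,di=\beta$, so that wages plus profits sum to $Y=1$ (Walras' law), confirming the numeraire.

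The whole computation is the bookkeeping of Proposition~\ref{pr:shor-run-equilibrium} rerun with $A\mapsto A_0\mu^\eta$ and with the endogenous labour demand replaced by the identity $l(i)=L(i)/\mu(i)$; the only genuinely delicate parts, to be done carefully in the appendix, are the exponent arithmetic in solving the CES fixed point for $P$ and in condensing the $\mu(i)$- and $L(i)$-powers of $\pi(i)$ and $w(i)$, where the combination $(\beta-\eta)(\sigma-1)+1$ (and its sign) must be tracked.
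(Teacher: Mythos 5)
Your proposal is correct and follows essentially the same route as the paper's proof: sector-by-sector labour-market clearing pins down $l(i)=L(i)/\mu(i)$, goods-market clearing with the CES demand gives $p(i)$ in terms of $P$, the price-index fixed point collapses via the exponent identity $(1-\sigma)+(1-\sigma)^2/\sigma=(1-\sigma)/\sigma$, and the Cobb--Douglas shares then deliver $w(i)$, $\pi(i)$ and $\Pi=1-\beta$. The only cosmetic difference is that the paper solves the labour-demand first-order condition explicitly for $w(i)$ before turning to the goods market, whereas you read the wage off the share identity $w(i)l(i)=\beta p(i)q(i)$ afterwards; the two are algebraically identical, and your exponent bookkeeping (including $-[(\beta-\eta)(\sigma-1)+1]/\sigma$ for the $\mu$-power of $\pi$) checks out.
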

\begin{proof}
See Appendix \ref{app:proofsimmobilework}.
\end{proof}
Both wages and labour productivity appears now sector specific, with the later given by:
\begin{equation}
\label{eq:labourProductivity}
\frac{Y(i)}{L(i)} = \dfrac{A_0(i)^{\frac{\sigma-1}{\sigma}} L(i)^{\frac{\sigma(\beta-1) - \beta}{\sigma}} \mu(i)^{\frac{(1 + \eta -\beta)(\sigma-1)}{\sigma}} }{\displaystyle \int_0^n \left[A_0(j)L(j)^\beta\mu(j)^{1+\eta -\beta}\right]^{\frac{\sigma-1}{\sigma}} dj },
\end{equation}
and both results decreasing in $L(i)$ as expected, while the relationship with technology and the mass of firm crucially depends on $\sigma$. 

Following the same steps of Section \ref{sub:dynamics}, we can get the PDE describing the dynamics of firms' distribution: 
\begin{equation}
\label{eq:stato-di-nuovoImmobileSpillover}
\begin{array}{l}
\displaystyle 
\partial_{t}\mu(t,i) = 	- \left(1-\beta \right) \frac{\partial_{i} \left ( \mu(t,i) \partial_{i} \left (\left[A_0(i) L(i)^{\beta}\right]^{\frac{\sigma - 1}{\sigma}} \mu(t,i)^{ -\frac{ \left(\beta -\eta  \right) \left(\sigma -1 \right) + 1}{\sigma} } \right ) \right )}{\displaystyle \int_0^n \left[A_0(j)L(j)^\beta\mu(t,j)^{1+\eta -\beta}\right]^{\frac{\sigma - 1}{\sigma}} dj } 
\end{array}
\end{equation}
Under sectoral externalities and complete labour immobility, ensuring existence, uniqueness, and boundedness of the industry‑wide firms' distribution, as established in Theorem \ref{th:esistenzaeunicita}, and thus enabling the PDE \eqref{eq:stato-di-nuovoImmobileSpillover} to be formulated as a gradient flow, requires the exponent of $\mu(i)$ to be negative, i.e.:
\begin{equation}
\label{eq:conditionGradientFlowImmobilelabour}
   - \left(\beta -\eta  \right) \left(\sigma -1 \right) - 1 = \eta\left(\sigma -1 \right) - \beta\left(\sigma -1 \right) - 1  < 0.
\end{equation}
When no externalities are present, i.e. $\eta=0$, Condition \eqref{eq:conditionGradientFlowImmobilelabour} always holds as in the baseline economy with mobile labour; instead, for $\eta \neq 0$, if $\sigma  > 1$, Condition \eqref{eq:conditionGradientFlowImmobilelabour} holds when $\eta \in \left( - \infty, \beta + 1/\left(\sigma -1 \right) \right)$, while, if $\sigma \in (0,1)$, when $\eta \in \left( \beta + 1/\left(\sigma -1\right), \infty \right)$. The explanation for these upper and lower bounds strictly follows the one discussed in Section \ref{sub:spillovers}. Assumed that Condition \ref{eq:conditionGradientFlowImmobilelabour} holds, Theorem \ref{th:riscritturaGradientFlow}, where the functional $\mathcal{F}(\mu)$ is now given by:
\begin{multline}
\label{eq:longRunEQintraSpilloversImmobileLabour}
\mathcal{F}(\mu) =  \left(\dfrac{1-\beta}{1+\eta-\beta} \right) \log \left( \left\{ \int_0^n \left[ A_0(i)L(i)^\beta\mu(i)^{1+\eta -\beta}\right]^{\frac{\sigma-1}{\sigma}} di \right\}^{\frac{\sigma}{\sigma-1}} \right) =\\
= \left(\dfrac{1-\beta}{1+\eta-\beta} \right) \log X (\mu(t));
\end{multline}
maintains its validity (see Theorem \ref{th:riscritturaGradientFlowImmobileSpillover} in Appendix \ref{app:proofsimmobilework}). At the same time, Theorem \ref{teo:convergence}, appropriately reformulated, still holds, and in the long-run competitive equilibrium we have:
\begin{equation}
\label{eq:muEQImmobileLabour}
\mu^{EQ}(i) = Y^{EQ}(i) = \frac{\left[A_0(i)L(i)^\beta\right]^{\frac{\sigma - 1}{(\beta - \eta)\left(\sigma - 1\right) + 1}
}}{\displaystyle \int_0^n \left[A_0(j)L(j)^\beta\right]^{\frac{\sigma - 1}{(\beta - \eta)\left(\sigma - 1\right) + 1}
} \, dj}.
\end{equation}
and
\begin{equation}
\label{eq:labourProductivityEQImmobileLabour}
\left(\frac{Y(i)}{L(i)}\right)^{EQ} = \frac{ A_0(i)^{\frac{\sigma-1}{(\beta - \eta)(\sigma-1) + 1}} \cdot L(i)^{\frac{\eta(\sigma-1) - 1}{(\beta - \eta)(\sigma-1) + 1}} }{\displaystyle \int_0^n \left[A_0(j)L(j)^\beta\right]^{\frac{\sigma-1}{(\beta - \eta)(\sigma-1) + 1}} dj}
\end{equation}
The allocation of labour therefore affect the firms and value of production distribution also in the long run; in particular, $\mu^{EQ}(i)$ and $Y^{EQ}(i)$ are positively (negatively) related to the mass of workers $L(i)$ if $\sigma >1 \,(<1)$. The relationship between the labour productivity $\left(\frac{Y(i)}{L(i)}\right)^{EQ}$ (and wages) and $L(i)$ is more complex: if $\sigma <1$, it is negative; while if $\sigma >1$, for $\eta \in (1/(\sigma-1), \beta + 1/(\sigma-1))$ it is positive, while it is negative for $\eta \in (-\infty, 1/(\sigma-1))$.

We conclude by noting that labour immobility results in a loss of efficiency. In particular, measuring this loss by the difference of consumption in the long-run equilibrium, Proposition \ref{prop:functionalLossEfficiency} states that the economy with immobile labour reaches the most efficient equilibrium when labour is allocated as in the mobile labour economy.

\begin{Proposition}
\label{prop:functionalLossEfficiency}
Let $\Delta X^{EQ}$ denote the loss in efficiency caused by labour immobility in the long-run competitive equilibrium, measured by the resulting reduction in consumption:
\begin{equation}
\label{eq:functionalLossEfficiency}
    \Delta X^{EQ} \equiv \left( \int_0^n {A}_0(i)^{\frac{\sigma - 1}{1 - \eta (\sigma - 1)}} \, di \right)^ {\frac{1- \eta(\sigma - 1)}{ \sigma -1}} - \left(\int_0^n \left[A_0(i)L(i)^\beta\right]^{\frac{\sigma-1}{1+(\beta-\eta)(\sigma-1)}}\,\mathrm{d}i\right)^{\frac{1+(\beta-\eta)(\sigma-1) }{\sigma-1}}.
\end{equation}
Then, subject to the constraints $L(i) \geq 0$ and $\int_0^n L(i) \di = 1$, $\Delta X^{EQ}$ is always non-negative, with a minimum value of 0 corresponding to the labour allocation of the mobile labour economy. 
\end{Proposition}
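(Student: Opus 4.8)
The plan is to reduce the inequality $\Delta X^{EQ}\ge 0$ to a single (reverse) Hölder inequality on the simplex $\{L\ge 0:\ \int_0^n L\,di=1\}$. Put $\theta:=\frac{\sigma-1}{1+(\beta-\eta)(\sigma-1)}$, so that the subtracted term in \eqref{eq:functionalLossEfficiency} equals $G(L)^{1/\theta}$ with $G(L):=\int_0^n A_0(i)^{\theta}L(i)^{\beta\theta}\,di$, while the first term is precisely the benchmark consumption $X^{EQ}$ of \eqref{longRunConsumptionExternality}. First I would record the two algebraic identities $1-\beta\theta=\frac{1-\eta(\sigma-1)}{1+(\beta-\eta)(\sigma-1)}$ and $\frac{\theta}{1-\beta\theta}=\frac{\sigma-1}{1-\eta(\sigma-1)}$, together with the sign facts implied by the standing conditions \eqref{eq:condizione-spillovers} and \eqref{eq:conditionGradientFlowImmobilelabour}: both $1-\eta(\sigma-1)$ and $1+(\beta-\eta)(\sigma-1)$ are positive, hence $1-\beta\theta>0$, $\theta$ has the sign of $\sigma-1$, and therefore $\beta\theta\in(0,1)$ when $\sigma>1$ whereas $\beta\theta<0$ when $\sigma<1$. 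Since $A_0$ is strictly positive and continuous on the compact set $S$, every power of $A_0$ is integrable.

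Next, fix an admissible $L$ and apply Hölder to the pair $A_0^{\theta},\,L^{\beta\theta}$ with the conjugate exponents $p=\frac{1}{1-\beta\theta}$ and $q=\frac{1}{\beta\theta}$ (conjugacy follows from the first identity). If $\sigma>1$ then $p,q>1$, and ordinary Hölder, using $\theta p=\frac{\sigma-1}{1-\eta(\sigma-1)}$ and $\int_0^n L\,di=1$, gives $G(L)\le\big(\int_0^n A_0^{(\sigma-1)/(1-\eta(\sigma-1))}\,di\big)^{1-\beta\theta}$; raising to the positive power $1/\theta$ and using the second identity yields $G(L)^{1/\theta}\le X^{EQ}$. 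If $\sigma<1$ then $p\in(0,1)$ and $q<0$, so (after noting that $G(L)^{1/\theta}=0\le X^{EQ}$ trivially whenever $L$ vanishes on a set of positive measure) the reverse Hölder inequality gives the opposite bound $G(L)\ge\big(\int_0^n A_0^{(\sigma-1)/(1-\eta(\sigma-1))}\,di\big)^{1-\beta\theta}$; but now $1/\theta<0$, so raising to this power reverses the inequality and again produces $G(L)^{1/\theta}\le X^{EQ}$. In both regimes $\Delta X^{EQ}=X^{EQ}-G(L)^{1/\theta}\ge 0$.

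Finally, the equality case of (reverse) Hölder forces $A_0^{\theta p}$ and $L^{\beta\theta q}$ to be proportional, i.e.\ $L(i)\propto A_0(i)^{\theta/(1-\beta\theta)}=A_0(i)^{(\sigma-1)/(1-\eta(\sigma-1))}$; combined with $\int_0^n L\,di=1$ this is exactly the mobile-labour allocation $\mu^{EQ}$ of \eqref{eq:longRunEQintraSpillovers}, at which $G(L)^{1/\theta}=X^{EQ}$ by the same exponent arithmetic, so $\Delta X^{EQ}=0$ holds precisely there. I expect the only genuinely delicate part to be the sign bookkeeping: one must check that $\beta\theta$ and $1/\theta$ fall in the ranges that make the chosen form of Hölder applicable and orient the final inequality the right way, which is exactly where \eqref{eq:condizione-spillovers}--\eqref{eq:conditionGradientFlowImmobilelabour} enter. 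If the reverse-Hölder step is felt to be unnatural, one can instead argue by convexity: the Lagrange first-order condition for extremizing $G$ over the simplex already gives $L\propto A_0^{(\sigma-1)/(1-\eta(\sigma-1))}$, and $L\mapsto G(L)$ is strictly concave when $\beta\theta\in(0,1)$ (so one maximizes $G$, then applies the increasing map $x\mapsto x^{1/\theta}$) and strictly convex when $\beta\theta<0$ (so one minimizes $G$, then applies the decreasing map $x\mapsto x^{1/\theta}$); strict positivity of $A_0$ keeps the critical point interior, making it the unique global optimizer in either case.
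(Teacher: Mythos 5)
Your proof is correct and follows essentially the same route as the paper: both reduce the claim to a Hölder inequality with conjugate exponents $p=\frac{1}{1-\beta\theta}$ (the paper's $\gamma_A/\gamma_B$) and $q=\frac{1}{\beta\theta}$, invoke the constraint $\int_0^n L\,di=1$, handle the $\sigma<1$ regime via reverse Hölder with the sign of the outer exponent flipping the inequality back, and extract the equality case $L\propto A_0^{(\sigma-1)/(1-\eta(\sigma-1))}$ as the mobile-labour allocation. Your extra care with the degenerate case where $L$ vanishes on a positive-measure set, and the alternative convexity/Lagrange argument, are sound additions but not needed to match the paper's proof.
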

\begin{proof}
See Appendix \ref{app:proofsimmobilework}.
\end{proof}

Taken together, Theorem \ref{th:mu-e-ottimale} and Proposition \ref{prop:functionalLossEfficiency} suggest that the most efficient factor allocation (of capital and labour) is reached in the long-run competitive equilibrium of the mobile economy. 

\section{A primer quantitative evaluation of the model}
In Section \ref{sec:backData}, we use a large sample of firms to estimate the model’s parameters. These parameter values are then employed in Section \ref{sec:numericalExperiments} to conduct numerical experiments.

\subsection{Back to the data}
\label{sec:backData}
This section presents an empirical evaluation of the model. We analyse a sample of over 750,000 firms from 14 EU countries, drawn from the ORBIS dataset for 2018 and 2023, and aggregated into 680 four-digit NACE sectors (ranging from 01 to 82). In particular, Section \ref{sec:converegenceROE} provides empirical evidence supporting the fundamental mechanism behind the reallocation of firms across sectors, specifically the convergence dynamics of sectoral profit rates, as measured by \textit{returns on equity} (ROE) between 2018 and 2023. Moreover, it suggests that any potential fixed costs of reallocation do not have a significant impact on the overall dynamics. Section \ref{sec:parametersModelCalibration}, on the other hand, discusses a method for assessing the degree of labour mobility, concluding that, at least within a 5-year horizon, labour appears to be relatively immobile. Taking the extreme case of labour immobility, we then proceed to calculate the model parameters.
A significant limitation of the ORBIS data is the challenge of obtaining a reliable proxy for $\mu(i)$, the stock of physical capital allocated to sector $i$.\footnote{The only variable in the ORBIS dataset that could potentially serve as a proxy for $\mu$ is the total assets reported on balance sheets; however, this includes both physical capital and financial assets, the mix of which is highly heterogeneous across sectors. Since the empirical analysis is not central to the scope of this paper, and for the sake of brevity, we discuss in Appendix \ref{app:identificationTechnologicalProgress} a procedure for estimating $\mu(i)$, based on some assumptions regarding the distribution of $A(i)$.} To address this issue, we focus in the empirical analysis below on ROE as a proxy for profit rates $\pi$, the \textit{value of production} as a proxy for $Y$, and \textit{employment} as a proxy for $L$.

\subsubsection{Convergence in the sectoral profit rates} \label{sec:converegenceROE}

Theorem \ref{teo:convergence}, one of our key results, demonstrates the convergence to the long-run competitive equilibrium, where sectoral profit rates are equalised. This result holds even in the presence of non-symmetric preferences, intrasectoral externalities and labour immobility, but does not apply when there are fixed costs of reallocation. A suitable (econometric) model for the dynamics of profit rates is, therefore, the following:
\begin{equation}
\pi(i,t) - \pi(i,0) \equiv  \Delta \pi(i,t) = m(\pi(i,0)) + \epsilon(i), \text{ with } m^\prime < 0, \; m(-\infty)>0 \text{ and } m(\infty)<0,
\label{eq:dynamicsProfitsRate}
\end{equation}
and $\epsilon(i)$ is random disturbance term, assumed to be i.i.d. with zero mean and finite variance. Equation \eqref{eq:dynamicsProfitsRate} implies a global monotonic convergence to a unique long-term equilibrium profit rate $\pi(i)^{EQ}=\bar{\pi}^{EQ}$ defined by $m(\bar{\pi}^{EQ})=0$. Theorem \ref{teo:convergence}, moreover, states that $\beta=1-\bar{\pi}^{EQ}$. Figure \ref{fig:ROEregressionToTheMean} reports the Nadaraya-Watson estimation of $m(\cdot)$ of Equation \eqref{eq:dynamicsProfitsRate} with its 95\% confidence bands (dashed lines), where profit rate $\pi(i)$ is proxied by the average return on equity $ROE(i)$ in the sector $i$.
\begin{figure}[!htbp]
\centering
\caption{The change of ROE between 2018 and 2023 versus ROE in 2018 for 680 four-digit NACE sectors (from 01 to 82). The ticked lines is a nonparametric estimate by the Nadaraya-Watson estimation with its 95\% confidence bands (dashed lines). The dotted line is the linear regression reported in Table \ref{tab:convergenceInSectoralROE}.}
\label{fig:ROEregressionToTheMean}
\vspace{-0.5cm}
\includegraphics[width=0.5\textwidth]{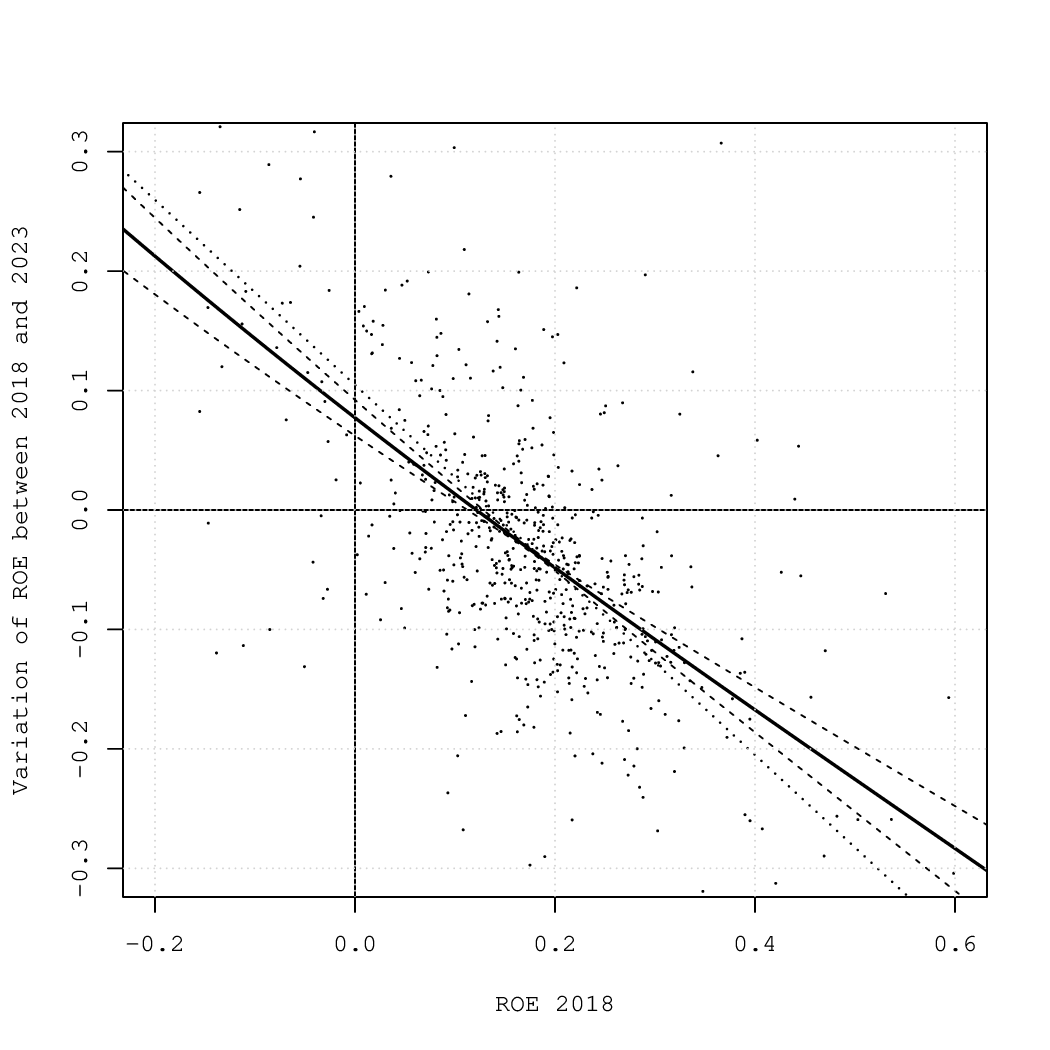}
\begin{flushleft}
    {\small \textit{Source: our elaboration on ORBIS data.}}
\end{flushleft}
\end{figure}
Figure \ref{fig:ROEregressionToTheMean} reports the The Nadaraya-Watson estimation of $m(\cdot)$ with its 95\% confidence bands (dashed lines). The estimation suggests the existence of a linear relationship, the estimate of which is, in turn, given by:\footnote{Table \ref{tab:convergenceInSectoralROE} in Appendix \ref{app:tableEstimation} reports the standard diagnostics of the estimate.} 
\begin{equation}
    \Delta ROE(i,2023) = \underset{(0.008)}{0.105} + \underset{(0.023)}{-0.775} ROE(i,2018);
     \label{eq:dynamicsProfitsRateParametricModel}
\end{equation}
the outcome of this estimate is reported by a dotted line in Figure \ref{fig:ROEregressionToTheMean}. From \eqref{eq:dynamicsProfitsRateParametricModel} we have an estimate of $\bar{\pi}^*= 0.105/0.775 \approx 0.14$, i.e. of $\beta=0.86$. A profit rate of 14\% is plausible taking into account that is \textit{nominal} (in that period inflation rate was about 2\% in EU) and is \textit{net of depreciation rate} of physical capital. A share of labour on total production equal to 0.86 would be instead overestimated, being in EU in 2022 the ratio between domestic production and compensation of employees around 0.24.\footnote{\label{foot:19}See \url{https://ec.europa.eu/eurostat/statistics-explained/index.php?title=Supply_and_use_tables_for_the_European_Union_and_the_euro_area}.} However, we should consider that in our framework two key factors are missing and both inflate the estimate of $\beta$, the \textit{depreciation rate} of physical capital, which amounts to about 10\% \citep{gupta2014efficiency}, and the \textit{intermediate consumption}, which absorbs about 54\% of domestic production in EU in 2022\footnote{As in footnote \ref{foot:19}.} Given a production including the intermediate consumption $c$ as $q=c^\alpha l^\beta$ (the unit of capital is set to one), and assuming that the shares of each input are decided by their marginal productivity, we have that the net profit rate is given by $(1-\alpha-\beta) - \delta$. where $\delta$ is the depreciation rate. Taken $\alpha=0.54$, $\delta=0.1$ and a net profit rate in equilibrium equal to 0.14, then $\beta=0.22$, which is very close to 0.24. The presence of intermediate production does not affect our theoretical results, as it appears as a rescaling of total production. The same holds for the depreciation rate, which does not affect the differential sectoral profit rates. 
Finally, the smoothness of convergence path to the long-run equilibrium in Figure \ref{fig:ROEregressionToTheMean} suggests that the possible presence of fixed costs of reallocation has not significant impact on the sectoral firm reallocation, at least at aggregate level. 

\subsubsection{Divergence in the sectoral labour productivity}

When we focus on labour productivity, Figure \ref{fig:labourProductivityRegressionToTheMean} reveals a divergent trend, i.e. labour productivity across sectors diverges over time, rather than regressing to the mean. Given the close relationship between productivity and wages, this lack of convergence suggests that labour reallocation, which wage differentials should primarily drive, is either absent or proceeds at a much slower pace than capital reallocation.
\begin{figure}[!htbp]
\centering
\caption{The change of labour productivity between 2018 and 2023 versus labour productivity in 2018 for 680 four-digit NACE sectors (from 01 to 82). The ticked lines is a nonparametric estimate by the Nadaraya-Watson estimation with its 95\% confidence bands (dashed lines). The dotted line is the linear regression reported in Table \ref{tab:convergenceLabourProductivity}.}
\label{fig:labourProductivityRegressionToTheMean}
\vspace{-0.5cm}
\includegraphics[width=0.5\textwidth]{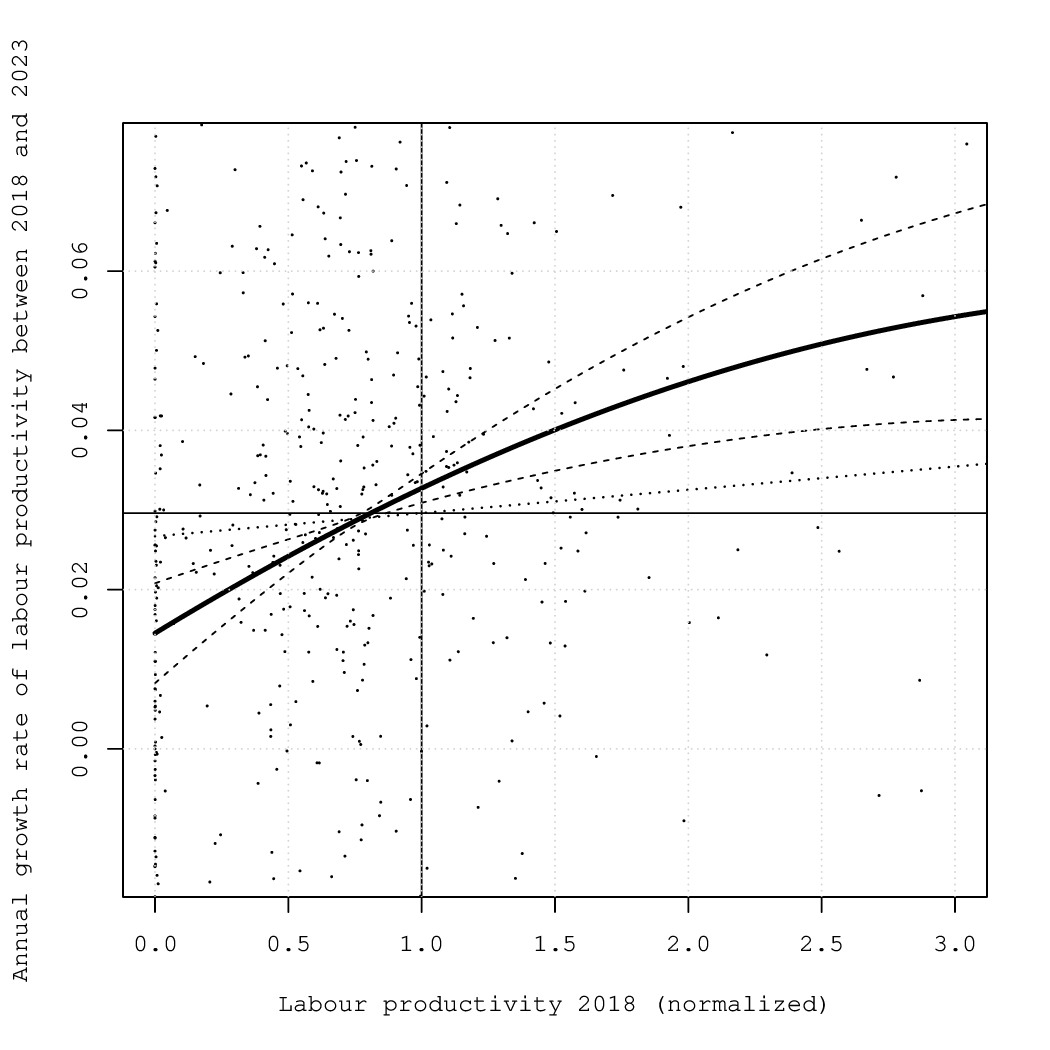}
\begin{flushleft}
    {\small \textit{Source: our elaboration on ORBIS data.}}
\end{flushleft}
\end{figure}
The estimated of the linear econometric model of convergence in labour productivity, i.e.:\footnote{Table \ref{tab:convergenceLabourProductivity} in Appendix \ref{app:tableEstimation} reports the standard diagnostics of the estimate.}
\begin{equation}
    \Delta \log(\text{Labour Productivity}(i,2023)) = \underset{(0.003)}{0.027} + \underset{(0.001)}{0.003} \text{Labour Productivity}(i,2018);
    \label{eq:dynamicsLabourProductivityParametricModel}
\end{equation}
confirms the (marginally significant) divergence in sectoral labour productivity.

\subsubsection{Labour mobility and the calibration of model parameters}
\label{sec:parametersModelCalibration}

Taken together, the evidence to date indicates that the immobile‑labour model provides the closest approximation for our sample of firms. A direct comparison of Propositions \ref{pr:shor-run-equilibrium} and \ref{pr:shor-run-equilibrium-immobilework} makes clear that the critical difference between the two models lies in the functional relationship between sector‑level output value and employment. With perfect mobile labour, the value of production $Y(i)$ and employment $L(i)$ are the same and profit rate does not play any role; on the contrary, with labour immobility, assuming $\eta=0$, the relationship is less than proportional, to become indeterminate when $\eta \neq 0$. In particular, from Equation \eqref{eq:relYpi} in Appendix \ref{app:proofsimmobilework}, we get: 
\begin{equation}
    \log \left( \dfrac{Y(i,t)}{Y(i,0)}\right) = g(i) +  \left[\frac{\beta(\sigma-1)}{1-(\eta-\beta)(\sigma-1)} \right]\log \left( \dfrac{L(i,t)}{L(i,0)}\right) + \left[\frac{(1+\eta-\beta)(\sigma-1)}{1-(\eta-\beta)(\sigma-1)} \right] \log \left( \dfrac{\pi(i,t)}{\pi(i,0)}\right),
    \label{eq:changeValueProductionImmobileLabour}
\end{equation}
where $g(i)$ reflects the time change both of $A(i)$ and $P$.\footnote{Equation \eqref{eq:changeValueProductionImmobileLabour} is robust to the presence of heterogeneous preferences. The expression for $g(i)$ in this case takes into account also the possible time change of the factor $\gamma(i)$.}
The estimate of the corresponding econometric model of
Equation \eqref{eq:changeValueProductionImmobileLabour} is:\footnote{Table \ref{tab:estimateValueProductionVsEmploymentProfitRate} in Appendix \ref{app:tableEstimation} reports the standard diagnostics of the estimate.}
\begin{equation}
    \log\left( \dfrac{Y(i,t)}{Y(i,0)}\right) = \underset{(0.011)}{0.243} +  \underset{(0.027)}{0.214} \log \left( \dfrac{L(i,t)}{L(i,0)}\right) +  \underset{(0.012)}{0.040} \log \left( \dfrac{ROE(i,t)}{ROE(i,0)}\right),
\label{eq:changeValueProductionImmobileLabourEconometricModel}
\end{equation}
where the random disturbance of the econometric model takes the possible (exogenous) heterogeneity in $g(i)$, while any other aggregate change is in the estimated constant.

The estimate in \eqref{eq:changeValueProductionImmobileLabourEconometricModel} point to a some degree of labour immobility, given the low elasticity of the value of production to employment (should be 1 in the case of perfect mobility) and the significant elasticity to profit rates (should be 0 in the case of perfect mobility).
Assuming the extreme scenario of labour immobility, the estimate in \eqref{eq:changeValueProductionImmobileLabourEconometricModel}, together with their theoretical counterparts in Equation \eqref{eq:changeValueProductionImmobileLabour}, allow to calculate the values of $\eta$ and $\sigma$, i.e. 
\begin{eqnarray}
    \eta &=& \beta \left(1+\rho_2/\rho_1\right) -1 \approx 0.0222, \text{ and }\\
    \sigma &=& 1+ \rho_1/\left[ \beta \left(1+\rho_2\right) - \rho_1 \right] \approx 1.3144, 
\end{eqnarray}
taking into account a value of $\beta=0.86$ estimated in Section \ref{sec:converegenceROE}.
Therefore, our economy appears to be characterised by positive but small intrasectoral externalities, while goods and services are moderately substitutable for one another.

\subsection{Numerical exploration}
\label{sec:numericalExperiments}

In this section, we present a series of numerical explorations, using as a baseline the labour immobile economy, with values of $\beta = 0.86$, $\sigma = 1.3144$, and $\eta = 0.0222$, as calculated in Section \ref{sec:backData}. 
We assume that $L(i) =  [\sin (2\pi(i -1/4)) + 3]/2$ (in short, the labour distribution on $S$ presents a peak at $i=0.5$), no heterogeneity in technological progress, i.e., $A(i) \equiv 1$, and symmetric preferences. Furthermore, we assume $n=1$, meaning that the circle $S$, which represents the variety space of goods, has unit length. Given our assumptions, in the long-run competitive equilibrium, the profit rate $\pi^{EQ}(i) \equiv 0.14$, while the firms' density $\mu^{EQ}(i) \equiv 1$.

\subsubsection{The dynamics of baseline economy}

Figures \ref{fig:convergenceFirmsDistribution} and \ref{fig:convergenceValueProduction} display the dynamics of firms' distribution and the value of production in our baseline economy with immobile labour, where firms' distribution evolves according to PDE \eqref{eq:stato-di-nuovoImmobileSpillover}. The $x$-axis represents the space of goods, while the $y$-axis reports firm density $\mu$ and value of production $Y$. Since sectors are arranged on a circle, the values at positions 0 and 1 coincide.
The initial firms' distribution at $t = 0$ features a minimum density at position 0.5, mirroring the inverse of the labour distribution. The distribution of value of production in Figure \ref{fig:convergenceValueProduction} more closely resembles the labour distribution (Proposition \ref{pr:shor-run-equilibrium-immobilework}).
The evolution of firm density, driven by the distribution of profit rates (Figure \ref{fig:convergenceROE}), appears non-linear across certain types of goods. In some sectors, the mass of firms initially increases (or decreases) before reversing direction, ultimately resulting in a decline (or rise). As expected, in the long-run competitive equilibrium, the distribution of firms tends to adjust to the distribution of labour.
\begin{figure}[!htbp]
\centering
\caption{The dynamics of firms and production in the baseline economy}
\label{fig:baselineCaseFirmProductionDistribution}

\begin{subfigure}[t]{0.35\textwidth}
\centering
\includegraphics[width=\linewidth]{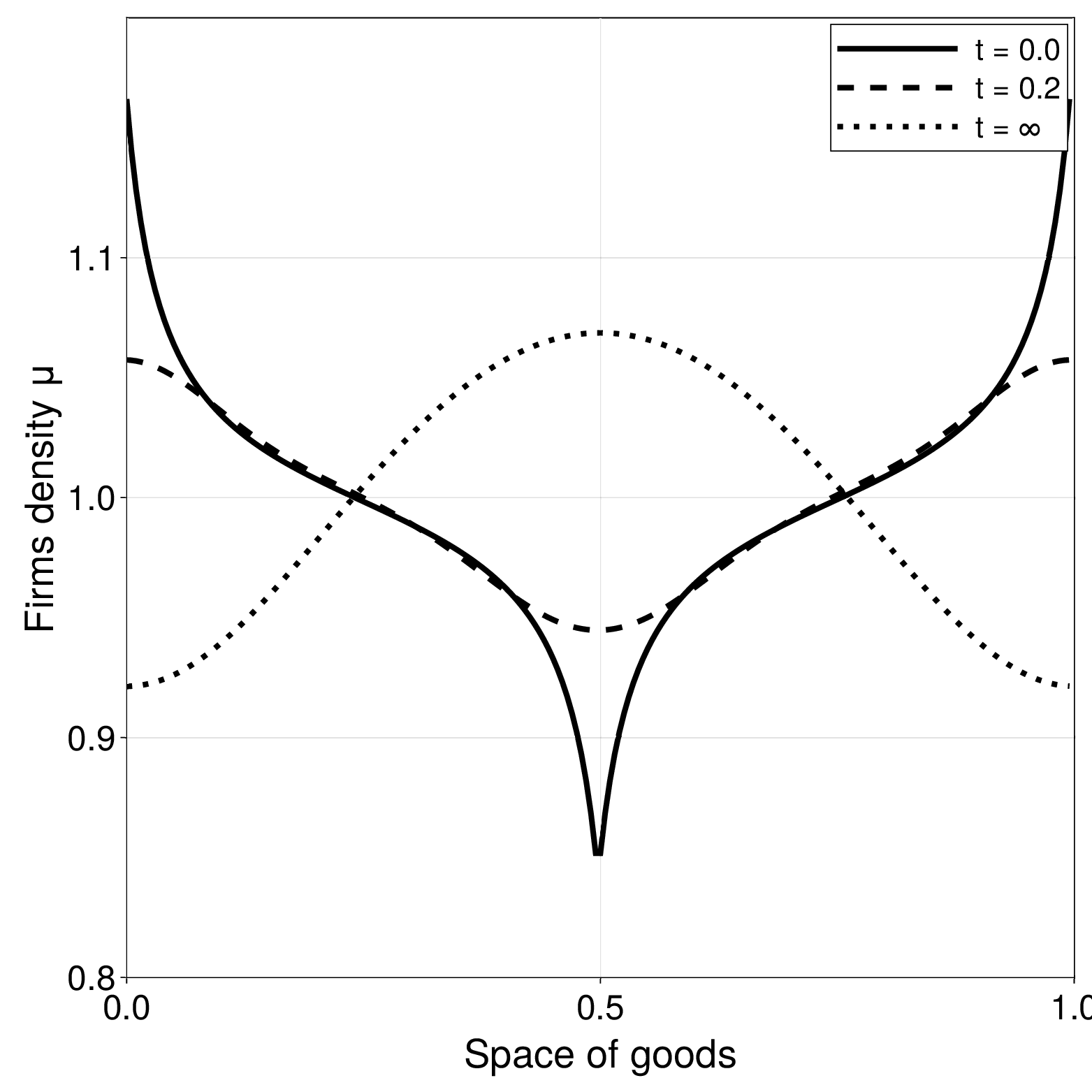}
\captionsetup{width=1.3\linewidth} 
\caption{The convergence to the long-run competitive equilibrium of the firms' distribution. On $x$-axis are reported the space of goods, while on $y$-axis the firm density.}
\label{fig:convergenceFirmsDistribution}
\end{subfigure}
\hspace{0.15\textwidth} 
\begin{subfigure}[t]{0.35\textwidth}
\centering
\includegraphics[width=\linewidth]{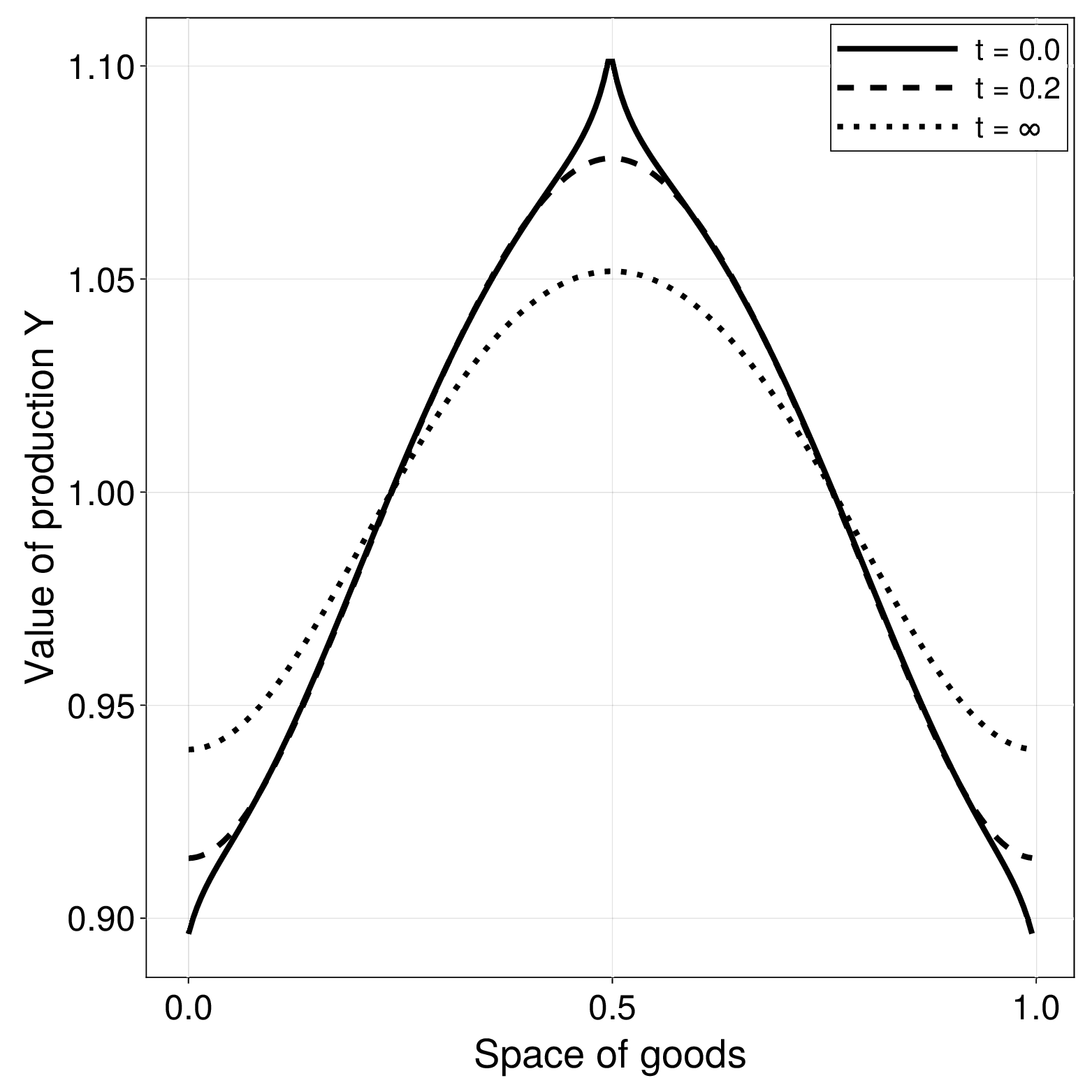}
\captionsetup{width=1.3\linewidth}
\caption{The convergence to the long-run competitive equilibrium of the production distribution. On $x$-axis are reported the space of goods, while on $y$-axis the value of production.}
\label{fig:convergenceValueProduction}
\end{subfigure}
\end{figure}

Turning our attention to profit rates, these converge non-monotonically towards a uniform equilibrium of $\pi(i)= 0.14 \; \forall i$, as expected (Figure \ref{fig:convergenceROE}). These short-run non-linear dynamics are also evident in the relationship between the initial profit rate and its subsequent change over time across different intervals, as illustrated in Figure \ref{fig:ROERegressionToTheMean}. However, when observed over a sufficiently long time horizon, the relationship becomes linear, resembling the pattern estimated in Figure \ref{fig:ROEregressionToTheMean}.
By contrast, the distribution of labour productivity remains relatively stable over time (Figure \ref{fig:convergenceLabourProductivity}), yet exhibits a clear divergent pattern similar to that observed in our firm-level sample (compare Figures \ref{fig:labourProductivityRegressionToTheMean} and \ref{fig:LabourProductivityRegressionToTheMean}).

\begin{figure}[!htbp]
\centering
\caption{The dynamics of profit rates and labour productivity in the baseline economy}
\label{fig:baselineCaseConvergenceROELabourProductivity}

\begin{subfigure}[t]{0.35\textwidth}
\centering
\includegraphics[width=\linewidth]{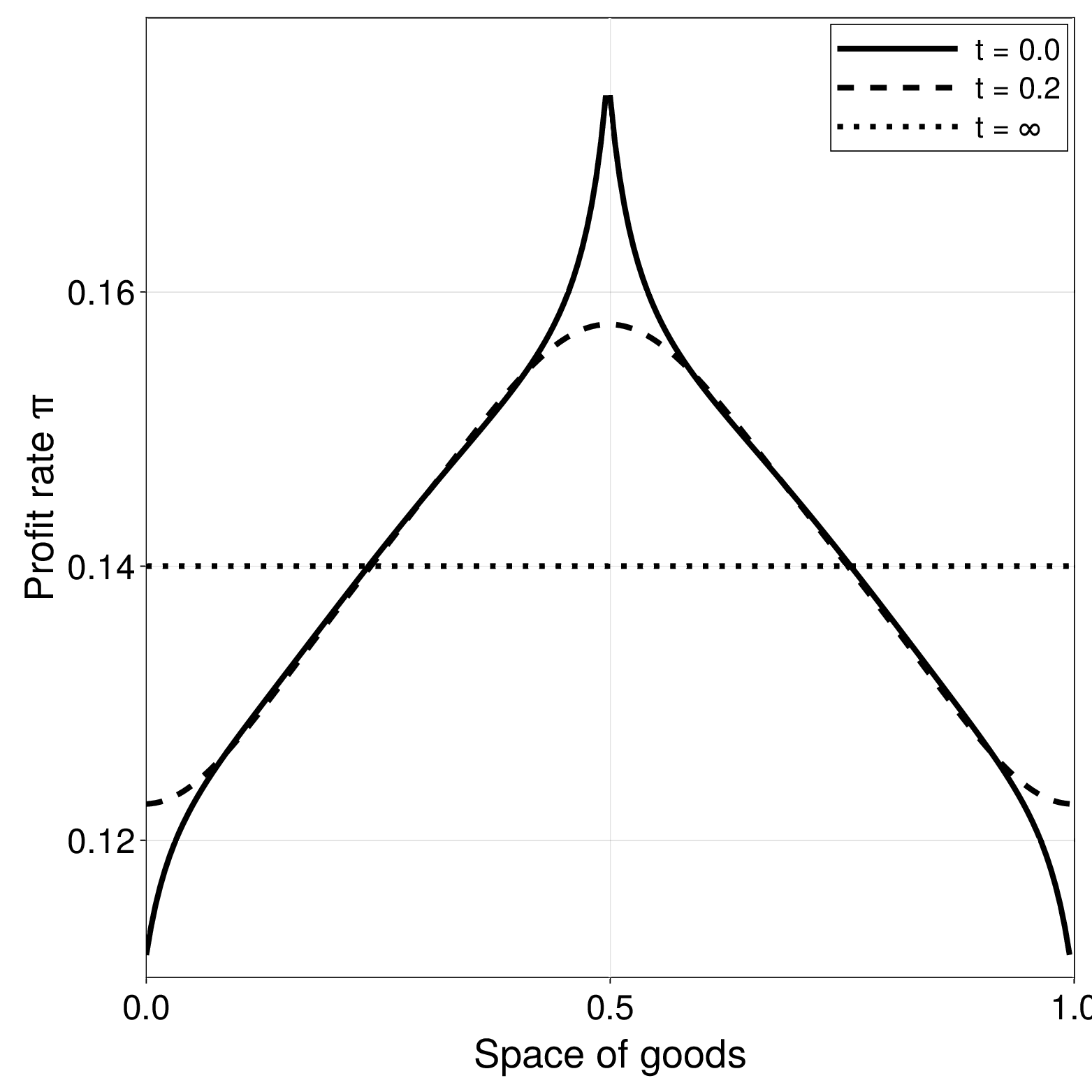}
\captionsetup{width=1.3\linewidth}
\caption{The convergence to the long-run competitive equilibrium of the profit rate distribution. On $x$-axis are reported the space of goods, while on $y$-axis the profit rates.}
\label{fig:convergenceROE}
\end{subfigure}
\hspace{0.15\textwidth}
\begin{subfigure}[t]{0.35\textwidth}
\centering
\includegraphics[width=\linewidth]{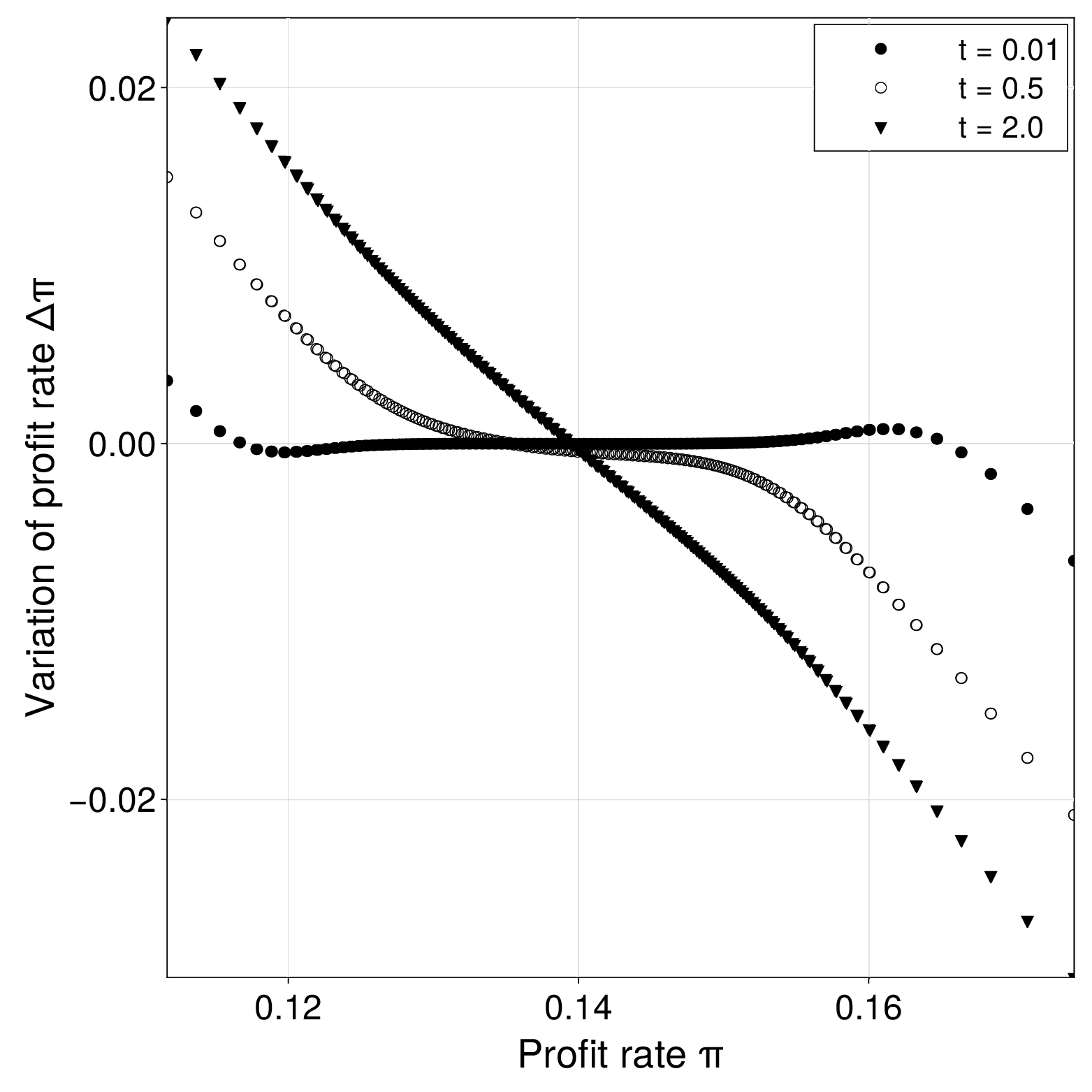}
\captionsetup{width=1.3\linewidth}
\caption{The relationship between the sectoral profit rates at the time 0 ($x$-axis) and their change over time for different time intervals (0.01, 0.5, 2) of the same ($y$-axis).}
\label{fig:ROERegressionToTheMean}
\end{subfigure}

\vspace{0.5cm} 

\begin{subfigure}[t]{0.35\textwidth}
\centering
\includegraphics[width=\linewidth]{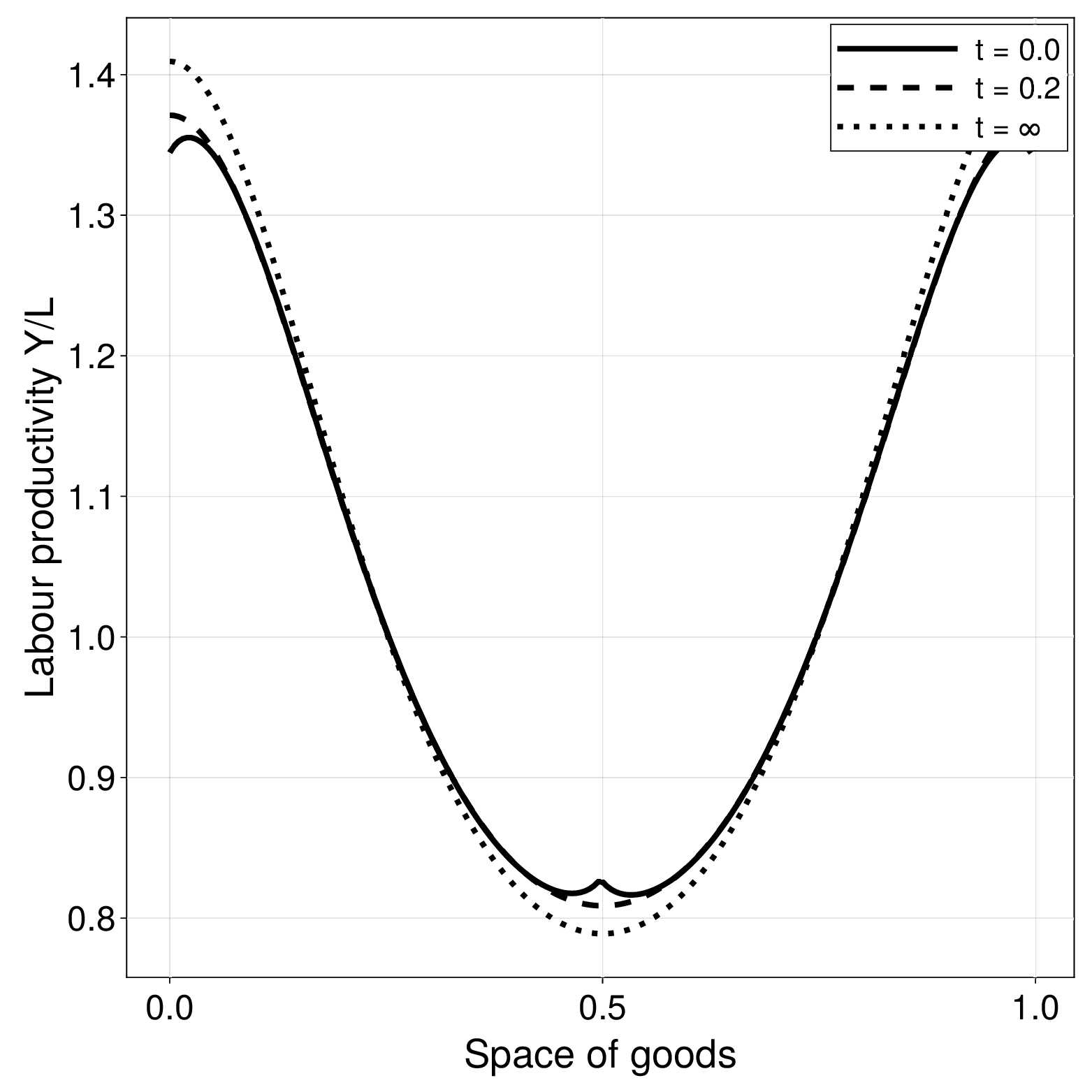}
\captionsetup{width=1.3\linewidth}
\caption{The convergence to the long-run competitive equilibrium of the labour productivity distribution. On $x$-axis are reported the space of goods, while on $y$-axis the profit rates.}
\label{fig:convergenceLabourProductivity}
\end{subfigure}
\hspace{0.15\textwidth}
\begin{subfigure}[t]{0.35\textwidth}
\centering
\includegraphics[width=\linewidth]{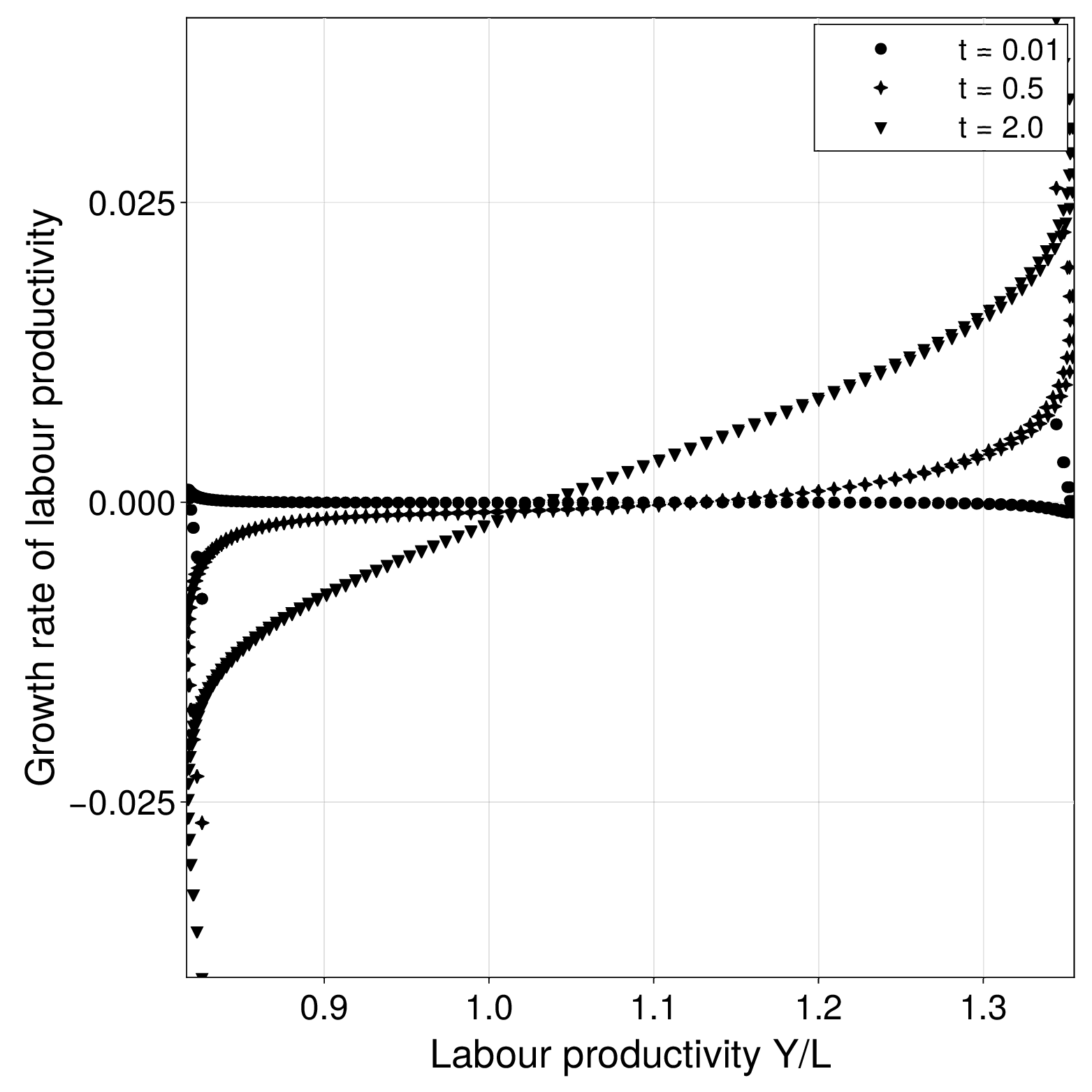}
\captionsetup{width=1.3\linewidth}
\caption{The relationship between the sectoral labour productivities at the time 0 ($x$-axis) and their change over time of the same ($y$-axis).}
\label{fig:LabourProductivityRegressionToTheMean}
\end{subfigure}

\end{figure}

\subsubsection{The convergence to the long-run competitive equilibrium}

Figure \ref{fig:speedOfConvergenceLongRunEquilibrium} illustrates the varying speed of convergence to the long-run distribution of firms, $\mu^{EQ}$, as defined in Equation \eqref{eq:muEQImmobileLabour}. This also implicitly confirms that the economy is converging toward the long-run competitive equilibrium. The $x$-axis reports time periods, while the $y$-axis shows the $L^2$ distance, which serves as a summary statistic indicating how far the economy is from equilibrium at each point in time.\footnote{The $L^2$ distance is defined as the square root of the integral of the squared differences between the current distribution of firms and the long-run equilibrium distribution.}
The convergence pattern appears to be exponential over time. It is faster for lower values of $\beta$ and slower for higher values of $\eta$.
\begin{figure}[!htbp]
\centering
\caption{The convergence to the long-run equilibrium for different configuration of parameters.}
\label{fig:convergenceLongRunEquilibrium}

\begin{subfigure}[t]{0.35\textwidth}
\centering
\includegraphics[width=\linewidth]{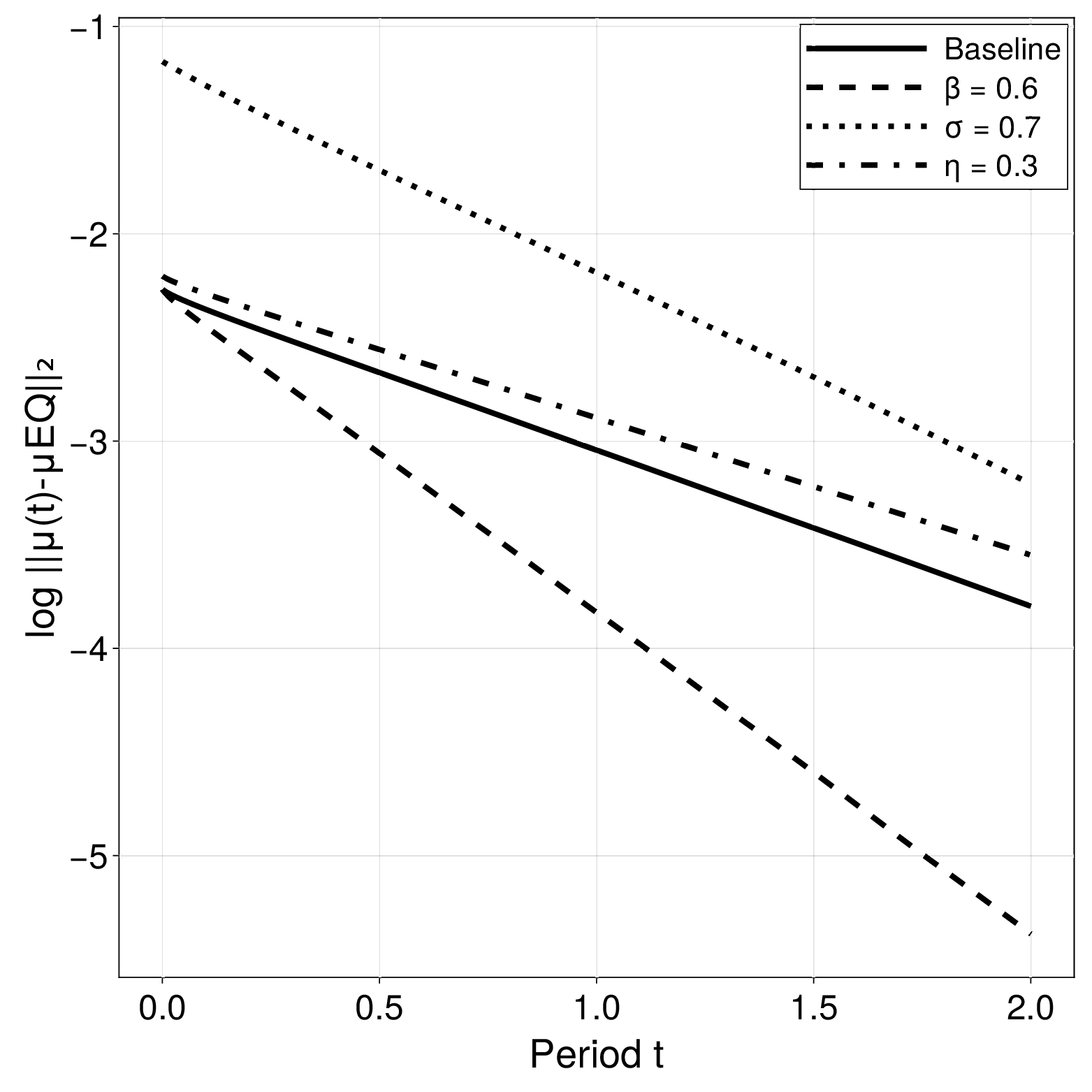}
\captionsetup{width=1.3\linewidth}
\caption{The speed of convergence to the long-run competitive equilibrium for different values of the parameters $\beta$, $\sigma$, and $\eta$, relative to the baseline economy. The $x$-axis reports the time period, while the $y$-axis shows the logarithm of the $L^2$ distance, defined as the square root of the integral of the squared differences between the current distribution of firms and the long-run equilibrium distribution.}
\label{fig:speedOfConvergenceLongRunEquilibrium}
\end{subfigure}
\hspace{0.15\textwidth}
\begin{subfigure}[t]{0.35\textwidth}
\centering
\includegraphics[width=\linewidth]{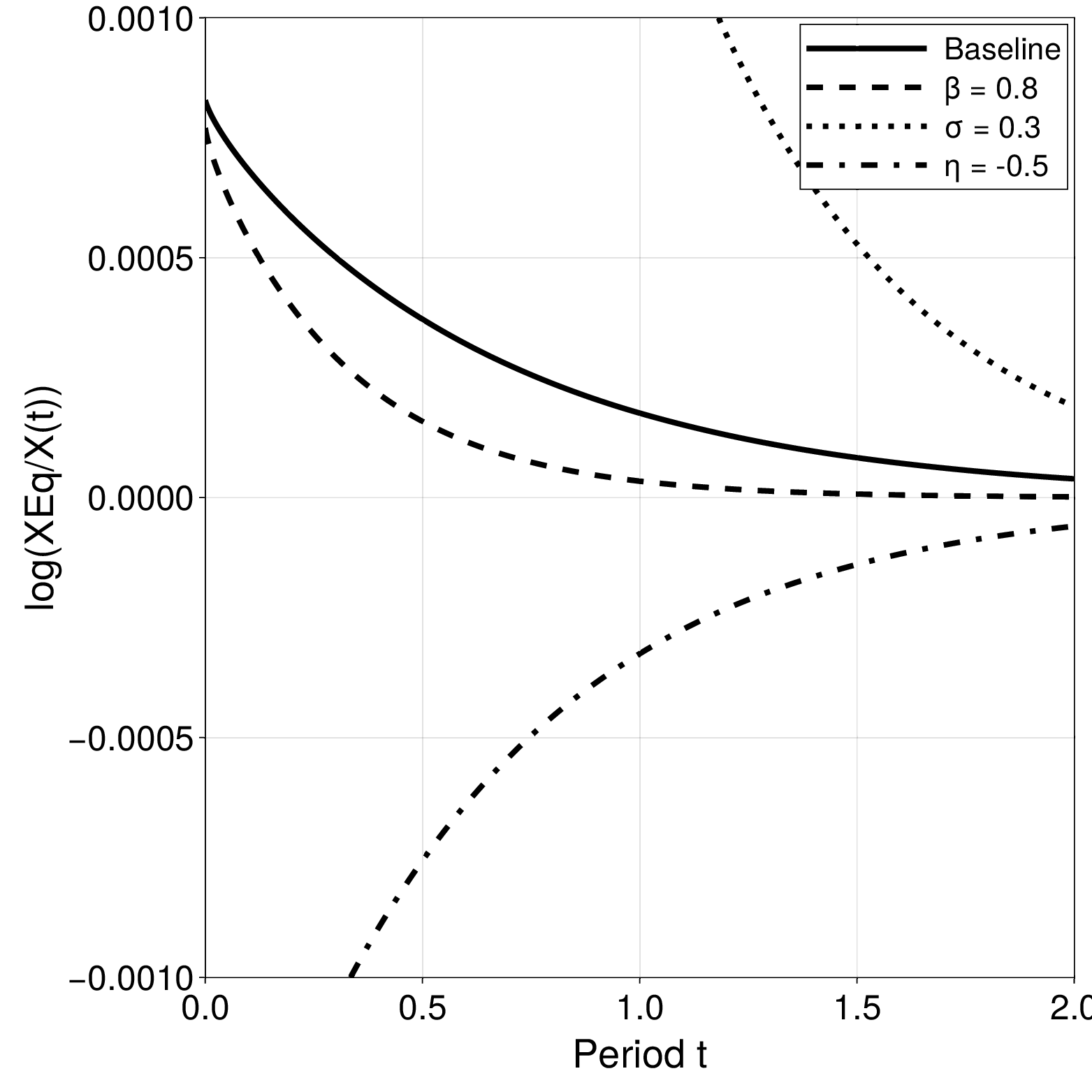}
\captionsetup{width=1.3\linewidth}
\caption{The convergence of aggregate consumption for different values of the parameters $\beta$, $\sigma$, and $\eta$, relative to the baseline economy. The $x$-axis reports the time period, while the $y$-axis shows the logarithm of the ratio between consumption in the long-run competitive equilibrium and that observed in the economy at each period.}
\label{fig:convergenceConsumption}
\end{subfigure}

\end{figure}

Figure \ref{fig:convergenceConsumption} supports the validity of the gradient flow approach in characterizing the dynamics of aggregate consumption $X$. The $x$-axis reports the time period, while the $y$-axis shows the logarithm of the ratio between consumption in the long-run competitive equilibrium and that observed in the economy at each period, $X^{EQ}/X$.
The downward trend observed for the baseline economy confirms that consumption increases over time, as expected, given that Condition \eqref{eq:condizione-spillovers} holds and $\eta > \beta - 1$ (see Equation \eqref{eq:espressionediF-coneta}). In contrast, for the economy with $\eta = -0.5$, we have $\eta < \beta - 1$, and correspondingly, the ratio exhibits an upward trend indicating that, in this case, consumption tends to be minimized in the long-run competitive equilibrium.

\subsubsection{The economy with a fixed cost of reallocation}

Figure \ref{fig:fixedCostreallocation} illustrates the main effects of introducing a fixed cost of reallocation (we set $c_0 = 0.1$ in Equation \eqref{eq:reallocationCostFixedCost}, while keeping all other parameters at their baseline values). A comparison between Figures \ref{fig:convergenceROE} and \ref{fig:Fig6} reveals that the presence of fixed reallocation costs prevents profit rates from converging to a uniform distribution.
In our numerical example, the long-run competitive distribution of profit rates remains markedly dispersed, ranging from 0.12 to 0.165. Figure \ref{fig:Fig5} provides further insight into the dynamics of profit rates: even over an extended time horizon, a wide range of profit rates persists around the long-run equilibrium value of 0.14. This reflects the existence of a plateau where reallocation does not occur, due to insufficient incentives for firms to move across sectors. This is evidenced by the flat slope of the profit rate distribution around 0.14 at the final period of the simulation ($t = 2$) in Figure \ref{fig:Fig6}.
\begin{figure}[!htbp]
\centering
\caption{The economy with fixed costs of reallocation.}
\label{fig:fixedCostreallocation}

\begin{subfigure}[t]{0.35\textwidth}
\centering
\includegraphics[width=\linewidth]{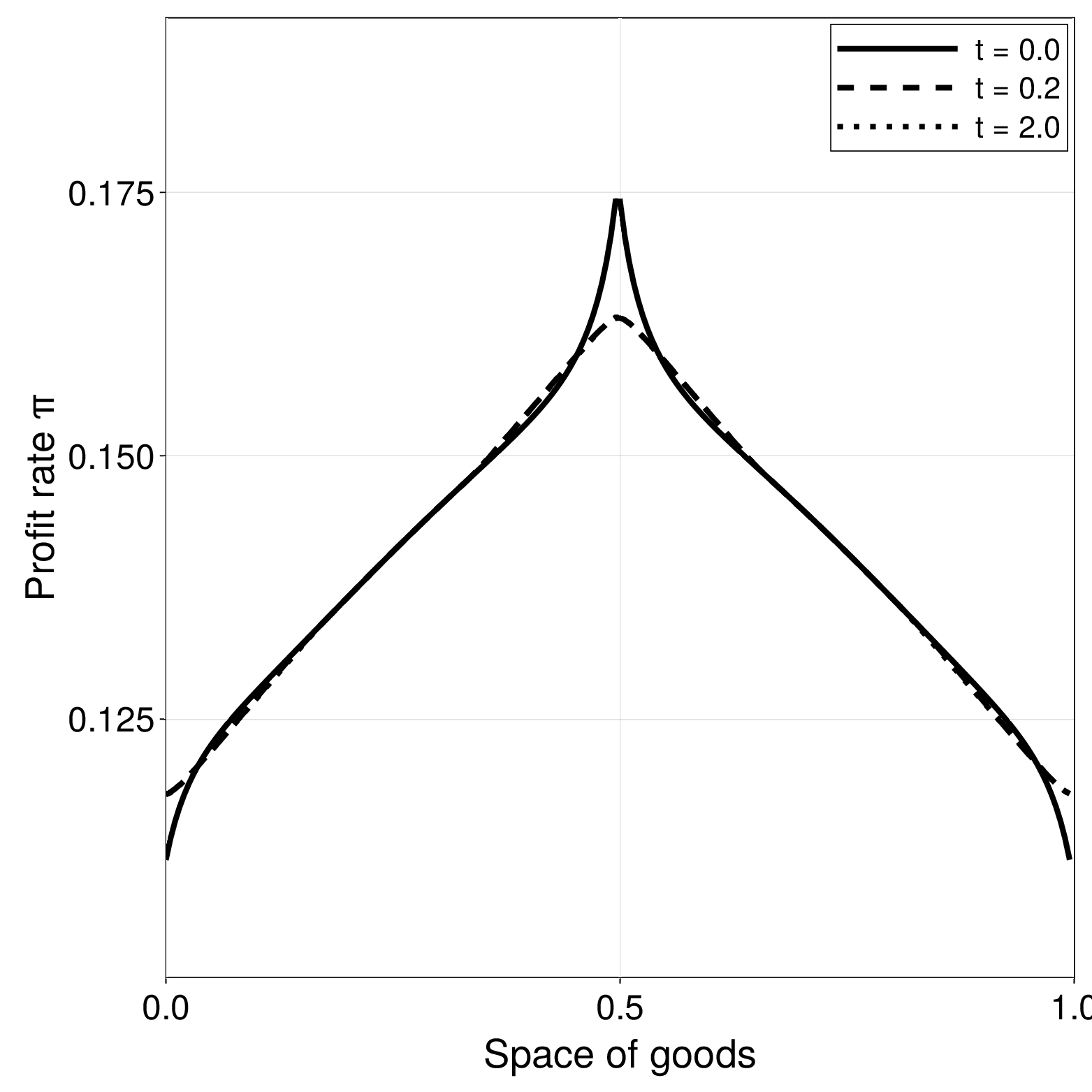}
\captionsetup{width=1.3\linewidth}
\caption{The profit rate distribution at different periods in presence of a fixed cost of reallocation. On $x$-axis are reported the space of goods, while on $y$-axis the profit rates.}
\label{fig:Fig6}
\end{subfigure}
\hspace{0.15\textwidth}
\begin{subfigure}[t]{0.35\textwidth}
\centering
\includegraphics[width=\linewidth]{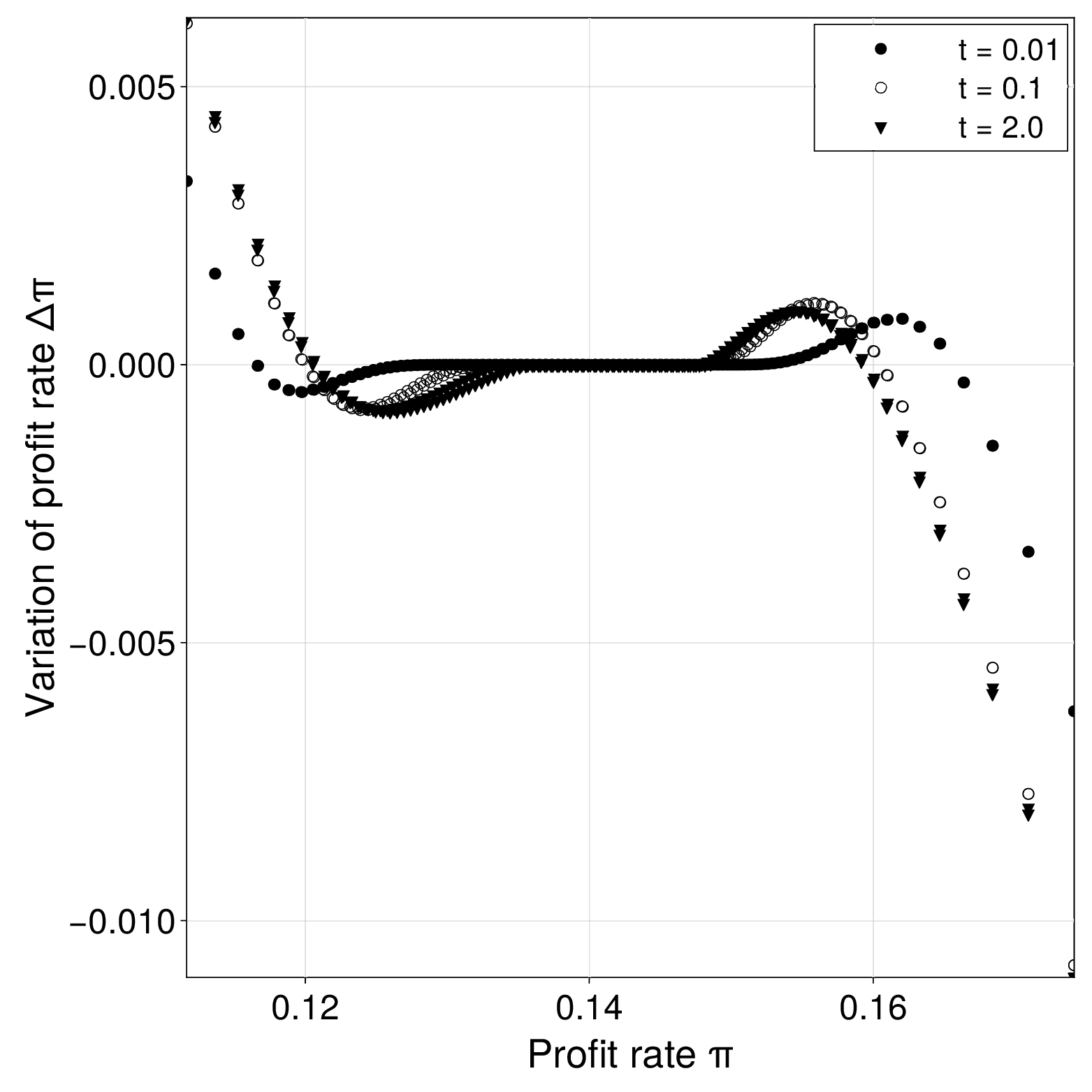}
\captionsetup{width=1.3\linewidth}
\caption{The relationship between the sectoral profit rates at the time 0 ($x$-axis) and their change over time of the same ($y$-axis) in presence of fixed costs of reallocation.}
\label{fig:Fig5}
\end{subfigure}

\end{figure}

\subsubsection{The cost of labour immobility}
\label{sec:costLabourImmobility}
We conclude our numerical experiments by quantifying the efficiency loss, proxied by a lower consumption, resulting from labour immobility  (see Proposition \ref{prop:functionalLossEfficiency}). In Figure \ref{fig:lossEfficiencyLabourImmobility}, this loss is greatest at $t = 0$, as the absence of instantaneous labour reallocation contributes maximally to the differential efficiency between mobile and immobile labour scenarios. The magnitude of this efficiency loss is approximately 2\% relative to the long-run competitive equilibrium with mobile labour (given that $A(i) \equiv 1$, we have $X^{EQ} = 1$, see Proposition \ref{prop:functionalLossEfficiency}).
Although gains from firm reallocation are higher in the less efficient (immobile labour) economy, leading to a gradual reduction in the efficiency gap, the loss remains positive and exceeds 1.92\% even in the long run.
\begin{figure}[!htbp]
\caption{The loss in consumption due to the labour immobility}
\label{fig:lossEfficiencyLabourImmobility}
\centering
\includegraphics[width=0.35\textwidth]{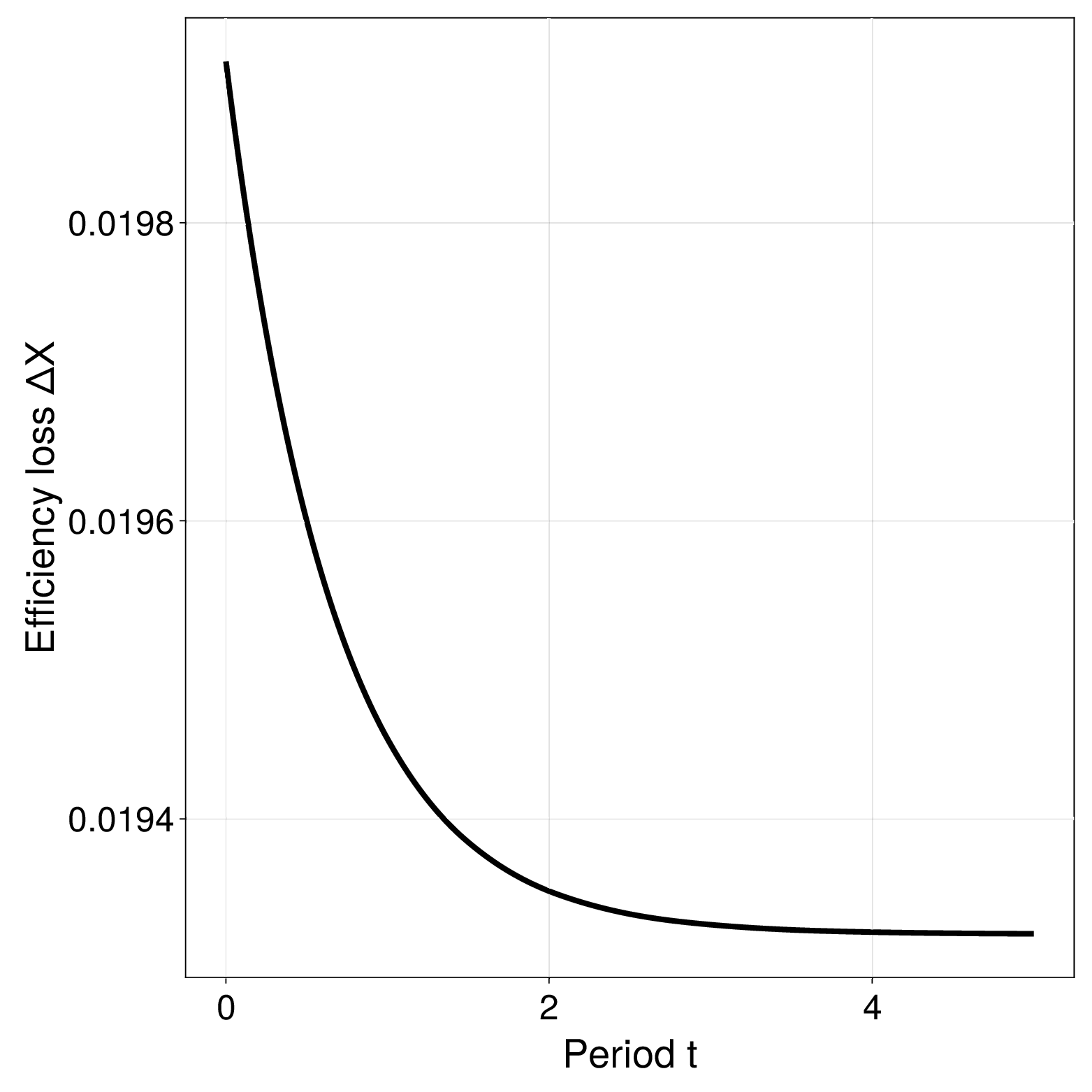}
\caption*{The dynamics of the difference in consumption between an economy with mobile labour and one with immobile labour. The $x$-axis represents the time period, while the $y$-axis shows the difference in consumption $X$ between the two economies.}
\end{figure}

\section{Conclusions \label{sec:conclusions}}

We have developed a robust framework for analysing the sectoral reallocation of firms (capital) in the presence of quadratic reallocation costs, using a gradient flow approach and the JKO scheme to characterise both short- and long-run competitive equilibria. Under mild assumptions, we prove convergence to the long-run competitive equilibrium and establish its efficiency properties.
Our analysis highlights the critical role of the elasticity of substitution between goods and intrasectoral externalities in shaping the distribution of firm sizes across sectors. Notably, we show that even modest fixed reallocation costs can give rise to multiple equilibria and path dependence on initial conditions. Labour mobility emerges as a key factor influencing efficiency in both the short and long run.

\noindent Overall, the model is sufficiently flexible to incorporate various extensions, including non-symmetric preferences over goods, sectoral production externalities, and labour immobility. A first empirical application to a large sample of EU firms offers supporting evidence for the relevance and applicability of our approach.

We are aware of the limitations of our model, many of which suggest promising avenues for future research. One natural extension involves jointly considering capital and labour mobility in the presence of frictions. This joint reallocation introduces non-trivial technical challenges but enables the exploration of richer and more complex dynamics. In particular, complementarities in production between these two factors may lead to the concentration of resources in a few sectors, where returns to individual inputs (wages and profit rates) are dynamically enhanced by continuous inflows of both capital and labour from other sectors \citep{krugman1991increasing, krugman1994complex}.

\noindent Another extension would allow for positive saving and capital accumulation, raising the question of which sectors benefit from a growing capital stock. A key challenge in this context is defining the entry process for new firms and understanding how such dynamics shape sectoral reallocation \citep{melitz2003impact}.

\noindent Related to this is the possibility of introducing a financial sector, which could facilitate capital mobility across sectors without incurring quadratic reallocation costs. In this setting, it would be important to revisit the nature of reallocation costs potentially allowing for capital losses during intersectoral transfers \citep{eisfeldt2006capital} and introducing an opportunity cost of investing in the production of goods.

\noindent Finally, incorporating some form of nominal rigidity in the short-run adjustment of prices and sectoral wages could enhance the model’s relevance for studying medium-run fiscal and monetary policy effects. Such rigidities may generate sluggish adjustment dynamics, making them particularly important for macroeconomic analysis \citep{blanchard2025convergence}.

In summary, our framework provides a solid foundation for analysing sectoral dynamics and offers a versatile tool for studying complex processes of factor reallocation. We are confident that it can serve as a basis for future extensions, further enhancing the applicability and explanatory power of our approach to understanding the dynamics of a capitalist economy.

\begin{footnotesize}

\bibliographystyle{apalike}
\bibliography{biblio}

\clearpage

\appendix

\section{Proofs for the baseline model}
\label{app:proofs}

\begin{proof}[Proof of Proposition \ref{pr:shor-run-equilibrium}]

Observe first that, thanks to (\ref{eq:firmLabourDemand}) we have that the output supply of each firm producing $i$ is
\begin{equation}
q(i) = \beta^{\frac{\beta}{1-\beta}} A(i)^{\frac{1}{1-\beta}} p(i)^{\frac{\beta}{1-\beta}} w(i)^{-\frac{\beta}{1-\beta}}.
\label{eq:outputSupply}
\end{equation}
The total nominal income derived from the production of good $i$ is therefore:
\begin{equation}
p(i) q(i) \mu(i) = \beta^{\frac{\beta}{1-\beta}}A(i)^{\frac{1}{1-\beta}} p(i)^{\frac{1}{1-\beta}} w(i)^{\frac{-\beta}{1-\beta}} \mu(i)
\label{eq:totalIncomeFromGoodi}
\end{equation}
and the profits of each firm producing $i$ 
\begin{equation}
\pi (i) = \beta^{\frac{\beta}{1-\beta}}\left(1-\beta\right) A(i)^{\frac{1}{1-\beta}} p(i)^{\frac{1}{1-\beta}} w(i)^{\frac{-\beta}{1-\beta}} = \left(1-\beta\right)p(i) q(i),
\label{eq:profitsRepresentativeFirmGoodi}
\end{equation}
i.e., a share of $1-\beta$ of the firm's production goes to profits, while a share of $\beta$ goes to wages (this is, of course, standard given the Cobb-Douglas form of the production function). The total profits of sector $i$ are therefore equal to:
\begin{equation}
\pi (i) \mu(i) = \mu(i)\beta^{\frac{\beta}{1-\beta}}\left(1-\beta\right) A(i)^{\frac{1}{1-\beta}} p(i)^{\frac{1}{1-\beta}} w(i)^{\frac{-\beta}{1-\beta}}.
\label{eq:aggregateProfitsGoodi}
\end{equation}

To prove \eqref{eq:w=beta} we have just to observe that, since the share of wages is constant and equal to $\beta$ across the economy, the total mass of nominal wages in the economy is $\beta Y$ and thus:
\begin{equation}
w= \beta \dfrac{Y}{L} = \beta Y,
\label{eq:wageEquilibrium}
\end{equation}
where the last equality follows from the assumption $L=1$. Since we normalized $Y=1$ we have the claim.

To prove (\ref{eq:value-of-P}) we can first observe that market clearing condition in good market $i$ gives, using (\ref{eq:demandGoodi}) and then (\ref{eq:outputSupply})
i.e.
\begin{equation}
\frac{Y}{P} \left ( \frac{p(i)}{P} \right )^{-\sigma} = q(i)\mu(i) = \beta^{\frac{\beta}{1-\beta}} A(i)^{\frac{1}{1-\beta}} p(i)^{\frac{\beta}{1-\beta}} w^{\frac{-\beta}{1-\beta}} \mu(i),
\end{equation}
from which the equilibrium price for good $i$ (using that $Y=1$):
\begin{equation}
\label{eq:espressionedip-dopo}
p(i) = \left( P^{1-\sigma} A(i)^{\frac{1}{1-\beta}}  \mu(i) \right)^{-\frac{1-\beta} {\sigma\left(1-\beta\right)+\beta}}
\end{equation}

Thus, we can take equation (\ref{eq:espressionedip-dopo}), raise it to the power of $1-\sigma$, and integrate and, using the definition of $P$ given in (\ref{eq:priceIndex}), we have:
\[ 
P^{1-\sigma} = \left( \int_0^n p(i)^{1-\sigma} \, di \right) 
= P^{-\frac{(1-\sigma)^2(1-\beta)}{\sigma(1-\beta) + \beta}} \left( \int_0^n \left( A(i)^{\frac{1}{1-\beta}} \mu(i) \right)^{1-\frac{1}{\sigma(1-\beta) + \beta}} \, di \right).
\]

We find
\[
P = \left( \int_0^n \left( A(i)^{\frac{1}{1-\beta}} \mu(i) \right) ^{1-\frac{1}{\sigma(1-\beta) + \beta}} \, di \right)^{\frac{\sigma(1-\beta) + \beta}{1-\sigma}}.
\]
that is (\ref{eq:value-of-P}). So, again from (\ref{eq:espressionedip-dopo}),
\begin{multline}
\label{eq:espressionedip-prima}
p(i) = P^{1-\frac{1}{\sigma(1-\beta) + \beta}} \left( A(i)^{\frac{1}{1-\beta}} \mu(i) \right)^{-\frac{1-\beta }{\sigma\left(1-\beta\right)+\beta}}\\
= \left ( \int_0^n \left( A(j)^{\frac{1}{1-\beta}} \mu(j) \right)^{1-\frac{1}{\sigma(1-\beta) + \beta}}dj \right )^{-(1-\beta)}\left( A(i)^{\frac{1}{1-\beta}} \mu(i) \right)^{-\frac{1-\beta }{\sigma\left(1-\beta\right)+\beta}}.
\end{multline}
that is (\ref{eq:value-of-pi}). So, from  (\ref{eq:profitsRepresentativeFirmGoodi})
\begin{multline}
\pi (i) = \left(1-\beta \right) \beta^{\frac{\beta}{1-\beta}} A(i)^{\frac{1}{1-\beta}} p(i)^{\frac{1}{1-\beta}} \beta^{\frac{-\beta}{1-\beta}} 
= \left(1-\beta \right)  A(i)^{\frac{1}{1-\beta}}p(i)^{\frac{1}{1-\beta}}\\
= \frac{\left(1-\beta \right) A(i)^{\frac{1}{1-\beta} \left ( 1- \frac{1}{\sigma + \beta - \sigma\beta} \right )}  \mu(i)^{\frac{-1}{\sigma\left(1-\beta\right)+\beta}}}{\displaystyle{\left ( \int_0^n \left( A(j)^{\frac{1}{1-\beta}} \mu(j) \right)^{1- \frac{1}{\sigma + \beta - \sigma\beta}} dj \right )}}
\end{multline}
that is (\ref{eq:profitti_dopo}) and,
\[
\Pi = \int_{0}^{n}\pi(i)\mu(i) \,di 
= \frac{\displaystyle\int_{0}^{n} \left(1-\beta \right) A(i)^{\frac{1}{1-\beta} \left ( 1- \frac{1}{\sigma + \beta - \sigma\beta} \right )}  \mu(i)^{\frac{-1}{\sigma\left(1-\beta\right)+\beta}} \,di}{\displaystyle{\left ( \int_0^n \left( A(j)^{\frac{1}{1-\beta}} \mu(j) \right)^{1- \frac{1}{\sigma + \beta - \sigma\beta}} dj \right )}}
= \left(1-\beta \right).
\]
Finally
the total production in each sector is 
\[
\mu(i)q(i) = A(i)^{\frac{1}{1-\beta}} p(i)^{\frac{\beta}{1-\beta}}\mu(i)
= A(i)^{\frac{1}{1-\beta}}\left ( \int_0^n \left( A(j)^{\frac{1}{1-\beta}} \mu(j) \right)^{1- \frac{1}{\sigma + \beta - \sigma\beta}} dj \right )^{-\beta}\left( A(i)^{\frac{1}{1-\beta}} \mu(i) \right)^{\frac{-\beta}{\sigma(1-\beta) + \beta}}\mu(i),
\]
while the revenues are
\[
\mu(i)q(i)p(i) =  A(i)^{\frac{1}{1-\beta}}\left ( \int_0^n \left( A(j)^{\frac{1}{1-\beta}} \mu(j) \right)^{1- \frac{1}{\sigma + \beta - \sigma\beta}} dj \right )^{-1}\left( A(i)^{\frac{1}{1-\beta}} \mu(i) \right)^{\frac{-1}{\sigma(1-\beta) + \beta}}\mu(i).
\]
and
\[
L(i) = \mu(i) l(i) = \mu(i) A(i)^{\frac{1}{1-\beta}}\left ( \int_0^n \left( A(j)^{\frac{1}{1-\beta}} \mu(j) \right)^{1- \frac{1}{\sigma + \beta - \sigma\beta}} dj \right )^{-1}\left( A(i)^{\frac{1}{1-\beta}} \mu(i) \right)^{\frac{-1}{\sigma(1-\beta) + \beta}}
\]

We have then proved all the statements of the proposition.
\end{proof}

\begin{proof}[Proof of Theorem \ref{th:esistenzaeunicita}]

In this proof, to lighten the notation we will define  $\rho \equiv \sigma(1-\beta) + \beta$. Notice that $\rho \in (\sigma,1)$ or $\rho \in (1,\sigma)$ depending on whether $\sigma \lessgtr 1$. 

We start by looking at the following PDE:
\begin{equation}
\label{eq:stato-di-santambrogio}
\left \{
\begin{array}{l}
\displaystyle \partial_t V(t,i) = -(1-\beta) \partial_i \left( V(t,i) \partial_i \left( A(i)^{\frac{1}{1-\beta} \left( 1 - \frac{1}{\rho} \right)} V(t,i)^{-1/\rho} \right) \right)\\[7pt]
\displaystyle V(t,0) = V(t,n)\\[7pt]
\displaystyle V(0,i) = \mu_0(i).
\end{array}
\right .
\end{equation}
We develop the right-hand side of the (\ref{eq:stato-di-santambrogio}) in more explicit terms for later reference. If we avoid to write the dependences $(t,i)$ and we denote $z = \frac{1}{1-\beta} \left( 1 - \frac{1}{\rho} \right)$  We have that
\begin{multline}
\label{eq:def-f}
-(1-\beta) \partial_i \left( V(t,i) \partial_i \left( A(i)^{\frac{1}{1-\beta} \left( 1 - \frac{1}{\rho} \right)} V(t,i)^{-1/\rho} \right) \right) \\
= -(1-\beta) V^{-1/\rho} \bigg [ (1-2/\rho) \left ( z A^{z-1} \right ) (\partial_i A) (\partial_i V) + \frac{1}{\rho^2} A^z (\partial_i V)^2 
\\+ z (z-1) A^{z-2}(\partial_i A)^2 V + z A^{z-1}(\partial^2_i A) V 
- \frac{1}{\rho} A^z (\partial^2_i V) \bigg ]=: f(V, \partial_i V, \partial^2_i V).
\end{multline}

Thanks to Theorem 1.2 of \cite{iacobelli2019weighted} (the results hold for Neumann and also periodic boundary conditions so they include our case) it has a unique solution in  $L^2_{loc}([0, +\infty), H^1(S))$ and thanks to  their Theorem 1.3 there exists $M>0$ such that, for all $t\geq 0$
\begin{equation}
\label{eq:boundeSantambrogio}
\frac{1}{M} < V(t,i) < M, \qquad \text{for all $i\in S$.}
\end{equation}

This fact guarantees further regularity of $V$:
\begin{itemize}
\item[(i)] The positivity result (\ref{eq:boundeSantambrogio}) ensure that the operator $f$ defined in (\ref{eq:def-f}) is uniformly elliptic and then one can use the standard results for parabolic equations for recursively get (using for instance Theorem 15.2 page 81 and then 6.1 page 33 of \citealp{LionsMagenes-2-EN} or Krylov-Safonov and then Schauder estimates) the regularity in the interior and that $V\in C([0, +\infty), C^2(S))$.

\item[(ii)] For positive and regular initial data as the $\mu_0$ considered in the text of the theorem one can gain some additional regularity at $t=0$. In particular, we can for instance apply Theorem 8.1.3 of \cite{lunardi15analytic}\footnote{We use Theorem 8.1.3 of \cite{lunardi15analytic} specifying the abstract setting as follows:
$X = C(S)$, $D = C^2(S)$, $\mathcal{O} = \left \{ v\in D \; : \; \frac{1}{M} < v(i) < M, \text{for all $i\in S$.}  \right \}$, $F (t, v)(\xi) = f (v(\xi), v'(\xi), v''(\xi))$ where $f$ is defined in (\ref{eq:def-f}).
Remark that, since $S$ is compact, the condition $A\in C^2(S)$ implies in particular that the first and the second derivatives of $A$ are bounded. As remarked below equation (8.0.3) page 287 in \cite{lunardi15analytic} in our specification, the condition (8.0.3) is verified as long as $\frac{\partial f}{\partial q} (u, p, q) > 0$ which is assured by the positivity of $A$ and of $V$. The Lipschitz and the Holder conditions (8.1.1) page 290 are verified thanks to the bounds in the definition of $\mathcal{O}$, the continuity and the Fréchet differentiability are assured near $\mu_0$ (the regularity result is local in time) thanks to its regularity and its boundedness. Again the regularity of $\mu_0$ ensures that the condition (8.1.18), page 295 is verified.} to be sure that there exists $\alpha$ such that, for all\footnote{The result of Theorem 8.1.3 of \cite{lunardi15analytic} only ensures this fact until to a small $\delta>0$ but, since all the boundedness are homogeneous in time, the result extend to $[0,T]$.} $T>0$, 
\begin{equation}
\label{eq:resulta-Lunardi}
V(t)\in C^{1+\alpha}([0,T), C(S)).
\end{equation}
\end{itemize}
We want to use $V$ to obtain a solution of (\ref{eq:stato-di-nuovo}). If (we will prove the existence at the end) $\mu$ is a $C^{1,2}$ solution of (\ref{eq:stato-di-nuovo}) and there exist two regular functions $\phi\colon \mathbb{R}^+ \to \mathbb{R}^+$ and $V\colon \mathbb{R}^+ \times S \to \mathbb{R}$ such that $\mu(t,i) = V(\phi(t),i)$, then $\mu$ satisfies
\begin{equation*}
\partial_{t}\mu(t,i) = \phi'(t) \partial_{t} V(\phi(t),i). 
\end{equation*}
Moreover, if $\phi:\RR^{+}\to \RR^{+}$ is monotonously increasing, $\phi(0) = 0$ and
\begin{equation}\label{eq:timeChange}
\phi'(t) = \frac{1}{\left( \int_0^n \left[ A(j)^{1/(1-\beta)} \mu(t,j) \right]^{1-\frac{1}{\rho}} dj \right)} = \frac{1}{\left( \int_0^n \left[ A(j)^{1/(1-\beta)} V(\phi(t),j) \right]^{1-\frac{1}{\rho}} dj \right)},
\end{equation}
then $V(\phi(t),i)$ is a solution to Equation \eqref{eq:stato-di-nuovo}.
Hence, to prove existence and uniqueness to \eqref{eq:stato-di-nuovo} it is enough, starting from that of \eqref{eq:stato-di-santambrogio}, to show the existence and uniqueness of the time change $\phi(t)$ i.e. the existence of a unique $\phi$ such that
\begin{equation*}
\begin{cases}
\displaystyle \phi'(t) = G(\phi(t)) :=  \frac{1}{ \int_0^n \left[ A(j)^{1/(1-\beta)} V(\phi(t),j) \right]^{1-\frac{1}{\rho}} dj }.\\
\phi(0) = 0
\end{cases}
\end{equation*}
To prove that this equation has a unique solution it is enough to prove that $G$ is Lipschitz-continuous. So we need to estimate
\begin{multline*}
\abs{G(x) - G(y)} = \abs{\frac{1}{ \int_0^n \left[ A(j)^{1/(1-\beta)} V(x,j) \right]^{1-\frac{1}{\rho}} dj } - \frac{1}{ \int_0^n \left[ A(j)^{1/(1-\beta)} V(y,j) \right]^{1-\frac{1}{\rho}} dj }} \\
= \abs{\frac{\int_0^n \left[ A(j)^{1/(1-\beta)} V(x,j) \right]^{1-\frac{1}{\rho}} dj-\int_0^n \left[ A(j)^{1/(1-\beta)} V(y,j) \right]^{1-\frac{1}{\rho}} dj}{\left(\int_0^n \left[ A(j)^{1/(1-\beta)} V(x,j) \right]^{1-\frac{1}{\rho}} dj\right)\left(\int_0^n \left[ A(j)^{1/(1-\beta)} V(y,j) \right]^{1-\frac{1}{\rho}} dj\right)}}.
\end{multline*}
The denominator of this expression, thanks to (\ref{eq:boundeSantambrogio}) and the strictly positivity and boundedness of $A$, is always greater than some constant $1/C>0$ so the previous expression is smaller than
\begin{multline*}
C \cdot \abs{	\int_0^n \left[ A(j)^{1/(1-\beta)} V(x,j) \right]^{1-\frac{1}{\rho}} dj-\int_0^n \left[ A(j)^{1/(1-\beta)} V(y,j) \right]^{1-\frac{1}{\rho}} dj} = \\
= \abs{\int_0^n \left[ A(j)^{1/(1-\beta)} V(x,j) \right]^{1-\frac{1}{\rho}} - \left[ A(j)^{1/(1-\beta)} V(y,j) \right]^{1-\frac{1}{\rho}} dj}.
\end{multline*}

Again, thanks to (\ref{eq:boundeSantambrogio}) and to the strict positivity and boundedness of $A$, we know that the quantity $\left[ A(j)^{1/(1-\beta)} V(x,j) \right]$ stays uniformly away from zero. Since the map $z \mapsto z^{1 - \frac{1}{\rho}}$ is Lipschitz-continuous on bounded subsets of $(0, +\infty)$, it follows that applying this transformation yields a Lipschitz functional. Therefore, possibly adjusting the constant $C$ (also depending on the bounds on $A$), the previous expression is smaller than
\begin{equation}
\label{eq:ultimastima}
\leq C \int_{0}^{n}\abs{V(x,j)-V(y,j)} \,dj.
\end{equation}
Since (\ref{eq:resulta-Lunardi}) holds we know that 
\[
\sup_{i\in S} \sup_{t\in [x,y]} \abs{\frac{\partial V}{\partial t}}(t,i) : N <+\infty
\]
So (\ref{eq:ultimastima}) is smaller than
\[
C \int_{0}^{n}N\abs{x-y} \, dj = CNn \abs{x-y}
\]
and then $G$ is Lipschitz continuous on all intervals $[0, T]$. Therefore the time change exists and it is unique and so is the solution $\mu$ of (\ref{eq:stato-di-nuovo}).
\end{proof}

\begin{proof}[Proof of Theorem \ref{th:riscritturaGradientFlow}]
Introduce $\rho \equiv \sigma(1-\beta) + \beta$. Consider the equation
\begin{equation*}
\left \{
\begin{array}{l}
\displaystyle \partial_t \nu(t,i) = -\frac{(1-\rho)}{\rho}\partial_i \left( \nu(t,i) \partial_i \left( A(i)^{\frac{1}{1-\beta} \left( 1 - \frac{1}{\rho} \right)} \nu(i)^{-1/\rho} \right) \right)\\[7pt]
\displaystyle \nu(0,t) = \nu(n,t)\\[7pt]
\displaystyle \nu(i,0) = \nu_0(i).
\end{array}
\right .
\end{equation*}
The results of \cite{iacobelli2019weighted} (see in particular page 1171 and subsequent\footnote{To adapt our notation to theirs (see (1.1) in \citealt{iacobelli2019weighted}) we can set 
\begin{equation*}
\rho(x) = A(x)^{\frac{1}{1-\beta}1 - \frac{1}{\rho}},\quad \text{and } r = \frac{1}{\rho} - 1.
\end{equation*}}) ensure that its (unique, weak, see Theorem 1.2 page 1170) solution can be characterized as the gradient flow (defined as the limit of the  Jordan-Kinderlehrer-Otto scheme, see Lemma 2.2 page 1174 of \citet{iacobelli2019weighted}) of the functional
\begin{equation}
\mathcal{G}(\nu) := \int_{0}^{n} A(i)^{\frac{1}{1-\beta} \left( 1 - \frac{1}{\rho} \right)} \nu(i)^{\frac{\rho-1}{\rho}} \, di
\end{equation}
i.e.
\[
\frac{d \nu}{dt} = \nabla_{W_2} \mathcal{G}(\nu(t)).
\]
If we consider 
\[
\mathcal{F}(\nu) = \frac{\rho}{1-\rho}(1-\beta) \ln \left ( \mathcal{G}(\nu) \right )
\]
we have
\begin{equation*}
\nabla_{W_{2}} \mathcal{F}(\mu(t)) = \frac{\rho}{1-\rho}(1-\beta) \frac{\nabla_{W_{2}} \mathcal{G}(\mu(t))}{\mathcal{G}(\mu(t))}.
\end{equation*}
which corresponds to the expression appearing in the right hand side of (\ref{eq:stato-di-nuovo}) so that
\[
\frac{d \mu}{dt} = \nabla_{W_2} \mathcal{F}(\nu(t)).
\]
and this concludes the proof.
\end{proof}

\begin{proof}[Proof of Corollary \ref{cor:crescitaOttimal}]
The fact that the gradient flow evolves along the curves of maximal slope is a very general result; see, for example, Theorem 11.1.3, page 283 of \cite{ambrosio2005gradient}. Therefore, in particular, \( t \mapsto \mathcal{F}(\mu(t)) \) is decreasing. Moreover, we have
\begin{equation*}
P(t) = \left( \int_0^n \left[ A(i)^{1/(1-\beta)} \mu(t,i) \right]^{1 - 1/\rho} \, di \right)^{\frac{\rho}{1 - \sigma}} = \exp\left(\frac{1-\rho}{\rho}\frac{1}{1-\beta}\mathcal{F}(\mu(t))\right)^{\frac{\rho}{1 - \sigma}}.
\end{equation*}
Thus, \( t \mapsto P(t) \) decreases along the trajectories. The behaviours of \( X(t) \) and \( U(t) \) follow from this fact and from the relations (\ref{eq:X-P}) and (\ref{eq:U-P}).
\end{proof}

\begin{proof}[Proof of Theorem \ref{teo:convergence}]
To see that a steady state of \eqref{eq:stato-di-nuovo} is given by \eqref{eq:steadystatemu} it is enough to impose that 
\begin{equation*}
\mu(i)^{-\frac{1}{\rho}} A(i)^{\frac{1}{1-\beta}\frac{\rho-1}{\rho}} = C_{\infty}
\end{equation*}
for some real positive value $C_{\infty}$ and use the fact that $\mu$ integrates to one. 

To show the convergence part of the Theorem, first of all, consider the time change $\phi(t)$ introduced in the proof of Theorem \ref{th:esistenzaeunicita}, and defined in \eqref{eq:timeChange}. Notice that, by equation \eqref{eq:boundeSantambrogio}, such a $\phi$ will automatically be strictly increasing and with bounded first derivative, i.e. 
\begin{equation*}
c_{\phi} \leq \phi'(t) \leq C_{\phi}
\end{equation*}
for some constants $c_{\phi}>0,C_{\phi}>0$,  so that $\phi \in C(\RR^+)$ and $\lim_{t\to\infty}\phi(t) = +\infty$. This property is fundamental since it guarantees the uniqueness of the solution of \eqref{eq:stato-di-nuovo} from that of $\phi(t)$. Moreover, since $\phi(0) = 0$ it also implies that $\phi(t) \geq K_\phi \cdot t$ for some $K_{\phi} > 0$.

Consider Equation \eqref{eq:stato-di-santambrogio}. Thanks to Theorem 1.4 of \cite{iacobelli2019weighted} for the function $V$, solution to \eqref{eq:stato-di-santambrogio} it holds\footnote{Notice that, the steady state of \eqref{eq:stato-di-santambrogio} is the same of \eqref{eq:stato-di-nuovo}.}
\begin{equation}
\norm{V(t,\cdot) - \mu^{EQ}(\cdot)}_{L^2(S)} \leq C_1 e^{-C_2 t},
\end{equation}
(which also implies the uniqueness of the steady state). Therefore, since $\mu(t,i) = V(\phi(t),i)$ one immediately has
\begin{equation}
\norm{\mu(t,\cdot) - \mu^{EQ}(\cdot)}_{L^2(S)} \leq C_1 e^{-C_2 \phi(t)}.
\end{equation}
However, this immediately translates to 
\begin{equation}
\norm{\mu(t,\cdot) - \mu^{EQ}(\cdot)}_{L^2(S)} \leq C_1 e^{-C_2 K_{\phi} t}
\end{equation}
which ends the proof. 

To derive the expressions for $\pi(i)$, $Y(i)$, and $L(i)$, it suffices to substitute the expression from (\ref{eq:steadystatemu}) into the corresponding formulas given in Proposition \ref{pr:shor-run-equilibrium}.
\end{proof}

\begin{proof}[Proof of Theorem \ref{th:mu-e-ottimale}]
We define $\rho := \sigma(1-\beta) + \beta$ and notice that 
$\frac{\rho-1}{\sigma-1} > 0$ for all $\sigma>0$ with $\sigma \neq 1$. 

Consider for the case $\sigma > 1$, i.e. $\rho > 1$ (the case $0<\sigma < 1$ follows from similar arguments) and consider the quantity 
\begin{equation*}
 X(\mu) = \left( \int_0^n \left[ A(i)^{\frac{1}{1-\beta}} \mu(i) \right]^{\frac{\rho-1}{\rho}} \, di \right)^{\frac{\rho}{\sigma-1}} = \left( \int_0^n \left[ A(i)^{\frac{1}{1-\beta}} \mu(i) \right]^{\frac{\rho-1}{\rho}} \, di \right)^{\frac{\rho}{\rho-1}\frac{\rho-1}{\sigma-1}}.
\end{equation*}
Since the exponent outside of the integral $X$ is maximized if and only if the value of the integral is maximized, so consider $\mu$, such that $\int \mu = 1$. We define $\bar{A}(i):= A(i)^{\frac{1}{1-\beta}\frac{\rho-1}{\rho}}$ and $\gamma = \frac{\rho-1}{\rho}$. Since $0 < \gamma < 1$ the $\gamma$-power function is concave and then one
\begin{equation*}
\int_0^n \bar{A}(i) \mu(i)^\gamma  - \bar{A}(i)\mu^{EQ}(i)^\gamma\,di \leq \int_0^n \bar{A}(i)\gamma \mu^{EQ}(i)^{\gamma-1}\left[\mu(i)-\mu^{EQ}(i) \right]\,di.
\end{equation*}
On the other hand
\begin{equation}
\label{eq:optimBaseCase1}
\int_0^n \bar{A}(i)\gamma \mu^{EQ}(i)^{\gamma-1}\left[\mu(i)-\mu^{EQ}(i) \right]\,di = \frac{\displaystyle \int_0^n \bar{A}(i)\gamma A(i)^{\frac{\rho-1}{1-\beta}(\gamma-1)}\left[\mu(i)-\mu^{EQ}(i) \right]\,di}{\displaystyle \left(\int_0^n A(j)^{\frac{\sigma(1-\beta) + \beta-1}{1-\beta}} \, dj\right)^{\gamma-1}}.
\end{equation}
and
\begin{equation*}
A(i)^{\frac{\rho-1}{1-\beta}(\gamma-1)} = A(i)^{\frac{\rho-1}{1-\beta}\frac{-1}{\rho}} = \bar{A}(i)^{-1} 
\end{equation*}
so that \eqref{eq:optimBaseCase1} becomes
\begin{equation}
= \frac{\displaystyle \int_0^n \gamma \left[\mu(i)-\mu^{EQ}(i) \right]\,di}{\displaystyle \left(\int_0^n A(j)^{\frac{\sigma(1-\beta) + \beta-1}{1-\beta}} \, dj\right)^{\gamma-1}} = 0.
\end{equation}
Therefore for each $\mu$ such that $\int \mu = 1$ one has
\begin{equation}
\int_0^n \bar{A}(i) \mu(i)^\gamma  \leq \bar{A}(i)\mu^{EQ}(i)^\gamma\,di 
\end{equation}
\bigskip
which ends the proof. 

A similar argument holds for $0 < \sigma < 1$. In this case, $\gamma = (\rho-1)/\rho < 0$ and the outer exponent $\beta = \rho/(\sigma-1) < 0$. Thus, $x \mapsto x^\gamma$ is convex, and maximizing $X(\mu) = ( \int \bar{A} \mu^\gamma )^\beta$ requires minimizing the integral. The convexity inequality, combined with the previously shown zero value of the first-order term, confirms that $\mu^{EQ}$ minimizes the integral, thus maximizing $X(\mu)$. 
\end{proof}

\section{Propositions and proofs for the extensions of the baseline model presented in Subsection \ref{sec:extensions} (together with a non-symmetric preference extension)
}
\label{app:proofsextensions}

\subsection{The economy with intrasectoral externalities}
\label{app:casospillovers}

This section contains Proposition \ref{pr:shor-run-equilibrium-spillovers}, which collects the properties of the short-run equilibrium in presence of intrasectoral externalities. The equilibrium is similar to the case without externalities, except that \( A(i) \) is replaced by \( A_0(i) \mu(i)^\eta \).
\begin{Proposition}[The short-run equilibrium with externalities]
\label{pr:shor-run-equilibrium-spillovers}
In the short-run equilibrium with intrasectoral externalities:
\[
w = \beta,
\]
\begin{equation}
\displaystyle
p(i) = \frac{\left[   A_0(i) \mu(i)^{1-\beta+\eta}\right]^{-\frac{1}{\sigma(1-\beta)+\beta}}}{\displaystyle \left( \int_0^n \left[   A_0(j) \mu(j)^{1-\beta+\eta}\right]^{\frac{1}{1-\beta}\left(1-\frac{1}{\sigma(1-\beta)+\beta}\right)} dj \right)^{(1-\beta)}},
\end{equation}
\begin{equation}
P = \left( \int_0^n \left[ A_0(i) \, \mu(i)^{1 - \beta + \eta} \right]^{ \frac{\sigma -1 }{\sigma(1 - \beta) + \beta} } \, di \right)^{ \frac{ \sigma(1 - \beta) + \beta }{ 1 - \sigma } },    \end{equation}
\begin{equation}
\pi(i) = \left(1 - \beta \right) \frac{ A_0(i)^{ \frac{\sigma-1}{\sigma(1-\beta) + \beta} } \mu(i)^{ \frac{ \eta(\sigma-1) -1 }{\sigma (1-\beta) + \beta} } }{ \displaystyle \int_0^n \left[ A_0(j) \mu(j) ^{1-\beta + \eta} \right]^{ \frac{\sigma - 1}{\sigma(1-\beta) + \beta} } dj }
\end{equation}
the production in sector $i$ is 
\begin{equation}
Q(i) = \frac{\left [ A_0(i)  \mu(i)^{1-\beta +\eta} \right ] ^\frac{\sigma}{\sigma\left(1-\beta\right)+\beta} }{\left ( \displaystyle \int_0^n \left[ A_0(j) \mu(j) ^{1-\beta + \eta} \right]^{ \frac{\sigma - 1}{\sigma(1-\beta) + \beta} } dj  \right )^{\beta}},
\end{equation}
while the value of production and employment
\begin{equation}
Y(i) \equiv L(i) = \frac{\left [ A_0(i)  \mu(i)^{1-\beta +\eta} \right ] ^\frac{\sigma-1}{\sigma\left(1-\beta\right)+\beta}}{ \displaystyle \int_0^n \left[ A_0(j) \mu(j) ^{1-\beta + \eta} \right]^{ \frac{\sigma - 1}{\sigma(1-\beta) + \beta} } dj  }.
\end{equation}
\end{Proposition}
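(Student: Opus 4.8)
The plan is to reduce the entire statement to Proposition \ref{pr:shor-run-equilibrium}, exploiting the fact that the externality enters only as a reparametrisation of the technology. First I would re-derive the firm-level objects --- labour demand, output supply, revenue, and profit of a representative firm in sector $i$ --- exactly as in the proof of Proposition \ref{pr:shor-run-equilibrium}, using the crucial point that each individual firm treats the sectoral productivity $A(i,\mu(i)) = A_0(i)\mu(i)^\eta$ as an exogenous constant, since the externality operates at the sectoral level and is not internalised. Consequently Equations \eqref{eq:outputSupply}--\eqref{eq:aggregateProfitsGoodi} carry over verbatim with $A(i)$ replaced by $A_0(i)\mu(i)^\eta$; in particular the wage share in each firm's revenue is still $\beta$, so that the aggregate wage bill equals $\beta Y$, and with $L = 1$ and the numeraire $Y = 1$ this yields $w = \beta$ with no further work.

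Next I would impose the market-clearing condition $q(i)\mu(i) = x(i)$ in every goods market. Combining the demand \eqref{eq:demandGoodi} with the modified output supply gives $p(i)$ as a function of the price index $P$; raising this to the power $1-\sigma$, integrating over $i$, and using the definition \eqref{eq:priceIndex} of $P$ produces a fixed-point equation for $P$ whose solution is the stated expression. The only difference from the baseline computation is bookkeeping: the composite $A(i)\mu(i)^{1-\beta}$ that recurs throughout the baseline proof becomes $A_0(i)\mu(i)^{\eta}\,\mu(i)^{1-\beta} = A_0(i)\mu(i)^{1-\beta+\eta}$, and $A(i)^{1/(1-\beta)}\mu(i)$ becomes $\bigl[A_0(i)\mu(i)^{1-\beta+\eta}\bigr]^{1/(1-\beta)}$. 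Substituting these identities into the baseline formulas for $p(i)$, for $\pi(i) = (1-\beta)p(i)q(i)$, for the sectoral output $Q(i) = \mu(i)q(i)$, and for the sectoral value of production and employment $Y(i) = L(i) = p(i)q(i)\mu(i)$ yields the displayed expressions, after using the elementary identity $1 - 1/[\sigma(1-\beta)+\beta] = (\sigma-1)(1-\beta)/[\sigma(1-\beta)+\beta]$ to match the exponents.

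The main thing to watch --- more a matter of rigour than a genuine obstacle --- is the conceptual order of operations: the reduction to the baseline is legitimate precisely because $\mu$ is held fixed while firms and consumers optimise, so the substitution $A(i)\mapsto A_0(i)\mu(i)^\eta$ is performed \emph{after} all first-order conditions and market-clearing equations have been written down, and $\mu(i)^\eta$ is never differentiated inside anyone's optimisation problem. Once this is made explicit, the proof is a routine relabelling of the argument in the proof of Proposition \ref{pr:shor-run-equilibrium}, requiring no new analytic input.
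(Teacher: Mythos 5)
Your proposal is correct and is exactly the paper's own argument: the appendix proof of Proposition \ref{pr:shor-run-equilibrium-spillovers} consists of the single sentence that it "follows directly from Proposition \ref{pr:shor-run-equilibrium} by using that $A(i) = A_0(i)\mu(i)^\eta$." Your additional care about the substitution being performed only after the first-order conditions (since firms do not internalise the externality) and the exponent bookkeeping $A(i)\mu(i)^{1-\beta} \mapsto A_0(i)\mu(i)^{1-\beta+\eta}$ is a welcome elaboration of the same reduction, not a different route.
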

\begin{proof}
The proof follows directly from Proposition \ref{pr:shor-run-equilibrium} by using that $A(i) = A_0(i) \mu(i)^\eta$.
\end{proof}

As before $\Pi = (1-\beta)$.

\subsubsection{A fixed cost of reallocation}
\label{app:proofsfixedcosts}

\begin{proof}[Proof that the distribution $\mu$ described in (\ref{eq:distribution-equilibrium-example-2}) is an equilibrium when the condition (\ref{eq:condizione-su-c0}) is verified]
As already mentioned, to ensure that the distribution has a total mass equal to one, we need to set
\begin{equation*}
C_{\alpha} = \frac{1-\alpha}{2^{\alpha}}.
\end{equation*}  
We also compute  
\begin{equation}
\label{eq:distribution-equilibrium-example-3}
\partial_{x} \mu(x) = 
\begin{cases} 
-\frac{\alpha (1-\alpha)}{2^{\alpha}} \frac{1}{x^{\alpha+1}} & \text{for } x \in (0, 1/2), \\
-\frac{\alpha (1-\alpha)}{2^{\alpha}} \frac{1}{(1-x)^{\alpha+1}} & \text{for } x \in (1/2, 1),
\end{cases}
\end{equation}
The expression for profits, defining $\rho = \sigma(1-\beta)+\beta$, is given by
\begin{multline*}
\pi(x) = (1-\beta) \left( \int_0^n \left[ A(i)^{1/(1-\beta)} \mu(i) \right]^{1-\frac{1}{\sigma+\beta-\sigma\beta}} di \right) A(i)^{\frac{1}{1-\beta} \left(1-\frac{1}{\sigma+\beta-\sigma\beta} \right)} \mu(i)^{-1/(\sigma(1-\beta)+\beta)} \\
= (1-\beta) \int_{0}^{1} \mu(x)^{\frac{\rho-1}{\rho}} dx \, \mu(x)^{-\frac{1}{\rho}},
\end{multline*}  
which leads to  
\begin{equation}
\label{eq:profit-derivative-example}
\partial_{x} \pi(x) = (1-\beta) \int_{0}^{1} \mu(x)^{\frac{\rho-1}{\rho}} dx \left(-\frac{1}{\rho} \mu(x)^{-\frac{1}{\rho}-1}\right) \partial_{x} \mu(x).
\end{equation}  
By symmetry,  
\begin{multline*}
\int_{0}^{1} \mu(x)^{\frac{\rho-1}{\rho}} dx = 2 \int_{0}^{1/2} \mu(x)^{\frac{\rho-1}{\rho}} dx = 2 \left(\frac{1-\alpha}{2^{\alpha}}\right)^{\frac{\rho-1}{\rho}} \int_{0}^{1/2} \frac{1}{x^{\alpha(\frac{\rho-1}{\rho})}} dx \\
= 2 \left(\frac{1-\alpha}{2^{\alpha}}\right)^{\frac{\rho-1}{\rho}} \frac{2^{\alpha(\frac{\rho-1}{\rho})-1}}{1-\alpha(\frac{\rho-1}{\rho})} = \frac{\rho(1-\alpha)^{\frac{\rho-1}{\rho}}}{\alpha(1-\rho)+\rho}.
\end{multline*}  
The expression for the derivative of profits \eqref{eq:profit-derivative-example} then becomes  
\begin{equation*}
\partial_{x} \pi(x) = (1-\beta) \underbrace{\frac{\rho(1-\alpha)^{\frac{\rho-1}{\rho}}}{\alpha(1-\rho)+\rho}}_{\text{integral}} 
\underbrace{\left(-\frac{1}{\rho}\right) \left(\frac{1-\alpha}{2^{\alpha}}\right)^{-\frac{1}{\rho}-1} \frac{1}{x^{\alpha\left(-\frac{1}{\rho}-1\right)}}}_{\mu \text{ exponent}} 
\underbrace{(-\alpha) \left(\frac{1-\alpha}{2^{\alpha}}\right) \frac{1}{x^{\alpha+1}}}_{\partial_{x} \mu(x)} = 
\end{equation*}  
\begin{equation*}
= (1-\beta) \frac{(1-\alpha)^{\frac{\rho-1}{\rho}}}{\alpha(1-\rho)+\rho} \left(\frac{1-\alpha}{2^{\alpha}}\right)^{-\frac{1}{\rho}} \alpha \frac{1}{x^{1-\frac{\alpha}{\rho}}} = \frac{\alpha(1-\beta)}{2^{\alpha}} \frac{(1-\alpha)^{\frac{\rho-2}{\rho}}}{\alpha(1-\rho)+\rho} x^{\frac{\alpha}{\rho}-1}.
\end{equation*}  
If we take \( \rho \in (0,1) \) (i.e., \( \sigma < 1 \)) and \( \alpha > \rho \), then \( \frac{\alpha}{\rho}-1 > 0 \), and the maximum of \( \partial_{x} \pi(x) \) occurs at \( x = 1/2 \), with value  
\begin{equation*}
\frac{\alpha(1-\beta)}{2^{\alpha+\frac{\alpha}{\rho}-1}} \frac{(1-\alpha)^{\frac{\rho-2}{\rho}}}{\alpha(1-\rho)+\rho} = \frac{\alpha(1-\beta)}{2^{\alpha+\frac{\alpha}{\rho}-1}} \frac{1}{\alpha(1-\rho)+\rho} \frac{1}{(1-\alpha)^{\frac{2-\rho}{\rho}}}.
\end{equation*}  
Therefore, when $0<\rho<\alpha<1$, if 
\[
\frac{\alpha(1-\beta)(1-\alpha)^{1-\frac{2}{\rho}}}{2^{\alpha+\frac{\alpha}{\rho}-1}[\alpha(1-\rho)+\rho]} \leq \sqrt{2 c_0},
\]  
the distribution \eqref{eq:distribution-equilibrium-example-2} is an equilibrium distribution for the model with the fixed cost of reallocation.
\end{proof}

\subsubsection{Non-symmetric preferences\label{sec:nonsymmetric preferences}}

In this section, we examine what happens when preferences are not symmetric over the range of goods $[0,n]$. Specifically, assume that the preferences take the form:
\begin{equation}
U = u(X), \qquad \text{with} \quad  X:= \left (\dfrac{1}{n} \int_0^n \gamma(i)x(i)^{\frac{\sigma-1}{\sigma}} \, di \right )^{\frac{\sigma}{\sigma-1}},
\end{equation}
for some strictly positive (and continuous and bounded) function $\gamma \colon S \to \mathbb{R}$, where  $\gamma(i) > 0$, $\gamma \in C^2(S)$ and $\int_0^{n} \gamma(j) dj =n$. $\gamma(i)$ represents an index of the consumer’s specific preference for good $i$, whose magnitude positively affects its demand and, consequently, the share of total expenditure allocated to the purchase of that good.
Appendix \ref{app:non-symmetric} shows that the consumer's problem can be transformed into the same of Proposition \ref{pr:shor-run-equilibrium} by an appropriate rescaling of quantity of consumed good $x(i)$ and technology $A(i)$ (see Proposition \ref{pr:shor-run-equilibrium-nonsymmetric} in Appendix \ref{app:non-symmetric} for details).

The key distinguishing feature of this extension is the positive relationship between the share of income spent on purchasing good $i$ $Y(i)$ and $\gamma(i)$, i.e.:
\begin{equation}
\frac{p(i)x(i)}{y}= Y(i) =\gamma(i)^\sigma \left[\frac{p(i)}{P} \right]^{1-\sigma},
\label{eq:shareExpenditureGoodiHeterogeneousPreferences}
\end{equation}
which reflects in a positive relationship also between profits and $\gamma(i)$:
\begin{equation}
\label{eq:profitti_heterogeneity_preferences}
\pi(i) = \left(1-\beta \right) \left\{ \dfrac{\gamma(i)^{\frac{\sigma}{\sigma(1-\beta)+\beta}} A(i)^{\frac{\sigma(\sigma -1)}{\sigma(1-\beta)+\beta}}\mu(i)^{-\frac{1}{\sigma(1 - \beta) + \beta}}}{\displaystyle \int_0^n  \left[ \gamma(j)^{\frac{\sigma}{\sigma-1}} A(j) \mu(j)^{1-\beta} \right]^{\frac{\sigma - 1}{\sigma(1-\beta) + \beta}} \, dj} \right\},
\end{equation}
The resulting long-run equilibrium distribution of $\mu^{EQ}(i)$, $Y(i)^{EQ}$ and $L(i)^{EQ}$, i.e.
\begin{equation}
\mu^{EQ}(i) = Y(i)^{EQ} = L(i)^{EQ}=
\frac{\gamma(i)^\sigma A(i)^{\sigma-1}}{\displaystyle \int_0^n \gamma(j)^{\sigma} A(j)^{\sigma-1} \, dj},
\label{eq:shareExpenditureGoodiHeterogeneousPreferencesLongRun}
\end{equation}
highlights that also preferences contributes to determine the relative size of different sectors but still remains that labour productivity should converge to one in each sector.

\subsubsection{Proofs for the non-symmetric preferences}
\label{app:non-symmetric}

If we rescale each variety by its taste weight
\[
  \tilde{x}(i)=x(i)\,\gamma(i)^{\frac{\sigma}{\sigma-1}},
  \qquad
  \tilde{A}(i)=A(i)\,\gamma(i)^{-\frac{\sigma}{\sigma-1}},
\]
then the CES aggregator becomes identical to the symmetric case treated in Proposition
\ref{pr:shor-run-equilibrium}:
\[
  C \;=\;
  \Bigl(\tfrac{1}{n}\int_{0}^{n}
        \tilde{x}(i)^{\frac{\sigma-1}{\sigma}}\,
        \mathrm{d}i\Bigr)^{\frac{\sigma}{\sigma-1}} .
\]
Production expressed in the new unit of measure is simply
\begin{equation}
  \tilde{q}(i)=\tilde{A}(i)\,l(i)^{\beta},
  \label{eq:productionFunctionTilde}
\end{equation}
so the entire equilibrium of Proposition \ref{pr:shor-run-equilibrium} applies \emph{verbatim}
after the single substitution
\[
  A(i)\quad\longrightarrow\quad\tilde{A}(i).
\]

When the solution is translated back to the original units one finds
\[
  p(i)=\tilde{p}(i)\,\gamma(i)^{\frac{\sigma}{\sigma-1}},
  \qquad
  Q(i)=\tilde{Q}(i)\,\gamma(i)^{-\frac{\sigma}{\sigma-1}},
\]
so that the price re‑introduces the factor $\gamma(i)$ while the physical
quantity is rescaled in the opposite direction.  
Because these two factors cancel in \(Y(i)=p(i)Q(i)\),
all other expressions (revenue, profits, the wage and welfare) remain exactly
those obtained under the substitution.  
This observation shows gives, starting from the results of Proposition \ref{pr:shor-run-equilibrium}, the following result (a complete proof is provided)

\begin{Proposition}[The shor-run equilibrium with non-symmetric preferences]
\label{pr:shor-run-equilibrium-nonsymmetric}
In the short-run equilibrium of economy with non-symmetric preferences:
\[
w = \beta,
\]
\[
P=\left( \int_0^n  \left[\left(\gamma(i)^{\frac{\sigma}{\sigma-1}} A(i)\right)^{\frac{1}{1-\beta}} \mu(i) \right]^{1-\frac{1}{\sigma(1-\beta) + \beta}} \, di \right)^{\frac{\sigma(1-\beta) + \beta}{1-\sigma}},
\]
\begin{equation}
\label{eq:value-of-pi-spillovers}
\displaystyle
p(i) = \gamma(i)^{\frac{\sigma}{\sigma-1}} \left[  \left( \int_0^n  \left[\left(\gamma(j)^{\frac{\sigma}{\sigma-1}} A(j)\right)^{\frac{1}{1-\beta}} \mu(j) \right]^{1-\frac{1}{\sigma(1-\beta) + \beta}} \, dj \right)^{\sigma(1-\beta) + \beta}  \left(\gamma(i)^{\frac{\sigma}{\sigma-1}} A(i)\right)^{\frac{1}{1-\beta}} \mu(i)  \right]^{-\frac{1-\beta}{\sigma\left(1-\beta\right)+\beta}},
\end{equation}
\begin{multline} \nonumber
\label{eq:profitti_spillovers}
\pi(i) = \left(1-\beta \right)\frac{\left(\gamma(i)^{\frac{\sigma}{\sigma-1}} A(i)\right)^{\frac{1}{1-\beta}\left(1-\frac{1}{\sigma(1-\beta)+\beta}\right)}\mu(i)^{-\frac{1}{\sigma(1 - \beta) + \beta}}}{\displaystyle \int_0^n  \left[\left(\gamma(j)^{\frac{\sigma}{\sigma-1}} A(j)\right)^{\frac{1}{1-\beta}} \mu(j) \right]^{1-\frac{1}{\sigma(1-\beta) + \beta}} \, dj} = \\
=\left(1-\beta \right)\dfrac{\left(\gamma(i)\right)^{\frac{\sigma}{\sigma(1-\beta)+\beta}} \left(A(i)\right)^{\frac{\sigma(\sigma -1)}{\sigma(1-\beta)+\beta}}\mu(i)^{-\frac{1}{\sigma(1 - \beta) + \beta}}}{\displaystyle \int_0^n  \left[\left(\gamma(j)^{\frac{\sigma}{\sigma-1}} A(j)\right)^{\frac{1}{1-\beta}} \mu(j) \right]^{1-\frac{1}{\sigma(1-\beta) + \beta}} \, dj},
\end{multline}
the production in sector $i$ is 
\begin{equation}
\label{eq:Qi-nonsymmetric}
Q(i) = q(i) \mu(i) = \gamma(i)^{-\frac{\sigma}{\sigma-1}} \frac{\left[ \left(\gamma(i)^{\frac{\sigma}{\sigma-1}} A(i)\right)^{\frac{1}{1-\beta}} \mu(i) \right]^{\frac{\sigma(1-\beta)}{\sigma(1 - \beta) + \beta}}}{\displaystyle \left( \int_0^n  \left[\left(\gamma(j)^{\frac{\sigma}{\sigma-1}} A(j)\right)^{\frac{1}{1-\beta}} \mu(j) \right]^{1-\frac{1}{\sigma(1-\beta) + \beta}} \, dj \right)^{\beta}},
\end{equation}
the income per sector is
\begin{equation}
\label{eq:Yinonsymmetric}
Y(i) = p(i) Q(i) = \frac{\left[\gamma(i)^{\frac{\sigma}{\sigma-1}} A(i)  \mu(i)^{(1 - \beta)} \right]^{\frac{\sigma-1}{\sigma(1 - \beta) + \beta}} }{\displaystyle \int_0^n  \left[\left(\gamma(j)^{\frac{\sigma}{\sigma-1}} A(j)\right)^{\frac{1}{1-\beta}} \mu(j) \right]^{1-\frac{1}{\sigma(1-\beta) + \beta}} \, dj },
\end{equation}
while the number of employed workers per sector is
\begin{equation}
\label{eq:L-nonsymmetric}
L(i) =\mu(i) l(i) = \frac{\left[ \gamma(i)^{\frac{\sigma}{\sigma-1}} A(i)  \mu(i)^{\frac{1}{1-\sigma}} \right]^{\frac{\sigma-1}{\sigma(1-\beta)+\beta}}}
{\displaystyle \int_0^n  \left[\left(\gamma(j)^{\frac{\sigma}{\sigma-1}} A(j)\right)^{\frac{1}{1-\beta}} \mu(j) \right]^{1-\frac{1}{\sigma(1-\beta) + \beta}} \, dj }.
\end{equation}
As before $\Pi = (1-\beta)$.
\end{Proposition}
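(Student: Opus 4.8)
The plan is to reduce the economy with non-symmetric preferences to the symmetric one of Proposition \ref{pr:shor-run-equilibrium} by a change of variables, invoke that proposition, and translate the result back into the original units. The key identity is
\[
\gamma(i)\,x(i)^{\frac{\sigma-1}{\sigma}}=\bigl(\gamma(i)^{\frac{\sigma}{\sigma-1}}x(i)\bigr)^{\frac{\sigma-1}{\sigma}},
\]
so that, setting $\tilde x(i):=\gamma(i)^{\frac{\sigma}{\sigma-1}}x(i)$, the non-symmetric CES aggregator becomes the symmetric one of \eqref{eq:preferences} written in the variable $\tilde x$. Consistently, I would introduce the rescaled price $\tilde p(i):=\gamma(i)^{-\frac{\sigma}{\sigma-1}}p(i)$, the rescaled technology $\tilde A(i):=\gamma(i)^{\frac{\sigma}{\sigma-1}}A(i)$, and the induced rescaled output $\tilde q(i):=\gamma(i)^{\frac{\sigma}{\sigma-1}}q(i)=\tilde A(i)\,l(i)^{\beta}$.

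The first step is to check that, in the rescaled variables, the whole short-run equilibrium system of Section \ref{sec:shortRunEquilibrium} is reproduced \emph{verbatim} with $A$ replaced by $\tilde A$. On the demand side, the consumer's objective is the symmetric CES in $\tilde x$ and the budget constraint is $\int_0^n \tilde p(i)\tilde x(i)\,di\le y$, since $p(i)x(i)=\tilde p(i)\tilde x(i)$ pointwise; on the supply side, profit equals $p(i)q(i)-w(i)l(i)=\tilde p(i)\tilde q(i)-w(i)l(i)$, and because $\tilde p(i)\tilde A(i)=p(i)A(i)$ the profit-maximising labour input $l(i)$ — which is \emph{not} rescaled — is exactly formula \eqref{eq:firmLabourDemand} with $(\tilde p,\tilde A)$ in place of $(p,A)$. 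The two market-clearing conditions $q(i)=x(i)$ and $w(i)=w$ become $\tilde q(i)=\tilde x(i)$ and the common wage, and the numeraire is preserved, $Y=\int_0^n p(i)x(i)\,di=\int_0^n \tilde p(i)\tilde x(i)\,di$. Hence the rescaled data define a genuine symmetric economy, and Proposition \ref{pr:shor-run-equilibrium} applies directly: it yields $w=\beta$, $\Pi=1-\beta$, and the closed forms of $\tilde p(i)$, $\tilde\pi(i)$, $\tilde P$, $\tilde Q(i):=\tilde q(i)\mu(i)$, $\tilde Y(i):=\tilde p(i)\tilde Q(i)$, and $L(i):=l(i)\mu(i)$, all expressed through $\tilde A(i)=\gamma(i)^{\frac{\sigma}{\sigma-1}}A(i)$ and $\mu(i)$.

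The last step is pure bookkeeping: translate back via $p(i)=\gamma(i)^{\frac{\sigma}{\sigma-1}}\tilde p(i)$ and $Q(i)=\gamma(i)^{-\frac{\sigma}{\sigma-1}}\tilde Q(i)$. These two taste factors cancel in $Y(i)=p(i)Q(i)$, so $Y(i)$, $\pi(i)=\tilde\pi(i)$, $w=\beta$, $P=\tilde P$, $L(i)$, and $\Pi=1-\beta$ are read off at once, whereas $p(i)$ and $Q(i)$ carry the extra power of $\gamma(i)$. Substituting the Proposition \ref{pr:shor-run-equilibrium} expressions into these relations and simplifying with the identity $\sigma(1-\beta)+\beta-1=(\sigma-1)(1-\beta)$ — which lets one pass freely between the exponents $1-\tfrac{1}{\sigma(1-\beta)+\beta}$, $\tfrac{\sigma-1}{\sigma(1-\beta)+\beta}$, and $\tfrac{1}{1-\beta}\bigl(1-\tfrac{1}{\sigma(1-\beta)+\beta}\bigr)$ — produces the displayed formulas.

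The one point that requires genuine care rather than mere algebra is the consistency of the change of variables across \emph{all} equilibrium relations simultaneously: one must verify that the firm's labour input $l(i)$, and therefore sectoral employment $L(i)=l(i)\mu(i)$ and the labour-market-clearing wage, are invariant under the rescaling — which holds precisely because $\tilde p(i)\tilde A(i)=p(i)A(i)$ leaves \eqref{eq:firmLabourDemand} unchanged — so that Proposition \ref{pr:shor-run-equilibrium} can be invoked with no further modification. Everything downstream of that check is a routine manipulation of powers of $\gamma(i)$.
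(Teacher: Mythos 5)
Your proposal is correct and is essentially the reduction the paper itself describes immediately before the Proposition: rescale $\tilde x(i)=\gamma(i)^{\frac{\sigma}{\sigma-1}}x(i)$, $\tilde p(i)=\gamma(i)^{-\frac{\sigma}{\sigma-1}}p(i)$, $\tilde A(i)=\gamma(i)^{\frac{\sigma}{\sigma-1}}A(i)$, observe that the rescaled economy is the symmetric one of Proposition \ref{pr:shor-run-equilibrium}, and translate back (note the paper's preamble writes $\tilde A(i)=A(i)\gamma(i)^{-\frac{\sigma}{\sigma-1}}$, an apparent sign slip in the exponent, since all displayed formulas use $\gamma(i)^{\frac{\sigma}{\sigma-1}}A(i)$ as in your version). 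The only difference is one of execution: the paper's appendix proof re-derives everything by direct computation, carrying the $\gamma$ factors through the demand function, the price index $P=\bigl(\int_0^n\gamma(i)^\sigma p(i)^{1-\sigma}di\bigr)^{1/(1-\sigma)}$, and market clearing, whereas you invoke Proposition \ref{pr:shor-run-equilibrium} as a black box after the invariance checks ($p(i)x(i)=\tilde p(i)\tilde x(i)$, $\tilde p(i)\tilde A(i)=p(i)A(i)$, hence $l(i)$ and the numeraire unchanged). Your route is shorter and isolates the one non-trivial point — that the labour block and the budget/numeraire are invariant under the rescaling — which you verify correctly; the paper's direct computation buys explicit intermediate expressions but no additional content.
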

\begin{proof}
Following the same approach as in Section \ref{sub:demand}, one begins by finding the demand $x(i)$ as a function of $p(i)$ and $P$. In this case, we have:
\[
x(i) = \frac{y}{\left ( \frac{p(i)}{\gamma(i)} \right )^{\sigma}}\frac{1}{P^{1-\sigma}}
\]
where $P$ is defined as
\[
P = \left ( \int_0^n \gamma(i)^{\sigma} p(i)^{1-\sigma} di \right )^{1/(1-\sigma)}.
\]
On the supply side, things remain as in the symmetric case, and the supply of each firm in the sector $i$ is again given by (\ref{eq:outputSupply}). Thus, the market equilibrium for good $i$ gives, for $Y=y=1$:
\[
\frac{1}{\left ( \frac{p(i)}{\gamma(i)} \right )^{\sigma}}\frac{1}{P^{1-\sigma}} = x(i) = \mu(i) q(i) = \mu(i)A(i)^{1/\left(1-\beta\right)} p(i)^{\beta/\left(1-\beta\right)}
\]
from which, similarly to (\ref{eq:espressionedip-prima}),
\[
p(i) = \left[  P^{1-\sigma}  A(i)^{1/\left(1-\beta\right)} \mu(i) \gamma(i)^{-\sigma} \right]^{-\left(1-\beta\right)/\left(\sigma\left(1-\beta\right)+\beta\right)}.
\]
Thus, we have:
\begin{multline}
P^{1-\sigma} = \left( \int_0^n \gamma(i)^{\sigma} p(i)^{1-\sigma} \, di \right)\\ 
= P^{-\frac{(1-\sigma)^2(1-\beta)}{\sigma(1-\beta) + \beta}} \left( \int_0^n \gamma(i)^{\sigma} \left[ A(i)^{1/(1-\beta)} \mu(i) \gamma(i)^{-\sigma}\right]^{-(1-\beta)(1-\sigma)/(\sigma(1-\beta) + \beta)} \, di \right).
\end{multline}
so that
\begin{multline*}P = \left( \int_0^n \gamma(i)^{\frac{\sigma}{\sigma(1-\beta) + \beta}} \left[ A(i)^{1/(1-\beta)} \mu(i) \right]^{1-\frac{1}{\sigma(1-\beta) + \beta}} \, di \right)^{\frac{\sigma(1-\beta) + \beta}{1-\sigma}} \\
= \left( \int_0^n  \left[\left(\gamma(i)^{\frac{\sigma}{\sigma-1}} A(i)\right)^{\frac{1}{1-\beta}} \mu(i) \right]^{1-\frac{1}{\sigma(1-\beta) + \beta}} \, di \right)^{\frac{\sigma(1-\beta) + \beta}{1-\sigma}} \\
= \left( \int_0^n  \left(\gamma(i)^{\sigma} A(i)^{\sigma -1} \mu(i)^{(1-\beta)(\sigma-1)}\right)^{\dfrac{1}{\sigma(1-\beta)+ \beta}} \, di \right)^{\frac{\sigma(1-\beta) + \beta}{1-\sigma}},
\end{multline*}
and thus:
\begin{multline}
p(i) =  \left[  \left(P\right)^{1-\sigma}  A(i)^{1/\left(1-\beta\right)} \mu(i) \gamma(i)^{-\sigma} \right]^{-\left(1-\beta\right)/\left(\sigma\left(1-\beta\right)+\beta\right)}\\
= \left[  \left( \int_0^n \gamma(i)^{\frac{\sigma}{\sigma(1-\beta) + \beta}} \left[ A(i)^{1/(1-\beta)} \mu(i) \right]^{1-\frac{1}{\sigma(1-\beta) + \beta}} \, di \right)^{\sigma(1-\beta) + \beta}  A(i)^{1/\left(1-\beta\right)} \mu(i) \gamma(i)^{-\sigma} \right]^{-\left(1-\beta\right)/\left(\sigma\left(1-\beta\right)+\beta\right)} \\
= \gamma(i)^{\frac{\sigma}{\sigma-1}} \left[  \left( \int_0^n  \left[\left(\gamma(j)^{\frac{\sigma}{\sigma-1}} A(j)\right)^{\frac{1}{1-\beta}} \mu(j) \right]^{1-\frac{1}{\sigma(1-\beta) + \beta}} \, dj \right)^{\sigma(1-\beta) + \beta}  \left(\gamma(i)^{\frac{\sigma}{\sigma-1}} A(i)\right)^{\frac{1}{1-\beta}} \mu(i)  \right]^{-\frac{1-\beta}{\sigma\left(1-\beta\right)+\beta}}.
\end{multline}
Therefore, we can compute $\pi(i)$, 
\begin{multline}
\label{eq:pi-app-nonsymmetric}
\pi (i) = \left(1-\beta \right)  A(i)^{1/\left(1-\beta\right)} p(i)^{1/\left(1-\beta\right)}  \\
= \frac{\left(1-\beta \right) A(i)^{1/\left(1-\beta\right)}  \left[  \gamma(i)^{-\sigma} A(i)^{1/\left(1-\beta\right)} \mu(i)\right]^{-1/\left(\sigma\left(1-\beta\right)+\beta\right)}}{\displaystyle \int_0^n \gamma(j)^{\frac{\sigma}{\sigma\left(1-\beta\right)+\beta}} \left[   A(j)^{1/\left(1-\beta\right)} \mu(j)\right]^{-\left(1-\beta\right)(1-\sigma)/\left(\sigma\left(1-\beta\right)+\beta\right)} dj } \\ 
= \frac{\left(1-\beta \right)\gamma(i)^{\frac{\sigma}{\sigma\left(1-\beta\right)+\beta}} A(i)^{\frac{1}{1-\beta}\frac{\sigma\left(1-\beta\right)+\beta-1}{\sigma\left(1-\beta\right)+\beta}}  \mu(i)^{-\frac{1}{\sigma\left(1-\beta\right)+\beta}}}{\displaystyle \int_0^n \gamma(j)^{\frac{\sigma}{\sigma\left(1-\beta\right)+\beta}}  A(j)^{\frac{1}{1-\beta}\frac{\sigma\left(1-\beta\right)+\beta-1}{\sigma\left(1-\beta\right)+\beta}}\mu(j)^{\frac{\sigma\left(1-\beta\right)+\beta-1}{\sigma\left(1-\beta\right)+\beta}} dj } \\
= \left(1-\beta \right)\frac{\left(\gamma(i)^{\frac{\sigma}{\sigma-1}} A(i)\right)^{\frac{1}{1-\beta}\left(1-\frac{1}{\sigma(1-\beta)+\beta}\right)}\mu(i) 
^{-\frac{1}{\sigma(1 - \beta) + \beta}}}{\displaystyle \int_0^n  \left[\left(\gamma(j)^{\frac{\sigma}{\sigma-1}} A(j)\right)^{\frac{1}{1-\beta}} \mu(j) \right]^{1-\frac{1}{\sigma(1-\beta) + \beta}} \, dj}.
\end{multline}
The other quantities are computed as follows:
\begin{multline}
\label{eq:xi-app-nonsymmetric}
Q(i) = x(i) = \mu(i) \left ( A(i) p(i)^\beta \right )^{\frac{1}{1-\beta}} \\
= \frac{A(i)^{\frac{\sigma}{\sigma(1-\beta) + \beta}} \gamma(i)^{\frac{\beta\sigma}{\sigma(1-\beta) + \beta}} \mu(i)^{1-\frac{\beta}{\sigma(1-\beta) + \beta}}}{\displaystyle \left( \int_0^n \gamma(j)^{\frac{\sigma}{\sigma(1-\beta) + \beta}} \left[ A(j)^{1/(1-\beta)} \mu(j) \right]^{1-\frac{1}{\sigma(1-\beta) + \beta}} \, dj \right)^{\beta}} \\
= \gamma(i)^{-\frac{\sigma}{\sigma-1}} \frac{\left[ \left(\gamma(i)^{\frac{\sigma}{\sigma-1}} A(i)\right)^{\frac{1}{1-\beta}} \mu(i) \right]^{\frac{\sigma(1-\beta)}{\sigma(1 - \beta) + \beta}}}{\displaystyle \left( \int_0^n  \left[\left(\gamma(j)^{\frac{\sigma}{\sigma-1}} A(j)\right)^{\frac{1}{1-\beta}} \mu(j) \right]^{1-\frac{1}{\sigma(1-\beta) + \beta}} \, dj \right)^{\beta}},
\end{multline}
\begin{multline}
\label{eq:Yi-app-nonsymmetric}
Y(i) = p(i) Q(i) \\ 
= A(i)^{\frac{1}{1-\beta}} p(i)^{\frac{1}{1-\beta}} \mu(i) = \frac{A(i)^{\frac{\sigma - 1}{\sigma(1-\beta) + \beta}} \gamma(i)^{\frac{\sigma}{\sigma(1-\beta) + \beta}} \mu(i)^{\frac{\sigma(1-\beta) + \beta - 1}{\sigma(1-\beta) + \beta}}}{\displaystyle \left( \int_0^n \gamma(j)^{\frac{\sigma}{\sigma(1-\beta) + \beta}} \left[ A(j)^{1/(1-\beta)} \mu(j) \right]^{1-\frac{1}{\sigma(1-\beta) + \beta}} \, dj \right)} \\
\frac{\left[\gamma(i)^{\frac{\sigma}{\sigma-1}} A(i)  \mu(i)^{(1 - \beta)} \right]^{\frac{\sigma-1}{\sigma(1 - \beta) + \beta}} }{\displaystyle \int_0^n  \left[\left(\gamma(j)^{\frac{\sigma}{\sigma-1}} A(j)\right)^{\frac{1}{1-\beta}} \mu(j) \right]^{1-\frac{1}{\sigma(1-\beta) + \beta}} \, dj } ,
\end{multline}

\begin{multline}
\label{eq:Li-app-nonsymmetric}
L(i) =\mu(i) l(i) = \frac{A(i)^{\frac{\sigma-1}{\sigma(1-\beta) + \beta}}\gamma(i)^{\frac{\sigma}{\sigma(1-\beta) + \beta}}\mu(i)^{1-\frac{1}{\sigma(1-\beta) + \beta}}}
{\displaystyle \left( \int_0^n \gamma(j)^{\frac{\sigma}{\sigma(1-\beta) + \beta}} \left[ A(j)^{1/(1-\beta)} \mu(j) \right]^{1-\frac{1}{\sigma(1-\beta) + \beta}} \, dj \right)}\\
= \frac{\left[ \gamma(i)^{\frac{\sigma}{\sigma-1}} A(i)  \mu(i)^{\frac{1}{1-\sigma}} \right]^{\frac{\sigma-1}{\sigma(1-\beta)+\beta}}}
{\displaystyle \int_0^n  \left[\left(\gamma(j)^{\frac{\sigma}{\sigma-1}} A(j)\right)^{\frac{1}{1-\beta}} \mu(j) \right]^{1-\frac{1}{\sigma(1-\beta) + \beta}} \, dj }.
\end{multline}
\end{proof}

In this case, the arguments from Section \ref{sub:dynamics}, from (\ref{eq:pi-app-nonsymmetric}) we obtain that the distribution equation for $\mu$ follows the equation:
\begin{equation}
\label{eq:state-nonsymmetric}
\left \{
\begin{array}{l}
\displaystyle \partial_t \mu(t,i) = -(1-\beta) \frac{\partial_i \left( \mu(t,i) \partial_i \left(\left(\gamma(i)^{\frac{\sigma}{\sigma-1}} A(i)\right)^{\frac{1}{1-\beta}\left(1-\frac{1}{\sigma(1-\beta)+\beta}\right)}\mu(i) 
^{-\frac{1}{\sigma(1 - \beta) + \beta}}\right) \right)}{\displaystyle \int_0^n  \left[\left(\gamma(j)^{\frac{\sigma}{\sigma-1}} A(j)\right)^{\frac{1}{1-\beta}} \mu(j) \right]^{1-\frac{1}{\sigma(1-\beta) + \beta}} \, dj}\\[7pt]
\displaystyle \mu(t,0) = \mu(t,n)\\[7pt]
\displaystyle \mu(0,i) = \mu_0(i).
\end{array}
\right .
\end{equation}

This is the same equation (\ref{eq:stato-di-nuovo}) where $\tilde A(i) := \gamma(i)^{\frac{\sigma(1-\beta)}{\sigma(1 - \beta) + \beta-1}} A(i)$ appears in place of $A(i)$. Therefore, Theorems \ref{th:riscritturaGradientFlow} and \ref{teo:convergence} also apply to this case. The limiting distribution in this case will thus be:
\[
\mu^{EQ}(i) = \frac{\tilde A(i)^{\frac{\sigma(1 - \beta) + \beta-1}{1-\beta}}}{\displaystyle \int_0^n \tilde A(j)^{\frac{\sigma(1 - \beta) + \beta-1}{1-\beta}} \, dj} = \frac{\gamma(i)^{\sigma} A(i)^{\sigma-1}}{\displaystyle \int_0^n \gamma(j)^{\sigma} A(j)^{\sigma-1} \, dj}.
\]
By substituting this expression into (\ref{eq:Yi-app-nonsymmetric}) we get the $Y(i)^{EQ}$ described in the main text.

\section{The economy with immobile workers
}
\label{app:proofsimmobilework}

\begin{proof}[Proof of Proposition \ref{pr:shor-run-equilibrium-immobilework}]
We denoted by $L$ the distribution of labour across different sectors so that in particular, $L(i)$ denotes the mass of workers in sector $i$.
The demand of labour of \textit{representative firm} of sector $i$ is again: 
\begin{equation}
l(i) = \left[\dfrac{\beta p(i) A(i)}{w(i)}\right]^{1/\left(1-\beta\right)} = \left[\dfrac{\beta p(i) A_0(i) \mu(i)^\eta}{w(i)}\right]^{1/\left(1-\beta\right)}
\label{eq:firmLabourDemand-nomobile}
\end{equation}
the equilibrium wage $w(i)$ is therefore decided by the equality between total demand and total supply of labour, i.e.:
\begin{equation}
l(i) \mu(i) =  \left[\dfrac{\beta p(i) A_0(i) \mu(i)^\eta}{w(i)}\right]^{1/\left(1-\beta\right)} \mu(i) = L(i);
\end{equation}
from which:
\begin{equation}
w(i) =  \beta p(i) A_0(i) \mu(i)^\eta \left(\dfrac{\mu(i)}{L(i)}\right)^{1-\beta}	
\end{equation}
The equilibrium price $p(i)$ is decided by the equality between total demand (see (\ref{eq:demandGoodi})) and total supply, i.e.:
\begin{equation}
x(i) = \dfrac{y}{p(i)^\sigma} \dfrac{1}{P^{1-\sigma}} = A_0(i) \mu(i)^\eta \left(\dfrac{L(i)}{\mu(i)}\right)^\beta \mu(i), 
\end{equation}
i.e.
\begin{equation}
p(i) = \left(\dfrac{y}{P^{1-\sigma}}\right)^{1/\sigma} \left(A_0(i)L(i)^\beta\mu(i)^{1+\eta-\beta}\right)^{-1/\sigma}
\end{equation}
Using (\ref{eq:priceIndex}) and the normalization $Y=y=1$, from the previous equation, it is also possible to calculate the level of price index in equilibrium:
\[
P = \left ( \int_0^n p(i)^{1-\sigma} di \right )^{1/(1-\sigma)} = \left ( \int_0^n \left(\dfrac{1}{P^{1-\sigma}}\right)^{(1-\sigma)/\sigma} \left(A_0(i)L(i)^\beta\mu(i)^{1+\eta -\beta}\right)^{(\sigma - 1)/\sigma} di \right )^{1/(1-\sigma)}
\]
and then:
\begin{equation}
P  = \left(\int_0^n \left( A_0(i)L(i)^\beta\mu(i)^{1+\eta -\beta}\right)^{1-1/\sigma} di \right)^{\sigma/\left(1-\sigma\right)}
\end{equation}
and
\begin{equation}
p(i) =  \frac{\left( A_0(i)L(i)^\beta\mu(i)^{1+\eta -\beta}\right)^{-1/\sigma}}{\int_0^n \left(A_0(i)L(i)^\beta\mu(i)^{1+\eta -\beta}\right)^{1-1/\sigma} di}
\end{equation}

Finally, we can calculate the equilibrium profits per firm:
\begin{equation}
\label{eq:formadipiappeC}
\pi(i) = \left(1-\beta\right)p(i)A_0(i) \mu(i)^\eta\left(\dfrac{L(i)}{\mu(i)}\right)^\beta =   \dfrac{\left(1-\beta\right)A_0(i)^{1-1/\sigma} L(i)^{\beta\left(1-1/\sigma\right)} \mu(i)^{(\eta-\beta)(1-{1/\sigma}) - 1/\sigma}} {\left(\int_0^n \left(A_0(i)L(i)^\beta\mu(i)^{1+\eta -\beta}\right)^{1-1/\sigma} di \right)}
\end{equation}
and the equilibrium wage:
\begin{eqnarray}
w(i) &=& \dfrac{\beta A_0(i)^{1-1/\sigma} L(i)^{\beta\left(1-1/\sigma\right)-1} \mu(i)^{(1 + \eta -\beta)(1-1/\sigma)} }{\left(\int_0^n \left(A_0(i)L(i)^\beta\mu(i)^{1+\eta-\beta}\right)^{1-1/\sigma} di \right)}
\end{eqnarray}
At sectoral level, we have that the value of production is:
\begin{equation}
\label{eq:formadiYappeC}
Y(i) = \dfrac{A_0(i)^{1-1/\sigma} L(i)^{\beta\left(1-1/\sigma\right)} \mu(i)^{(1+\eta -\beta)(1-1/\sigma)}} {\left(\int_0^n \left(A_0(i)L(i)^\beta\mu(i)^{1+\eta -\beta}\right)^{1-1/\sigma} di \right)}.
\end{equation}
The total production in each sector in real terms is
\[
Q(i) = Y(i)/p(i) = A_0(i)L(i)^\beta\mu(i)^{1+\eta -\beta}
\]
\end{proof}

\begin{Remark}
In particular, using (\ref{eq:formadipiappeC}) and (\ref{eq:formadiYappeC}) the following relation holds between $Y(i)$ and $\pi(i)$:
\begin{equation}\label{eq:relYpi}
Y(i) =  \frac{\pi(i)^{\phi}\,\bigl[A_0(i)L(i)^{\beta}\bigr]^{\,(1-\tfrac{1}{\sigma})(1-\phi)}}{\displaystyle\int_{0}^{n}\pi(j)^{\phi}\,\bigl[A_0(j)L(j)^{\beta}\bigr]^{\,(1-\tfrac{1}{\sigma})(1-\phi)}\, dj}.    
\end{equation}
where $\phi \equiv \frac{(1+\eta-\beta)(\sigma-1)}{(\eta-\beta)(\sigma-1)-1}$.
\end{Remark}

\begin{theorem}
\label{th:riscritturaGradientFlowImmobileSpillover}
Given any strictly positive initial condition $\mu_0(i) \in C^2(S)$, let $A_0,L\colon S \to \mathbb{R}$ be strictly positive and in $C^2(S)$ and assume $\frac{1}{\sigma}(1+\eta-\beta) > (\eta - \beta)$. Then Equation \eqref{eq:stato-di-nuovoImmobileSpillover} has a unique strictly positive solution $\mu(t, \cdot)\in C([0, +\infty), C^2(S))$.
The unique solution of Equation \eqref{eq:stato-di-nuovoImmobileSpillover} can be characterized as the unique gradient flow in $\mathcal{W}_2(S)$ of the functional
\[
\mathcal{F}(\mu) =  \left(\frac{\sigma}{\sigma-1}\right)\frac{1-\beta}{1+\eta-\beta}\log\left(\int_0^n \left[ A_0(i)L(i)^\beta\mu(i)^{1+\eta -\beta}\right]^{\frac{\sigma-1}{\sigma}} di\right) 
= \frac{(1-\beta)}{(1+\eta-\beta)}\log X (\mu(t)).
\]
so that 
\begin{equation}
\label{eq:gradientflowImmobileSpillover}
\frac{d\mu}{dt} =\nabla_{W_{2}} \mathcal{F}(\mu(t)) = \frac{(1-\beta)}{(1+\eta-\beta)} \frac{\nabla_{W_{2}} X (\mu(t))}{X (\mu(t))}, \qquad \mu(0) = \mu_0. 
\end{equation}
Moreover, along the equilibrium dynamics, $\mu(t)$ evolves in the direction of $\mathcal{W}_2(S)$ 
that makes $\mathcal{F}(\mu)$ increase the fastest. 
In the long-run the distribution of $\mu$ converges to the unique steady state $\mu^{EQ}$ given by
\begin{equation}
\mu^{EQ}(i) = Y^{EQ}(i) = \frac{\left[A_0(i)L(i)^\beta\right]^{\frac{\sigma - 1}{(\beta - \eta)\left(\sigma - 1\right) + 1}
}}{\displaystyle \int_0^n \left[A_0(j)L(j)^\beta\right]^{\frac{\sigma - 1}{(\beta - \eta)\left(\sigma - 1\right) + 1}
} \, dj}.
\end{equation}
and
\begin{equation}
\left(\frac{Y(i)}{L(i)}\right)^{EQ} = \frac{ A_0(i)^{\frac{\sigma-1}{(\beta - \eta)(\sigma-1) + 1}} \cdot L(i)^{\frac{\eta(\sigma-1) - 1}{(\beta - \eta)(\sigma-1) + 1}} }{\displaystyle \int_0^n \left[A_0(j)L(j)^\beta\right]^{\frac{\sigma-1}{(\beta - \eta)(\sigma-1) + 1}} dj}.
\end{equation}
\end{theorem}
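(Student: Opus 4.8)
The plan is to observe that PDE \eqref{eq:stato-di-nuovoImmobileSpillover} is structurally identical to the baseline PDE \eqref{eq:stato-di-nuovo}, so that Theorems \ref{th:esistenzaeunicita}, \ref{th:riscritturaGradientFlow} and \ref{teo:convergence} apply after a relabelling. Set
\[
\Psi(i):=\left[A_0(i)L(i)^{\beta}\right]^{\frac{\sigma-1}{\sigma}},\qquad m:=\frac{(\beta-\eta)(\sigma-1)+1}{\sigma}.
\]
Since $A_0,L\in C^2(S)$ are strictly positive, so is $\Psi$ (for either sign of $\sigma-1$), and the hypothesis $\tfrac{1}{\sigma}(1+\eta-\beta)>(\eta-\beta)$ is equivalent to $(\sigma-1)(\eta-\beta)<1$, i.e.\ to $m>0$, which is in turn Condition \eqref{eq:conditionGradientFlowImmobilelabour}. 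The one algebraic identity to record is $\tfrac{(1+\eta-\beta)(\sigma-1)}{\sigma}=1-m$, after which \eqref{eq:stato-di-nuovoImmobileSpillover} reads
\[
\partial_t\mu=-(1-\beta)\,\frac{\partial_i\!\left(\mu\,\partial_i\!\left(\Psi(i)\,\mu^{-m}\right)\right)}{\displaystyle\int_0^n\Psi(j)\,\mu(t,j)^{\,1-m}\,dj},
\]
which is precisely \eqref{eq:stato-di-nuovo} with the weight $A(i)^{\frac{\sigma-1}{\sigma(1-\beta)+\beta}}$ replaced by $\Psi(i)$ and the exponent $\tfrac{1}{\sigma(1-\beta)+\beta}$ replaced by $m$.

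Given this, I would first reproduce the proof of Theorem \ref{th:esistenzaeunicita} verbatim: the auxiliary equation $\partial_tV=-(1-\beta)\partial_i(V\partial_i(\Psi V^{-m}))$ is a weighted (ultra-)fast diffusion equation with strictly positive $C^2$ weight $\Psi$ and exponent $m>0$, so Theorems~1.2--1.3 of \citet{iacobelli2019weighted} give a unique weak solution with two-sided bounds $1/M\le V\le M$; parabolic estimates (Krylov--Safonov, Schauder, and Theorem~8.1.3 of \citet{lunardi15analytic}) upgrade this to $V\in C([0,+\infty),C^2(S))\cap C^{1+\alpha}([0,T),C(S))$; and the Lipschitz time-change $\phi'(t)=\bigl(\int_0^n\Psi(j)V(\phi(t),j)^{1-m}dj\bigr)^{-1}$, well posed and Lipschitz on compacts thanks to the uniform bounds, yields the solution $\mu(t,\cdot)=V(\phi(t),\cdot)$ of \eqref{eq:stato-di-nuovoImmobileSpillover}. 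For the gradient-flow statement I would invoke Lemma~2.2 of \citet{iacobelli2019weighted}, which characterizes the $V$-equation (up to the sign convention discussed after Corollary \ref{cor:crescitaOttimal}) as the $\mathcal{W}_2(S)$ gradient flow (JKO limit) of $\mathcal{G}_0(\mu):=\int_0^n\Psi(i)\mu(i)^{1-m}di$; since $\tfrac{\delta\mathcal{G}_0}{\delta\mu}=(1-m)\Psi\mu^{-m}$, composing with $\log$ and with the above time change shows, exactly as in the proof of Theorem \ref{th:riscritturaGradientFlow}, that \eqref{eq:stato-di-nuovoImmobileSpillover} is the gradient flow of $\mathcal{F}(\mu)=\tfrac{1-\beta}{1-m}\log\mathcal{G}_0(\mu)$. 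Substituting $1-m=\tfrac{(1+\eta-\beta)(\sigma-1)}{\sigma}$ and $\mathcal{G}_0(\mu)=\int_0^n[A_0L^\beta\mu^{1+\eta-\beta}]^{\frac{\sigma-1}{\sigma}}di$ produces the displayed $\mathcal{F}$, and the identity $X(\mu)=\mathcal{G}_0(\mu)^{\frac{\sigma}{\sigma-1}}$ (from $X=1/P$ and the expression for $P$ in Proposition \ref{pr:shor-run-equilibrium-immobilework}) gives $\mathcal{F}=\tfrac{1-\beta}{1+\eta-\beta}\log X$. The "fastest increase" property is then the curve-of-maximal-slope principle cited in the proof of Corollary \ref{cor:crescitaOttimal} (Theorem~11.1.3 of \citet{ambrosio2005gradient}).

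For the long-run claims I would follow the proof of Theorem \ref{teo:convergence}: a steady state must satisfy $\Psi(i)\mu(i)^{-m}\equiv\text{const}$, so $\mu^{EQ}(i)\propto\Psi(i)^{1/m}=[A_0(i)L(i)^\beta]^{\frac{\sigma-1}{(\beta-\eta)(\sigma-1)+1}}$, which after normalizing to unit mass is the stated distribution; the identity $Y^{EQ}=\mu^{EQ}$ and the formula for $(Y/L)^{EQ}$ then follow by inserting $\mu^{EQ}$ into \eqref{eq:YimmobileLabour} and simplifying exponents. Global convergence (in fact exponentially fast in $L^2(S)$) comes from Theorem~1.4 of \citet{iacobelli2019weighted} applied to $V$, together with $\phi(t)\ge K_\phi t$ for some $K_\phi>0$, which holds because $\phi'$ is bounded away from $0$ by the uniform bounds on $V$. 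The main obstacle is purely bookkeeping: one must verify that the admissibility range and sign conventions of the weighted fast-diffusion theory of \citet{iacobelli2019weighted} are captured exactly by $m>0$ (equivalently by the hypothesis $\tfrac1\sigma(1+\eta-\beta)>(\eta-\beta)$), and carry the exponent algebra correctly through the three identifications of $\mathcal{F}$, $\mu^{EQ}$ and $(Y/L)^{EQ}$; there is no genuinely new analytic difficulty relative to the baseline model.
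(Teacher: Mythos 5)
Your proposal is correct and follows essentially the same route as the paper, whose own proof of this theorem simply states that it reuses the arguments of Theorems \ref{th:esistenzaeunicita}, \ref{th:riscritturaGradientFlow} and Corollary \ref{cor:crescitaOttimal}; your relabelling $\Psi(i)=[A_0(i)L(i)^\beta]^{(\sigma-1)/\sigma}$, $m=\frac{(\beta-\eta)(\sigma-1)+1}{\sigma}$ is exactly the identification needed to make that reuse rigorous, and the exponent algebra (the equivalence of $m>0$ with the stated hypothesis, the coefficient $\tfrac{1-\beta}{1-m}$, the steady state $\mu^{EQ}\propto\Psi^{1/m}$, and the formula for $(Y/L)^{EQ}$) all check out. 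If anything, you supply more detail than the paper does.
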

\begin{proof}
It uses the same arguments of the proofs of Theorems \ref{th:esistenzaeunicita}, \ref{th:riscritturaGradientFlow} and Corollary \ref{cor:crescitaOttimal}.
\end{proof}



\begin{proof}[Proof of Proposition \ref{prop:functionalLossEfficiency}]
Let us define the exponents for notational simplicity:
\begin{equation*}
    \gA  := \frac{\sigma - 1}{1 - \eta (\sigma - 1)}, \quad \gB  := \frac{\sigma - 1}{1 + (\beta - \eta)(\sigma - 1)}.
\end{equation*}
The proposition that $\Delta X^{EQ} \geq 0$ is equivalent to showing that the first term in Equation \eqref{eq:functionalLossEfficiency} is greater than or equal to the second. Notice that, due to the conditions for the well-posedness of the economy with perfectly mobile and immobile labour, Equations \eqref{eq:condizione-spillovers} and \eqref{eq:conditionGradientFlowImmobilelabour} respectively, the coefficients $\gA$ and $\gB$ have the same signs. The claim can be 
restated as:
\begin{equation}
\label{eq:inequality_to_prove}
    \left( \int_0^n A_0(i)^{\gB} L(i)^{\beta \gB} \di \right)^{1/\gB} \leq \left( \int_0^n A_0(i)^{\gA} \di \right)^{1/\gA}.
\end{equation}
Assuming first that $\gA$ and $\gB$ are positive, raising both sides to the power of $\gB$, our goal is to prove:
\begin{equation}
\label{eq:efficiencyProofStepSegno}
\int_0^n A_0(i)^{\gB} L(i)^{\beta \gB} \di \leq \left( \int_0^n A_0(i)^{\gA} \di \right)^{\gB/\gA}.
\end{equation}

We will prove this using Hölder's Inequality. Let us define the conjugate exponents $p$ and $q$:
\begin{equation*}
    p = \frac{\gA}{\gB} \quad \text{and} \quad q = \frac{1}{\beta \gB}.
\end{equation*}
These exponents satisfy the condition $\frac{1}{p} + \frac{1}{q} = 1$, as verified by the identity $\gB(1+\beta\gA) = \gA$.

Applying Hölder's Inequality to the integral on the left-hand side, we have:
\begin{align*}
    \int_0^n A_0(i)^{\gB} L(i)^{\beta \gB} \di &\le \left( \int_0^n \left(A_0(i)^{\gB}\right)^p \di \right)^{1/p} \left( \int_0^n \left(L(i)^{\beta \gB}\right)^q \di \right)^{1/q} \\
    &= \left( \int_0^n A_0(i)^{\gA} \di \right)^{\gB/\gA} \left( \int_0^n L(i) \di \right)^{\beta \gB} \\
    &= \left( \int_0^n A_0(i)^{\gA} \di \right)^{\gB/\gA},
\end{align*}
where the last step uses the constraint $\int_0^n L(i) \di = 1$. This confirms the inequality and proves that $\Delta X^{EQ} \geq 0$.

Furthermore, equality in Hölder's inequality holds if and only if one function is a constant multiple of the other raised to a specific power. In our case, this condition is $(A_0(i)^{\gB})^p = c \cdot (L(i)^{\beta\gB})^q$ for some constant $c$. Substituting the expressions for $p$ and $q$:
\begin{align*}
    A_0(i)^{\gA} &= c \cdot L(i)^{\beta\gB q} \\
    A_0(i)^{\gA} &= c \cdot L(i).
\end{align*}
This implies that the minimum occurs for the function $L^*(i)$ that satisfies this condition. To find the constant $c$, we impose the constraint $\int_0^n L(i) \di = 1$:
\begin{equation*}
    L^*(i) = \frac{1}{c} A_0(i)^{\gA} \implies c = \int_0^n A_0(k)^{\gA} \dk.
\end{equation*}
The optimal function is therefore:
\begin{equation*}
    L^*(i) = \frac{A_0(i)^{\gA}}{\int_0^n A_0(k)^{\gA} \dk}.
\end{equation*}
When this optimal function $L^*(i)$ is used, the inequality becomes an equality, which in turn means that the two terms in Equation \eqref{eq:functionalLossEfficiency} are equal. Thus, the minimum value of the functional is 0.

To prove the statement in the case where $\gA$ and $\gB$ are negative, we first reverse in Equation \eqref{eq:efficiencyProofStepSegno} the direction of the inequality and apply the reverse Hölder inequality.
\end{proof}

\section{The meaning of the gradient in Wasserstein space \texorpdfstring{$\nabla_{W_2}$}{nabla W2}
}
\label{app:explanationGradientWassGradient}
In this appendix, we provide some heuristic remarks and intuition on $\nabla_{W_2}$. For rigorous mathematical details and proofs, see Sections 8.4 and 8.5 of \cite{ambrosio2005gradient} for the tangent structure and Section 10.4.1 of \cite{ambrosio2005gradient} and Section 4.5 of \cite{santambrogio2015optimal} for the explicit structure of the gradient vector field in our case.

The space $\mathcal{W}_2$ consists of probability measures $\mu$, which, to aid intuition, we may think of as having an associated density, also denoted $\mu$ throughout the paper. Just as the differential of a function from $\mathbb{R}^n$ to $ \mathbb{R}$ is a linear functional representing the infinitesimal change in any direction in $\mathbb{R}^n$, computing the Wasserstein gradient $\nabla_{W_2} \mathcal{F}  \mu$ requires defining infinitesimal variations of $\mu$ that ensure to remain in $W_{2}$.

The appropriate way to represent such ``directions'' is via vector fields $v$ on $S_1$, assigning to each point in $S$ a tangent vector, i.e., an initial velocity for firms located at that point. Formally, the tangent space at $\mu$ is given by:
\[
T_\mu \mathcal{W}_2(S) = \overline{\{ \nabla \varphi \mid \varphi \in C^\infty(S) \}}^{L^2(S, \mu; \mathbb{R})}.
\]
The Wasserstein gradient $\nabla_{W_2} \mathcal{F}(\mu)$ can be identified with a vector field $v_F \in L^2(S, \mu; \mathbb{R})$ such that, for any $v \in T_\mu \mathcal{W}_2(S)$, the gradient acts as:
\begin{equation}
\label{eq:innerproduct_wasserstein}
\nabla_{W_2} \mathcal{F}(\mu)[v] = \int_S v_F(i)\, v(i)\, d\mu(i).
\end{equation}

Alternatively, the differential in the direction $v \in T_\mu \mathcal{W}_2(S)$ can be computed via the limit of difference quotients:
\[
\nabla_{W_2} \mathcal{F}(\mu)[v] = \lim_{\epsilon \to 0} \frac{\mathcal{F}\left( (\mathrm{id} + \epsilon v)_\# \mu \right) - \mathcal{F}(\mu)}{\epsilon} = \left. \frac{d}{d\epsilon} \mathcal{F}\left( (\mathrm{id} + \epsilon v)_\# \mu \right) \right|_{\epsilon = 0},
\]
where $ (\mathrm{id} + \epsilon v)_\# \mu$ denotes the pushforward measure of $\mu$ through the map $x \mapsto x + \epsilon v(x)$. Since the total mass is preserved, transporting $\mu$ with velocity $v$ induces the following continuity equation:
\[
\partial_t \mu(t,i) = -\partial_i (v(i) \mu(t,i)).
\]
Thus, we also have:
\begin{multline}
\nabla_{W_2} \mathcal{F}(\mu)[v] = \lim_{\epsilon \to 0} \frac{\mathcal{F}\left( (\mathrm{id} + \epsilon v)_\# \mu \right) - \mathcal{F}(\mu)}{\epsilon} \\
= \lim_{\epsilon \to 0} \frac{\mathcal{F}(\mu - \epsilon \partial_i (\mu v)) - \mathcal{F}(\mu)}{\epsilon} = \int_S \frac{\delta \mathcal{F}}{\delta \mu}(i) \left(-\partial_i (\mu(i) v(i))\right) \, di.
\end{multline}

Using integration by parts:
\begin{align*}
\int_S \frac{\delta \mathcal{F}}{\delta \mu}(i)\left(-\partial_i(\mu(i)v(i))\right)\, di 
&= \int_S \partial_i \left( \frac{\delta \mathcal{F}}{\delta \mu}(i) \right)\, \mu(i) v(i) \, di.
\end{align*}
Hence:
\[
\nabla_{W_2} \mathcal{F}(\mu)[v] = \int_S \mu(i)\, \partial_i \left( \frac{\delta \mathcal{F}}{\delta \mu}(i) \right) v(i) \, di.
\]
Comparing this with Equation \eqref{eq:innerproduct_wasserstein}, we conclude that the gradient vector field is given by:
\[
v_F = \partial_i \left( \frac{\delta \mathcal{F}}{\delta \mu}(i) \right),
\]
which is the direction along which the measure evolves under the gradient flow. Therefore, the evolution of $\mu$ satisfies the Wasserstein gradient flow equation:
\[
\frac{\partial \mu}{\partial t} = -\partial_i (\mu\, v_F) = -\partial_i \left( \mu\, \partial_i \left( \frac{\delta \mathcal{F}}{\delta \mu}(i) \right) \right),
\]
that is,
\[
\nabla_{W_2} \mathcal{F}(\mu) = \partial_i \left( \mu\, \partial_i \left( \frac{\delta \mathcal{F}}{\delta \mu}(i) \right) \right)
\]
as claimed in Footnote \ref{foot:fisrtvariation}.

\newpage

\section{Diagnostics of the estimates of Section \ref{sec:backData} 
}
\label{app:tableEstimation}

\begin{table}[!htbp] \centering 
  \caption{Estimate of the convergence econometric model for ROE based on 680 four-digit NACE sectors (01–82) over the period 2018–2023 (s.e. in brackets).} 
  \label{tab:convergenceInSectoralROE} 
\begin{tabular}{@{\extracolsep{5pt}}lc} 
\\[-1.8ex]\hline 
\hline \\[-1.8ex] 
 & \multicolumn{1}{c}{\textit{Dependent variable:}} \\ 
\cline{2-2} 
\\[-1.8ex] & $\Delta ROE(2023)$ \\ 
\hline \\[-1.8ex] 
 $ROE(2018)$ & $-$0.775$^{***}$ \\ 
  & (0.023) \\ 
 Constant & $0.105^{***}$ \\ 
  & (0.008) \\ 
\hline \\[-1.8ex] 
Observations & 680 \\ 
R$^{2}$ & 0.618 \\ 
Adjusted R$^{2}$ & 0.617 \\ 
Residual Std. Error & 0.185 (df = 678) \\ 
F Statistic & 1,096.980$^{***}$ (df = 1; 678) \\ 
\hline 
\hline 
\textit{Note:}  & \multicolumn{1}{r}{$^{*}$p$<$0.1; $^{**}$p$<$0.05; $^{***}$p$<$0.01} \\ 
\end{tabular} 
\begin{flushleft}
   {\small \textit{Source: our elaboration on ORBIS data.}} 
\end{flushleft}
\end{table} 

\begin{table}[!htbp] \centering 
  \caption{Estimate of the convergence econometric model for labour productivity across 680 four-digit NACE sectors (01–82) in 2018 and 2023 (s.e. in brackets).} 
  \label{tab:convergenceLabourProductivity} 
\begin{tabular}{@{\extracolsep{5pt}}lc} 
\\[-1.8ex]\hline 
\hline \\[-1.8ex] 
 & \multicolumn{1}{c}{\textit{Dependent variable:}} \\ 
\cline{2-2} 
\\[-1.8ex] & $\Delta \log(\text{Labour Productivity}(i,2023))$ \\ 
\hline \\[-1.8ex] 
  $\log(\text{Labour Productivity}(i,2018))$ & 0.003$^{*}$ \\ 
  & (0.001) \\ 
  & \\ 
 Constant & 0.027$^{***}$ \\ 
  & (0.003) \\ 
  & \\ 
\hline \\[-1.8ex] 
Observations & 679 \\ 
R$^{2}$ & 0.006 \\ 
Adjusted R$^{2}$ & 0.004 \\ 
Residual Std. Error & 0.079 (df = 677) \\ 
F Statistic & 3.819$^{*}$ (df = 1; 677) \\ 
\hline 
\hline \\[-1.8ex] 
\textit{Note:}  & \multicolumn{1}{r}{$^{*}$p$<$0.1; $^{**}$p$<$0.05; $^{***}$p$<$0.01} \\ 
\end{tabular} 
\end{table} 

\begin{table}[!htbp] \centering 
  \caption{Estimate of the growth rate of the value of production for 680 four-digit NACE sectors (01–82) in 2018 and 2023 (s.e. in brackets).}
\label{tab:estimateValueProductionVsEmploymentProfitRate} 
\begin{tabular}{@{\extracolsep{5pt}}lc} 
\\[-1.8ex]\hline 
\hline \\[-1.8ex] 
 & \multicolumn{1}{c}{\textit{Dependent variable:}} \\ 
\cline{2-2} 
\\[-1.8ex] & $\log \left( Y(i,2023)/Y(i,2018)\right)$ \\ 
\hline \\[-1.8ex] 
 $\log \left( \dfrac{L(i,t)}{L(i,0)}\right)$ & 0.214$^{***}$ \\ 
  & (0.027) \\ 
  & \\ 
 $\log \left( \dfrac{ROE(i,t)}{ROE(i,0)}\right)$  & 0.040$^{***}$ \\ 
  & (0.012) \\ 
  & \\ 
 Constant & 0.243$^{***}$ \\ 
  & (0.011) \\ 
\hline \\[-1.8ex] 
Observations & 612 \\ 
R$^{2}$ & 0.111 \\ 
Adjusted R$^{2}$ & 0.108 \\ 
Residual Std. Error & 0.241 (df = 609) \\ 
F Statistic & 38.000$^{***}$ (df = 2; 609) \\ 
\hline 
\hline 
\textit{Note:}  & \multicolumn{1}{r}{$^{*}$p$<$0.1; $^{**}$p$<$0.05; $^{***}$p$<$0.01} \\ 
\end{tabular}
\begin{flushleft}
   {\small \textit{Source: our elaboration on ORBIS data.}} 
\end{flushleft}
\end{table} 

\section{The identification of firm density}
\label{app:identificationTechnologicalProgress}

As we discussed, unfortunately, we have not a good proxy for the firm density in the sector $\mu(i)$. This also poses a limit to the calculation of good-specific technological parameter $\tilde{A}(i)$. The same consideration can be made for the identification of $\gamma(i)$ in the case of non-symmetric preference. Below, we discuss a possible procedure for the identification of $\mu(i)$ and $\tilde{A}(i)$ based on some assumptions on the share of distribution of $\tilde{A}(i)$ in the economy with labour immobility. The case of non-symmetric preferences is leave to future research.

From Equation \eqref{eq:profitsCapitalStockLabourImmobility}, we can observe that:
\begin{equation}
\label{eq:eq:profitsCapitalStockLabourImmobility_II}
        \left[ \dfrac{\pi(i)}{1-\beta}\right]^{\dfrac{\sigma}{\left(\eta - \beta\right)\left( \sigma -1 \right) - 1}}   L(i)^{\dfrac{\beta \left(1-\sigma\right)}{\left(\eta - \beta\right)\left( \sigma -1 \right) - 1}}   
    = \mu(i) \left[ \tilde{A}(i)^{\dfrac{\sigma -1 }{\left(\eta-\sigma \right)\left(\sigma-1\right) -1}} P^{\dfrac{\sigma -1 }{\left(\eta-\sigma \right)\left(\sigma-1\right) -1}}\right],
\end{equation}
where the left hand side is observable (given the value of parameters calculated above), while the right hand side is not observable since:
\begin{equation}
P  = \left(\int_0^n \left[ \tilde A(i)L(i)^\beta\mu(i)^{1+\eta -\beta}\right]^{-\frac{1-\sigma}{\sigma}} di \right)^{\frac{\sigma}{1-\sigma}}.
\end{equation}
The distribution of $\mu$ therefore appears as a \textit{coinvolution} of the distribution of $\pi$, $L$ and $\tilde{A}$, where the latter is unobservable. A feasible way could be to impose that:
\begin{equation}
    \tilde{A}(i) \sim \mathcal{LN}(m_{\tilde{A}}, \sigma_{\tilde{A}}^2)
\end{equation}
from which
\begin{equation}
    \tilde{A}(i)^{{\dfrac{\sigma -1 }{\left(\eta-\sigma \right)\left(\sigma-1\right) -1}}} \sim \mathcal{LN}\left(m_{\tilde{A}} \left[{\dfrac{\sigma -1 }{\left(\eta-\sigma \right)\left(\sigma-1\right) -1}}\right], \sigma_{\tilde{A}}^2\left[{\dfrac{\sigma -1 }{\left(\eta-\sigma \right)\left(\sigma-1\right) -1}}\right]^2\right).
\end{equation}
Therefore, taking the log of both sides of Equation \eqref{eq:eq:profitsCapitalStockLabourImmobility_II}:
\begin{equation}
    z(i) = y(i) + a(i),
    \label{eq:equationfForTweedies}
\end{equation}
where:
\begin{eqnarray}
    z(i) &\equiv& \log \left( \left[ \dfrac{\pi(i)}{1-\beta}\right]^{\dfrac{\sigma}{\left(\eta - \beta\right)\left( \sigma -1 \right) - 1}}   L(i)^{\dfrac{\beta \left(1-\sigma\right)}{\left(\eta - \beta\right)\left( \sigma -1 \right) - 1}} \right); \\
    y(i) &\equiv& \log \left( \mu(i) \right); \text{ and}\\
    a(i) &\equiv& \log \left( \tilde{A}(i)^{\dfrac{\sigma -1 }{\left(\eta-\sigma \right)\left(\sigma-1\right) -1}} \right) + \log \left( P^{\dfrac{\sigma -1 }{\left(\eta-\sigma \right)\left(\sigma-1\right) -1}} \right).
\end{eqnarray}
Summarizing, we observe $z$, we assume that \begin{equation}
    a \sim \mathcal{N}(\bar{a}, \sigma_a^2),
\end{equation}
where $\bar{a}$ and $\sigma_a^2$ are to be \textit{consistently} calculated considering that:
\begin{equation}
    \bar{a} = \left[{\dfrac{\sigma -1 }{\left(\eta-\sigma \right)\left(\sigma-1\right) -1}}\right] \left[ m_{\tilde{A}}  + \log\left( \left(\int_0^n \left[ \tilde A(i)L(i)^\beta\mu(i)^{1+\eta -\beta}\right]^{-\frac{1-\sigma}{\sigma}} di \right)^{\frac{\sigma}{1-\sigma}} \right) \right],
    \end{equation}
    and
    \begin{equation}
    \sigma_a^2 = \sigma_{\tilde{A}}^2\left[{\dfrac{\sigma -1 }{\left(\eta-\sigma \right)\left(\sigma-1\right) -1}}\right]^2,
\end{equation}
and we are interested to the distribution of $y \equiv \log(\mu)$. 

The Tweedies's Formula \citep{efron2011tweedie} gives one possible solution to the estimate of $\mu$ starting from Equation \eqref{eq:equationfForTweedies}, stating that:
\begin{equation}
    \mathbb{E}\left[y(i) \mid z(i)\right] = z(i) - \bar{a} + \sigma_a^2 \frac{d}{dz} \log f(z)|_{z=z(i)},
\end{equation}
where $f(z)$ is the marginal density of $z$.
In the estimate, $\bar{a}$ and $\sigma_a^2$ are two parameters to be estimated under the constraint that $\exp(y)$ is a marginal distribution, i.e. $\int \exp(y(i)) di =\int \mu(i) di=1$. This procedure produces an estimate of $\bar{a}$ and $\sigma_a^2$. From the latter,  it is trivial to calculate $\sigma_{\tilde{A}^2}$, i.e.
\begin{equation}
    \sigma_{\tilde{A}}^2 = \left(\sigma_a^2\right)^{\dfrac{\left(\eta-\sigma \right)\left(\sigma-1\right) -1}{ \sigma -1}}.
\end{equation}
Instead, as $m_A$, we should consider that $\mathbb{E}\left[a(i) \mid z(i)\right] = z(i) -  \mathbb{E}\left[y(i) \mid z(i)\right]$ (indeed, $\mathbb{E}_{z}\left[ \mathbb{E}\left[a(i) \mid z(i)\right] \right]=\bar{a}$), and therefore:
\begin{multline}
    \mathbb{E}\left[\tilde{A}(i)P\mid z(i)\right] = \exp\left( \left[ \dfrac{\left(\eta-\sigma \right)\left(\sigma-1\right) -1}{ \sigma -1} \right] \mathbb{E}\left[a(i) \mid z(i)\right] \right) = \\
    =\exp\left( \left[ \dfrac{\left(\eta-\sigma \right)\left(\sigma-1\right) -1}{ \sigma -1} \right] \left( z(i) -  \mathbb{E}\left[y(i) \mid z(i)\right]\right) \right),
\end{multline}
and, finally,
\begin{multline}
    \int \mathbb{E}\left[\tilde{A}(i)P\mid z(i)\right] di = P \int \mathbb{E}\left[\tilde{A}(i)\mid z(i)\right] di = P = \\
    = \int \exp\left( \left[ \dfrac{\left(\eta-\sigma \right)\left(\sigma-1\right) -1}{ \sigma -1} \right] \left( z(i) -  \mathbb{E}\left[y(i) \mid z(i)\right]\right) \right) di.
\end{multline}
Therefore:
\begin{multline}
    m_{\tilde{A}} = \left[{\dfrac{\left(\eta-\sigma \right)\left(\sigma-1\right) -1}{ \sigma -1}}\right]\bar{a} - \log P = \\  = \left[{\dfrac{\left(\eta-\sigma \right)\left(\sigma-1\right) -1}{ \sigma -1}}\right]\bar{a} - \log \left( \int \exp\left( \left[ \dfrac{\left(\eta-\sigma \right)\left(\sigma-1\right) -1}{ \sigma -1} \right] \left( z(i) -  \mathbb{E}\left[y(i) \mid z(i)\right]\right) \right) di \right),  
\end{multline}
which complete the characterization of the distribution of $\tilde{A}$. Then, using \eqref{eq:equationfForTweedies}, we get the distribution of $\mu$. 

\end{footnotesize}

\bigskip

\bigskip

\bigskip

\end{document}